\documentclass[final]{econsocart}

\usepackage[yyyymmdd,24hr]{datetime}
\usepackage{booktabs}
\usepackage{mathtools}
\usepackage{mathrsfs}
\usepackage{xurl}
\startlocaldefs
\usepackage{threeparttable}
\theoremstyle{plain}  

\newtheorem{theorem}{Theorem}

\newtheorem{proposition}{Proposition}
\newtheorem{hypo}{Hypothesis}

\newtheorem{lemma}{Lemma}
\newtheorem{result}{Result}

\theoremstyle{remark}
\newtheorem{definition}{Definition}
\newtheorem{example}{Example}

\usepackage{tikz}
\usetikzlibrary{shapes}

\usepackage{etoolbox}

\renewcommand{\appendixname}{Supplementary Appendix}

\newcommand{\da}{\mathtt{DA}} 
\newcommand{\ia}{\mathtt{IA}} 
\newcommand{\mia}{\mathtt{MIA}}

\usepackage{enumitem}
\usepackage{amstext} % for \text macro
\usepackage{array}   % for \newcolumntype macro
\newcolumntype{C}{>{$}c<{$}} % math-mode version of "l" column type
\usepackage{multirow}
\usepackage{float}

\endlocaldefs

\usepackage{titlesec}
\titlespacing*{\section}      {0pt}{1.2ex}{0.8ex}
\titlespacing*{\subsection}   {0pt}{1.2ex plus .3ex minus .2ex}{0.6ex}
\titlespacing*{\subsubsection}{0pt}{1.0ex plus .3ex minus .2ex}{0.5ex}

\makeatletter
\def\quote@topsep{2\p@}%      % space above AND below block quotes (was \smallskipamount)
\def\quotation@topsep{2\p@}%
\makeatother

\RequirePackage[colorlinks,citecolor=blue,linkcolor=blue,urlcolor=blue,pagebackref]{hyperref}

\makeatletter
\g@addto@macro\@afterheading{\vspace{-\parskip}}
\makeatother

\usepackage{setspace}
\begin{document}

\begin{frontmatter}

\title{School Choice with Appeals}
\runtitle{School Choice with Appeals}

\begin{aug}
\author[id=au1,addressref={add1}]{\fnms{Claudia}~\snm{Cerrone}}
\author[id=au2,addressref={add2}]{\fnms{Yoan}~\snm{Hermstr\"{u}wer}}
\author[id=au3,addressref={add3}]{\fnms{Josu\'e}~\snm{Ortega}}

\address[id=add1]{%
	%\orgdiv{Department of Economics},
	\orgname{City St George's, University of London}}

\address[id=add2]{%
	%\orgdiv{Faculty of Law},
	\orgname{University of Zurich}}

\address[id=add3]{%
	%\orgdiv{Department of Economics},
	\orgname{Queen's University Belfast}}

\end{aug}

\support{Emails: \href{mailto:claudia.cerrone@city.ac.uk}{claudia.cerrone@city.ac.uk}, \href{mailto:yoan.hermstruewer@ius.uzh.ch}{yoan.hermstruewer@ius.uzh.ch}, \href{mailto:j.ortega@qub.ac.uk}{j.ortega@qub.ac.uk}.\\We thank Sophie Bade, Estelle Cantillon, Yan Chen, John Coldron, Arcangelo Dimico, Arda Gitmez, Rustam Hakimov, Flip Klijn, Sander Onderstal, Chris Taylor, Camille Terrier, Bertan Turhan, M. Utku Ünver, and M. Bumin Yenmez for valuable comments.
%	\it \hfill Last edited on {\ddmmyyyydate \today}.
	}

\enlargethispage{3\baselineskip}

\begin{abstract}
	\vspace*{1em}
	
\noindent Roughly fifty thousand families in England appeal their school assignments each year through a centralized and understudied procedure that improves the placement of about one in five claimants. 
This paper shows that appeals are not an institutional afterthought: they change how parents report preferences to the education authority and the quality of placements it generates. 
Once the appeal stage is incorporated, Immediate Acceptance (IA) becomes less manipulable and can support equilibrium outcomes that Pareto dominate those of Deferred Acceptance (DA) under truth-telling, reversing the DA–IA welfare comparison.
In a preregistered experiment, we find results consistent with the theoretical predictions: appeals increase truth-telling in IA, leaving DA essentially unchanged, and widen the IA-DA efficiency gap.
Around 60\% of the IA efficiency gain appears before any appeal is upheld, because subjects anticipating the possibility of a potential appeal rank schools differently.

\vspace*{1em}
\end{abstract}

\begin{keyword}
\kwd{appeals}
\kwd{school choice}
\kwd{aftermarkets}
\end{keyword}

\begin{keyword}[class=JEL]
\kwd{D47}
\kwd{C78}
\kwd{C92}
\kwd{I28}
\end{keyword}

\end{frontmatter}

\setlength{\parskip}{0.25\baselineskip}
\newpage

\section{Introduction}
\label{sec:introduction}

School choice is the process through which students are assigned to schools via a centralized clearinghouse that asks and uses parents' preferences to determine placements. In systems as different as England, Northern Ireland, and New York City, the assignment procedure proceeds in three steps. First, parents rank the available schools. Second, the clearinghouse uses an algorithm to produce an allocation. Finally, parents may appeal each rejection they received from a school they ranked, and an independent panel decides which appeals to uphold. 

Economists have profoundly shaped the first part of this process. A large literature studies how families should rank schools, which matching mechanisms a clearinghouse should run, and how priorities should enter centralized admissions, and this work has driven major reforms in school assignment systems around the world. Yet, the final appeal stage has received almost no attention. The omission matters. In England alone, tens of thousands of families appeal secondary-school placements each year, and appeals succeed often enough to reshape outcomes: roughly one in five appeals is upheld, and about one in seven families rejected from their first-choice school ultimately gains admission to a more preferred school through this route \citep{hunt2019}. 

In this paper, we explore how appeal rights change the economics of school choice. We first ask: how do appeals change how parents rank schools? And second, how do differences in the way parents rank schools affect the final student placement? These questions are not merely theoretical. Families and school officials already understand that the possibility of a later appeal affects how families should rank schools.
In response to a parent asking how to order selective-school applications and a safer alternative, one school-admissions forum advised:

\begin{quote}
``{\it[Put] whichever Berks school you want ... higher than Burnham, so that you will then be able to appeal when you are refused a place.}''\footnote{Source: \url{https://www.elevenplusexams.co.uk/forum/11plus/viewtopic.php?t=44895}. Accessed on 3 September 2026.}
\end{quote}

This advice captures the mechanism we study. The parent is not told to abandon a safe option. She is told to rank the ambitious school above it, so that rejection preserves an appeal right while the lower-ranked school remains a fallback. A rank-order list therefore serves two roles. It determines the initial assignment, and it determines which rejected schools can later be challenged. Appeals can change behavior before the assignment is made, as well as outcomes after it is made.

We model school choice with appeals as a two-stage process. In the first stage, students submit rank-order lists and either Immediate Acceptance (IA) or Deferred Acceptance (DA) produces an initial assignment. In the second stage, students may appeal to schools that they ranked above their assignment. This structure provides a stylized representation of the statutory process in England, as mandated by the nationwide regulations in the School Admission Appeals Code.\footnote{School Admission Appeals Code 2022, paragraphs 3.5 to 3.10. The Code first requires the panel to determine whether the admission arrangements were unlawful or incorrectly applied and states that ``the panel must uphold the appeal'' when the child would otherwise have been offered a place. If the appeal is not upheld on those grounds, the panel must balance the family's case against the prejudice that admitting an additional child would cause the school. Department for Education statistics show substantial geographic variation in appeal outcomes, with success rates ranging from zero in some local authorities to above 90 percent in others in recent years.} Our strict rule captures the mandatory correction of a violation of admissions criteria. A second, probabilistic, rule provides a simple representation of the uncertainty surrounding the subsequent balancing stage. It is motivated by the wide variation in observed appeal success rates, which suggests idiosyncratic interpretations of the Appeals Code across local authorities. Successful appeals add seats rather than remove students admitted in the first stage, consistent with the Code's treatment of the remedy as the admission of an additional child. The final allocation is therefore determined jointly by the assignment mechanism and the appeal rule.

This two-stage structure reflects a feature that mechanism design typically abstracts away: the designer is implicitly a unitary authority that sets the rules, applies them, and has the final word. Constitutional law separates these functions (see e.g., \textit{Marbury v. Madison}, 5 U.S. (1 Cranch) 137 (1803)): allocations by public authorities are reviewable by an independent branch — typically the judiciary — that did not design the mechanism but can revise its outcomes. The allocation is thus produced jointly by mechanism and review, and the designer controls only the first.

Our first theoretical result establishes a sharp asymmetry in how parents' incentives are affected by appeals in the two assignment mechanisms. Strict appeals preserve the strategy-proofness of Deferred Acceptance. Since DA produces an assignment with no strict blocking pair relative to submitted preferences, the strict appeal stage has no priority violation to correct. Under Immediate Acceptance, the same rule changes incentives. A standard manipulation in IA is to skip applications to unfeasible schools because they harm chances at feasible second-choice schools that may be filled with lower-priority applicants by the time the second-choice school receives an application; this reasoning disappears once appeals are allowed, as a student can always overturn a priority violation through the appeal aftermarket. Appeals therefore make ambitious applications less risky. Although they do not make IA strategy-proof, they make it less manipulable in the profile-by-profile sense of \citet{pathak2013school}: every preference profile at which IA with strict appeals can be profitably manipulated is also manipulable under IA without appeals, while the converse fails at some profiles. 
The probabilistic rule adds a second channel that promotes truth-telling in IA: skipping a seemingly unattainable school throws away a lottery ticket through which the student might have gained admission.

Our main theoretical result concerns welfare under strategic behavior. Without appeals, \citet{ergin2006} show that every Nash equilibrium of the preference revelation game in Immediate Acceptance is weakly Pareto-dominated by truthful Deferred Acceptance. We show that this comparison can reverse once strict appeals are incorporated. For every school choice problem with strict priorities, Immediate Acceptance followed by strict appeals admits a Nash equilibrium whose outcome weakly Pareto-dominates truthful Deferred Acceptance, and makes at least one student strictly better off whenever the two outcomes differ. This is an equilibrium-existence result, not a claim that every equilibrium of IA with appeals dominates DA. Its implication is that the classical comparison between the two mechanisms depends on the review process that follows the initial assignment.

We test these predictions in a preregistered laboratory experiment with 1,022 participants across 59 independent matching groups. We vary both stages of the process. Participants face either IA or DA, followed by no appeals, strict appeals, or probabilistic appeals. The design lets us observe initial reports, first-stage assignments, appeal choices, and final assignments. We can therefore separate changes in behavior before any appeal is decided from the direct effect of successful appeals afterwards.

The reporting results closely follow the theoretical asymmetry. Under IA, strict and probabilistic appeals each raise truth-telling by approximately 11 percentage points, from 28 to 39 percent. Under DA, the corresponding effects are only 3 to 4 percentage points and are at most marginally significant. The meaning of a non-truthful report also differs sharply across mechanisms. Under DA, approximately 92 percent of non-truthful reports leave the participant's assignment unchanged, and every consequential deviation is harmful. Under IA, almost two thirds of non-truthful reports change the assignment, although harmful deviations are more common than beneficial ones. Appeals therefore reduce a behaviorally important strategic problem under IA, while most deviations observed under DA are assignment-irrelevant.

Appeals also substantially improve student welfare under IA. Strict appeals raise welfare under IA by approximately 0.3 ranks while leaving DA essentially unchanged. They increase the share of realized IA market outcomes that strictly Pareto-dominate truthful DA from 11 to 48 percent. The experiment does not reproduce a reversal in the sign of the average welfare ranking, because IA already performs better than DA without appeals in this environment. Instead, it confirms the comparative-static prediction of the theory: strict appeals improve IA much more than DA and move realized IA allocations sharply toward the Pareto region above truthful DA.

A decomposition of the welfare effect shows that approximately \(60\) percent of the observed
rank improvement under IA is already present in the first-stage assignment, before any appeal is decided, and is driven by changes in how subjects report their preferences. Only the remaining \(40\) percent is due to appeals overturning the initial assignment. This non-causal decomposition shows
that appeal rights affect outcomes before as well as after claims are resolved, but mainly through the first channel of changes in subjects' behavior.

Participants use the appeal stage systematically, although imperfectly. They are more likely to pursue claims against more valuable schools and claims supported by strict priority. At the same time, they file many appeals that cannot succeed and leave approximately three in ten sure, payoff-improving claims unfiled. This incomplete use of appeals helps explain why strict appeals reduce, but do not fully eliminate, the number of blocking pairs in IA.

The standard of review matters. Under IA, strict appeals more than halve the number of strict blocking pairs and substantially increase the share of stable market outcomes while creating almost no waste. Under DA, strict appeals leave the stable first-stage assignment unchanged. Probabilistic appeals are less targeted. They improve IA, but under DA they can disturb an initially stable assignment, create vacancies, and generate new blocking pairs.

Our contribution is to show that a centralized assignment procedure cannot always be evaluated independently of the institutions that review its decisions. 
When families can challenge an assignment, the review rule changes both the final allocation and the incentives driving the initial report. In our experiment, in fact, more through the second channel than the first.
The relevant object of analysis is therefore the assignment procedure together with the appeal rule that follows it. More broadly, whenever one institution can revise an allocation produced by another, the design of the review process can alter the usual ranking of assignment mechanisms. Separation of powers is thus not a detail to be added after the fact to a mechanism designed without it: because families anticipate review, its availability shapes the reports on which the initial assignment is built, and a mechanism that performs well when its output is final may perform differently under (judicial) review.

The remainder of the paper proceeds as follows. Section~\ref{sec:lit} discusses the related literature. Section~\ref{sec:setting} describes the institutional setting. Section~\ref{sec:model} presents the model and theoretical results. Section~\ref{sec:exp} describes the experimental design, and Section~\ref{sec:results} presents the results. Section~\ref{sec:conclusion} concludes.

\section{Related Literature}
\label{sec:lit}

Our work contributes to several strands of literature on school choice and aftermarkets. We organize our discussion around four interconnected themes.

\paragraph{Empirical literature.} \citet{taylor} show that appeals are more common in urban, more affluent, and higher-demand areas. \citet{hunt2019} finds that disadvantaged and minority pupils are less likely to secure a top-choice school through appeals, while parents who submit incomplete rank-order lists are 15\% more likely to appeal. Successful appeals also tend to place students in less deprived schools with fewer disadvantaged peers, suggesting that the appeals process may disproportionately benefit advantaged families.

\paragraph{Models with appeals.} Only a handful of papers explicitly examine the role of appeals. \citet{kojima2011robust} shows that allowing students to appeal their non-admission can compromise the stability and strategy-proofness of the DA mechanism. He introduces the concept of {robust stability}, arguing that if appeals are unrestricted, it becomes a weakly dominant strategy for students to forgo ranking schools initially and later appeal to their most preferred school with available spots (a result further generalized by \cite{afacan2012group}). Our model builds on this framework but introduces a crucial distinction: students can only appeal to schools they initially ranked and were rejected from. This restriction, aligned with real-world systems in England and Northern Ireland, prevents the strategic behavior identified by Kojima and ensures that appeals serve as a secondary step rather than a complete circumvention of the initial matching process.

\citet{afacan2017sticky} propose a different appeals model with two key differences: (1) appealing is costly, and (2) instead of a post-assignment appeal process, parents preemptively report how much instability they are willing to tolerate. They introduce the concept of {sticky stability}, where priority violations up to a fixed threshold are permitted, ensuring that every stable mechanism is also sticky stable. Unlike \cite{afacan2017sticky}, we do not model appeals as an ex-ante tolerance for instability, but as an ex-post institutional process that follows the initial matching.

Appeals also appear in the experiment of \citet{hakimov2026improving}, who study transparency and verifiability in centralized admissions. There, appeal decisions serve as an experimental measure of whether participants can detect that the announced mechanism was not followed, rather than as a second-stage allocation rule. Our paper instead treats appeals as part of the mechanism itself and studies how alternative rules for upholding them affect reporting incentives and final outcomes.

More recently, \citet{decerf2024incontestable} study when a matching is incontestable, that is, appeal-proof for students who observe priorities and capacities but not other students' preferences or assignments. Their focus is on the informational foundations of a weakened notion of stability. Our paper asks a different question. Rather than characterizing when a final matching is appeal-proof, we model appeals as an actual institutional second stage and study how alternative uphold rules reshape incentives and final outcomes.

While in our model every priority violation can be reviewed and corrected if the affected party appeals, a different yet related literature instead asks which priority violations can be justified, therefore allowing for matchings that Pareto-dominate stable ones. The justifiability of said priority violations can depend on ex-ante consent, the validity of reassignment claims that would follow, or whether a student's priority is violated, she does not benefit and yet could do better in an alternative DA improvement \citep{kesten2010school, troyan2020essentially, ehlers2020legal, reny2022efficient, ortega2026justifiable}. 

\paragraph{Immediate versus Deferred Acceptance.}
Our welfare result also contributes to the long-standing comparison between IA and DA. Under complete information and without appeals, every Nash equilibrium outcome of IA is weakly Pareto-dominated by truthful DA \citep{ergin2006}. This ranking is not robust to uncertainty and cardinal welfare: IA can outperform DA under incomplete information \citep{miralles2009school,abdulkadirouglu2011resolving,akyol2024bayesian}, and even under truth-telling IA does not produce placements that are meaningfully better than those generated by DA \citep{ortega2026asymptotic}. Studies using administrative data from Barcelona, Beijing, and Cambridge similarly find that IA yields higher estimated average welfare than DA \citep{calsamiglia2020structural,he2015gaming,agarwal2018demand}. Our paper identifies a different source of reversal: even under complete information, the appeal rule following the first-stage mechanism can overturn the classical Pareto comparison.

\paragraph{Beyond preference revelation.}
Our paper also connects to a broader literature that views school choice as a process extending beyond preference submission and the initial matching outcome. A first strand studies multi-stage admission procedures involving interviews, information acquisition, and other decisions before the matching takes place \citep{opendays, artemov2021assignment, chen2021information, chen2022information, maxey2024school}. A second strand studies aftermarkets and outside options, emphasizing that the assignment produced by a centralized clearinghouse need not coincide with the school or program a student ultimately attends \citep{akbarpour2022centralized}. Most closely related to our institutional perspective, \citet{kapor2024aftermarket} show that off-platform offers, declined placements, and frictional waitlists can alter final enrollment and generate externalities for other participants. Appeals similarly create a post-matching stage in which initial assignments may be revised. Unlike the aftermarket they study, however, appeals are governed by formal standards of review and are anticipated when families submit their rank-order lists; they therefore affect both final assignments and first-stage reporting incentives. Finally, our study relates to work allowing school capacities to adjust or be exceeded under specified conditions \citep{aygun2023placement, afacan2024}.

\section{Institutional Setting}\label{sec:setting}

%\josue{Finally, we connect the model to school admissions in England. The 2007 move away from first-preference-first admissions toward equal-preference systems provides a natural institutional setting in which appeal incentives and reported preferences are jointly affected by mechanism design. A central prediction is that the reform need not increase the reported top-choice offer rate. By making truthful and ambitious ranking safer, equal-preference admissions may lead families to list more ambitious first choices, mechanically lowering the share assigned to the reported first choice even if the mechanism improves incentives. At the same time, stable admissions should reduce the number of priority-based claims with legal force, while the effect on appeal volume is theoretically ambiguous because disappointed families may still appeal.}

Since the 1998 Education Reform Act, both England and Northern Ireland have maintained clearly defined appeals procedures for school admissions \citep{taylor}. When a child is refused a school place, parents receive a letter explaining how to appeal. The process is free and conducted by independent panels, with each school refusal requiring a separate appeal.
However, 
the criteria for upholding appeals differ between England and Northern Ireland. In England, an appeal must be upheld if either:

\begin{enumerate}
\item The school's admissions criteria were not properly followed or do not comply with the school admissions code.
\item The parents' reasons for admission outweigh the school's justification for not accepting additional students.\footnote{For England, see \url{https://www.gov.uk/schools-admissions/appealing-a-schools-decision}. For Northern Ireland, see \url{https://www.nidirect.gov.uk/articles/appealing-school-place-decision}. Accessed on 3 September 2026.}
\end{enumerate}

On the other hand, in Northern Ireland appeals are upheld only if:

\begin{enumerate}[]
\item The school did not apply, or misapplied, its admissions criteria.
\item The child would have been offered a place had the criteria been correctly applied.
\end{enumerate}

This regional difference captures the distinction between our theoretical uphold rules: England's second criterion allows discretionary decisions based on weighing competing claims, while Northern Ireland's approach focuses purely on procedural compliance.

Appeals concerning infant classes, the first three years of primary school (Reception and Years~1 and~2, typically ages four to seven), are subject to a stricter legal standard when admitting an additional child would breach the
statutory limit of 30 pupils per teacher. In 2024/25, \(9.7\%\) of appeals heard for infant classes were successful, compared with \(28.3\%\) for other primary classes.

Appeals operate at substantial scale and exhibit considerable geographic variation. For admissions at the start of the 2024/25 academic year,
\(14{,}584\) primary and \(37{,}289\) secondary appeals were lodged in England. Of the appeals heard, \(17.7\%\) of primary and \(19.9\%\) of
secondary appeals were successful. Variation across the 152 local authorities was much sharper: among authorities that heard at least one appeal, primary success
rates ranged from zero to \(66.7\%\), with 8 of 12 appeals successful in Westminster, while secondary success rates ranged from zero to \(95.1\%\), with 39 of 41 appeals successful in Sunderland.

Outside the UK, the New York City High School match also combines centralized admissions with a formal appeals process. Successful appeals are processed using
a variant of the top trading cycles mechanism \citep{morrill2024top}; in the 2003/04 academic year, approximately \(5.7\%\) of applicants, or \(5{,}100\)
out of \(90{,}000\), lodged an appeal, about half of which were successful \citep{abdulkadiroglu2005}. Our model more closely reflects the systems in
England and Northern Ireland, where independent panels review admission decisions and successful appeals admit an additional student rather than displacing an incumbent, with substantial variation within local authorities. These institutional features map naturally onto our model and appeal rule design in the next Section.

\section{Model}
\label{sec:model}

We model school choice with appeals as a two-stage process. In the first stage, students submit preference rankings to a centralized mechanism, which produces an initial matching. In the second stage, students may appeal to schools they ranked above their initial assignment, and an uphold rule determines which appeals are successful. Successful appeals create additional seats rather than displacing initially assigned students.

The first stage is a standard many-to-one matching problem, defined by the following primitives:

\begin{itemize}
\item a set of students $I$,
\item a set of schools $S$, each able to accept $q_s$ students, with the school $s_\emptyset$ of capacity $q_{s_\emptyset}=\infty$ representing being unassigned,
\item a strict preference $\succ_i$ of each student $i$ over schools,
\item a weak school priority $\unrhd_s$ of each school $s$, so that some students may have the same priority at some school, and
\item a single, strict tie-break order $\tau$ over students, which resolves priority ties, inducing a strict priority order.
\end{itemize}

We use $\succsim_i$ to denote the weak preference relation induced by $\succ_i$. A school choice problem is a tuple $(I,S,q,\succ,\unrhd,\tau)$. Since all primitives except student preferences are fixed throughout, we represent a school choice problem simply by $\succ$.

A matching $\mu$ is a function that assigns each student to a school or to the outside option. The school assigned to student $i$ in $\mu$ is denoted by $\mu_i$, whereas the set of students assigned to $s$ in $\mu$ is denoted by $\mu_s^{-1}$. A matching is feasible if no school is assigned more students than its capacity.

A weak blocking pair at matching \(\mu\) is a student-school pair \((i,s)\)
such that \(s\succ_i\mu_i\) and \(i\unrhd_s j\) for some
\(j\in\mu_s^{-1}\).
A strict blocking pair is a weak blocking pair in which the priority comparison is strict, i.e., $i \rhd_s j$ for some $j \in \mu_s^{-1}$. 
A matching \(\mu\) is non-wasteful if there is no student \(i\) and school
\(s\) such that $s\succ_i\mu_i$ and $|\mu_s^{-1}|<q_s$.
A matching is stable if it admits no strict blocking pair and is non-wasteful. 

A (not necessarily feasible) matching $\nu$ weakly Pareto-dominates matching $\mu$ if, for every student $i$, $\nu_i \succsim_i \mu_i$, and Pareto-dominates it if additionally $\nu_j \succ_j \mu_j$ for some student $j$.

A mechanism $M$ associates a feasible matching to every profile of preferences $\succ$, and we use $M_i(\succ)$ to denote the school to which student $i$ is assigned under mechanism $M$. 
We allow submitted preference lists to be truncated in the model,
with omitted schools treated as unacceptable.	
Due to their popularity, we focus on the immediate acceptance (IA, also known as the Boston mechanism) and the student-proposing deferred acceptance mechanism (DA). The IA mechanism works in sequential rounds, as follows:

\begin{itemize}
\item In the first round, each student applies to her most preferred school. Each school considers the students who have listed it as their first choice and assigns seats of the school to these
students one at a time, following their priority order, until there are no seats left or there is no student left who has listed it as her first choice.

\item In each subsequent round $k$, each remaining student applies to her $k$-th preferred school. Each school that was already full automatically rejects all new applicants. For each school with seats still available, consider the students who have listed it as their $k$-th choice and assign the remaining seats to these students one at a time following their priority order, until there are no seats left or there is no student left who has listed it as her $k$-th choice.
\end{itemize}

DA differs from IA in that schools' acceptances are temporary.
In each round, rejected students apply to their next preferred school. Each
school then considers its current tentative holders together with its new
applicants, keeps the highest-priority students up to capacity, and rejects
the rest.
$\ia(\succ)$ and $\da(\succ)$ denote the allocation produced by IA and DA for the matching problem $\succ$, respectively.

\paragraph{The Appeals Process.} We now describe the second-stage process. Given the publicly observable reported preferences $\hat \succ$ and an initial matching $M(\hat \succ)$, the set of appealable schools for student $i$ includes all schools ranked as more desirable by the student to which she was not admitted, i.e.,
\[
A_i(\hat \succ,M)\coloneqq
\{s\in S: s\, \hat\succ_i\, M_i(\hat \succ)\}.
\]
Thus, a student may appeal only to schools that she ranked as more preferred, i.e., schools that rejected her in the mechanisms we consider. This restriction is aligned with the rules in England and Northern Ireland.
After the initial matching is announced, student $i$ submits a set of appeals $a_i\subseteq A_i(\hat \succ,M)$ to an independent panel. The panel then applies an uphold rule, which determines which of these appeals succeed.

\begin{definition}
An uphold rule $r$ maps each submitted set of appeals $a_i$ into a subset $r(a_i)\subseteq a_i$, representing the appeals that are upheld.\footnote{The uphold rule depends on the reported preferences, priorities and initial matching too, but we make this dependence implicit to simplify notation.}
\end{definition}

Because appeals are free for families and hearings are conducted online, we assume that appealing is costless. We focus on rules in which the success of an appeal at a school $s$ is independent of whether a student decided to appeal her rejection at a different school $s'$. 
Under these assumptions, it is a weakly dominant strategy for each student
to appeal every school in her appealable set that she truly prefers to her
first-stage assignment. We therefore assume throughout that
\[
a_i
=
\{s\in A_i(\hat\succ,M):
s\succ_i M_i(\hat\succ)\}.
\]
If more than one appeal is upheld, the student enrols in her most preferred
school, according to \(\succ_i\), among the upheld appeals.

We consider two uphold rules in the main text, described in Table~\ref{tab:upholdrules}.

\begin{table}[H]
\centering
\caption{Uphold rules}
\label{tab:upholdrules}
\begin{tabular}{p{4cm}p{9cm}}
\toprule
Rule & An appeal by student $i$ to school $s\in a_i$ is upheld if \\
\midrule
Strict rule ($r_1$)
& $(i,s)$ is a strict blocking pair in $M(\hat \succ)$. \\[0.4em]
Probabilistic rule ($r_2$)
& $(i,s)$ is a strict blocking pair in $M(\hat \succ)$, or otherwise\\ & with independent probability $p\in(0,1)$. \\
\bottomrule
\end{tabular}
\end{table}
Rule \(r_1\), our main focus, captures the priority-based component of school
admission appeals in England.\footnote{In Appendix \ref{app:weakpriorities}, we consider a similar rule $r_3$, in which a student can successfully appeal if they were not admitted to a school but a student with the same priority was admitted due to a tie-break. This rule, unlike $r_1$ and $r_2$, breaks DA's strategy-proofness.
} Specifically, England's School Admission Appeals Code (Section 3.5) states:

\begin{quote}
``\emph{The panel must uphold the appeal...if it finds that the admission arrangements did not comply with admissions law or had not been correctly and impartially applied, and the child would have been offered a place if the arrangements had complied or had been correctly and impartially applied}.''
\end{quote}

Rule $r_2$ is motivated by the substantial variation in appeal success rates across local authorities, which range from 0\% to over 90\%, as well as by evidence from our survey of appeal panel members suggesting considerable heterogeneity in how similar cases are assessed (described in Appendix \ref{app:survey}). In the main text, we will focus on $r_1$, postponing the analysis of $r_2$ and $r_3$ to Appendices \ref{app:randomized} and \ref{app:weakpriorities}, respectively, highlighting only the main differences in the text. 

We denote the resulting composite mechanism by \(M\circ r\). Given \(\hat\succ\), it first applies \(M\), then applies the uphold rule \(r\) to the appeals generated from \(M(\hat\succ)\), and finally assigns each student either her first-stage school or, if at least one appeal is upheld, her most preferred upheld appeal according to \(\succ_i\).
To place mechanisms with and without appeals on the same footing, let $r_0$ denote the null uphold rule that rejects every appeal. Hence $M\circ r_0$ coincides with the baseline mechanism $M$ without appeals.

Note that the final matching need not be feasible: schools may admit more students than their published quotas when multiple appeals are upheld. This is consistent with real-world evidence. In England in 2023/2024, 24\% of secondary schools were over capacity, while during the same period, 25\% of post-primary schools in Northern Ireland exceeded their quotas due to successful appeals.\footnote{Source: England's Explore Education Statistics: \url{https://explore-education-statistics.service.gov.uk/find-statistics/school-capacity/2024-25}, and NI Education Authority FAQs: \url{https://www.eani.org.uk/parents/pupil-applications-and-grants/admissions/admissions-support}. Accessed on 3 September 2026.}

\subsection{Preliminary Observations}

We now present two straightforward but useful observations regarding the consequences of appeals. First, in our framework, every student becomes weakly better off by adding appeals, and the outcome after appeals are upheld admits no appealable strict blocking pair involving a school that the student truly prefers to her final assignment.

\begin{lemma}
\label{prop:pareto}
For every mechanism \(M\) and every report profile \(\hat\succ\),
\((M\circ r_1)(\hat\succ)\) weakly Pareto-dominates
\((M\circ r_0)(\hat\succ)\) with respect to true preferences, and
admits no strict blocking pair \((i,s)\) such that
\(s\in A_i(\hat\succ,M)\) and
\[
s\succ_i (M\circ r_1)_i(\hat\succ).
\]
\end{lemma}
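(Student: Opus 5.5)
The plan is to argue directly from the definitions of $a_i$, $r_1$, and the composite rule $M\circ r_1$. No earlier result is needed. Fix $M$ and $\hat\succ$. Write $\mu=M(\hat\succ)=(M\circ r_0)(\hat\succ)$ for the first-stage matching and $\nu=(M\circ r_1)(\hat\succ)$ for the final matching.

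\emph{Weak Pareto dominance.} Take any student $i$; there are two cases.
\begin{enumerate}
\item If no appeal of $i$ is upheld, then $\nu_i=\mu_i$.
\item Otherwise $\nu_i$ is $i$'s $\succ_i$-best school in $r_1(a_i)\subseteq a_i$. Every $s\in a_i$ satisfies $s\succ_i\mu_i$ by the definition of $a_i$, so $\nu_i\succ_i\mu_i$.
\end{enumerate}
Hence $\nu_i\succsim_i\mu_i$ for all $i$, with respect to true preferences.

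\emph{No remaining appealable strict blocking pair.} Suppose, for contradiction, that $(i,s)$ is a strict blocking pair with $s\in A_i(\hat\succ,M)$ and $s\succ_i\nu_i$. I first treat blocking relative to $\mu$. By the first part, $s\succ_i\nu_i\succsim_i\mu_i$, so $s\succ_i\mu_i$. Together with $s\in A_i(\hat\succ,M)$, this gives $s\in a_i$. Because $(i,s)$ is a strict blocking pair of $\mu$, rule $r_1$ upholds it, so $s\in r_1(a_i)$. Since $\nu_i$ is the $\succ_i$-best upheld school, $\nu_i\succsim_i s$, which contradicts $s\succ_i\nu_i$.

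The step I expect to be the main obstacle is fixing which matching ``strict blocking pair'' refers to. If it is read relative to the final matching $\nu$, then $\nu_s^{-1}$ contains the original holders $\mu_s^{-1}$ plus the successful appellants to $s$. I would reduce this reading to the first-stage case. Suppose $i\rhd_s j$ for some $j\in\nu_s^{-1}$; again there are two cases.
\begin{enumerate}
\item If $j\in\mu_s^{-1}$, then $(i,s)$ already blocks $\mu$, because $s\succ_i\mu_i$ as shown above.
\item If $j$ entered $s$ through an upheld appeal, then $(j,s)$ was a strict blocking pair of $\mu$. So $j\rhd_s k$ for some $k\in\mu_s^{-1}$, and transitivity of the strict priority gives $i\rhd_s k$. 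Thus $(i,s)$ again strictly blocks $\mu$.
\end{enumerate}
In either case the first-stage argument yields the same contradiction.
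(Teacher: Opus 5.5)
Your proof is correct and follows essentially the same route as the paper. Weak dominance comes from the fact that upheld appeals are drawn from $a_i$. For the blocking-pair claim, both arguments split on whether the lower-priority occupant of $s$ held $s$ in the first stage or entered by appeal. In the second case, transitivity of $\rhd_s$ turns $(i,s)$ into a strict blocking pair of the first-stage matching, so $r_1$ must have upheld $i$'s appeal.

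The only difference is presentational. You first settle the first-stage reading and then reduce the final-matching reading to it, whereas the paper works directly with the final matching.
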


We postpone a straightforward proof to Appendix~\ref{app:missingproofs}.
It is important to note that final outcomes need not be stable because the final matching can be wasteful: appeals may leave first-stage seats vacant that a student would prefer to her own, yet was unable to claim through an appeal because she had worse priority than all initially assigned students. 

Lemma~\ref{prop:pareto} implies an asymmetric effect of appeals in DA and IA. If DA is Pareto-dominated by a feasible matching, its outcome will remain inefficient after incorporating the appeals aftermarket. On the other hand, if IA generates appealable justified-envy violations involving schools that the affected students truly prefer, the appeals procedure removes those violations through capacity extensions, although the final outcome need not be stable because successful appeals may create vacancies elsewhere.

\subsection{Incentives}
We now focus on how appeals affect parents' incentives. It is well known that $\da$ is strategy-proof whereas $\ia$ is not, which has long been a central argument in favor of the former. We show that the possibility of later appeals can modify these incentive properties, making IA less manipulable, simply because manipulations in IA are no longer needed in some problems where a more preferred school that could be obtained through a manipulation can now be obtained with a truthful report followed by a subsequent appeal.

Formally, fix a student $i$ and a preference profile $(\succ_i,\succ_{-i})$. We say that student $i$ can profitably manipulate the composite mechanism $M \circ r$ at $(\succ_i,\succ_{-i})$ if there exists a different preference $\hat\succ_i$ such that
\[
(M \circ r)_i(\hat\succ_i,\succ_{-i}) \succ_i (M\circ r)_i(\succ_i,\succ_{-i}).
\]
The composite mechanism is strategy-proof if no student has a profitable manipulation, for every possible preference $\succ_i$, manipulation $\succ_i'$ and preference profile $\succ_{-i}$.
Mechanism $N$ is \emph{as manipulable as} mechanism $M$ if, whenever some student can profitably manipulate $M$ at some profile, that same student can profitably manipulate $N$ at that profile. Mechanism $N$ is \emph{more manipulable} than mechanism $M$ if, in addition, there exists a profile at which some student can profitably manipulate $N$ but no student can profitably manipulate $M$ \citep{pathak2013school}.

We first compare $\ia\circ r_0$ and $\ia\circ r_1$. 
\begin{proposition}
\label{thm:ia_manipulable}
$\ia\circ r_0$ is more manipulable than $\ia\circ r_1$.
\end{proposition}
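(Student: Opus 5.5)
The plan has two parts. First, show that $\ia\circ r_0$ is as manipulable as $\ia\circ r_1$. Second, exhibit a profile at which $\ia\circ r_0$ can be manipulated but $\ia\circ r_1$ cannot. Throughout, I treat priorities as strict after applying $\tau$.

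\textbf{Part 1: $\ia\circ r_0$ is as manipulable as $\ia\circ r_1$.} Fix $\succ$ and a student $i$ who profitably manipulates $\ia\circ r_1$ via $\hat\succ_i$. Let $t=(\ia\circ r_1)_i(\hat\succ_i,\succ_{-i})$ and $s=\ia_i(\succ)$. By Lemma~\ref{prop:pareto}, $(\ia\circ r_1)_i(\succ)\succsim_i s$, so $t\succ_i s$. There are two cases.

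\emph{Case 1: $i$ obtains $t$ in the first stage}, i.e.\ $t=\ia_i(\hat\succ_i,\succ_{-i})$. Then $\hat\succ_i$ is already a profitable manipulation of $\ia$ at $\succ$, and we are done.

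\emph{Case 2: $i$ obtains $t$ through an upheld appeal.} Then $(i,t)$ is a strict blocking pair at $\ia(\hat\succ_i,\succ_{-i})$. So some $j$ with $i\rhd_t j$ is assigned to $t$ in that matching. I would show that the report $\succ_i'$ ranking $t$ first is a profitable manipulation of $\ia$.

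The key observation is that, under $\ia$, the set of students who apply to $t$ in round~1 is determined by the others' reports $\succ_{-i}$ and by whether $i$ lists $t$ first. Suppose $i$ ranks $t$ first and is still rejected. Then $t$ is filled in round~1 by $q_t$ students, all with higher priority at $t$ than $i$. These same students apply to $t$ in round~1 under $(\hat\succ_i,\succ_{-i})$ as well. Adding or removing $i$'s lower-priority application cannot displace them, so they fill $t$ permanently. This leaves no room for $j$, contradicting the strict blocking pair. Hence $\ia_i(\succ_i',\succ_{-i})=t\succ_i s$, a profitable manipulation of $\ia\circ r_0$ at $\succ$ by the same student $i$.

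The main obstacle is this step. One has to make precise that round-1 seats at $t$ are finalized, and that they are independent of $i$'s lower-ranked entries. This follows directly from the definition of IA, because acceptances are permanent.

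\textbf{Part 2: strictness.} I would use the following example. There are three students $1,2,3$, schools $a,b$ with $q_a=q_b=1$, and an outside option. Preferences are $1:a\succ b$, $2:a$ first, and $3:b$ first. Priorities are $2\rhd_a 1$ and $1\rhd_b 3$, completed arbitrarily.

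Under truthful $\ia$, student $2$ gets $a$ and student $3$ gets $b$ in round~1, so student $1$ is unassigned. Student $1$ can profitably manipulate $\ia$ by ranking $b$ first, since $1\rhd_b 3$ gives her $b$.

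Under $\ia\circ r_1$ with truthful reports, $(1,b)$ is a strict blocking pair with $b$ appealable, so student $1$ ends at $b$. She cannot obtain $a$ under any report: student $2$ applies to $a$ in round~1 with higher priority, so no appeal to $a$ succeeds and no first-stage assignment to $a$ is possible. Students $2$ and $3$ already receive their top choices. Hence no student can manipulate $\ia\circ r_1$ at this profile, while $\ia\circ r_0$ is manipulable there. This completes the comparison in the sense of \citet{pathak2013school}.
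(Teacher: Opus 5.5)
Your proposal is correct and follows essentially the paper's route: the same two-case containment argument (either the gain occurs in the first stage, or the upheld appeal implies that ranking $t$ first wins a round-1 seat, otherwise higher-priority students fill $t$ irrespective of $i$'s report). Your strictness example is a two-school version of Example~\ref{ex:moremanipulable}, with the outside option playing the role of $s_3$. The one step the paper makes explicit, and your ``priorities strict after $\tau$'' convention passes over, is that $r_1$ is defined on the underlying weak priorities; the contradiction therefore needs $i\rhd_t j$ to imply that $i$ beats $j$ in the tie-broken order, which holds because tie-breaking cannot reverse a strict comparison.
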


\begin{proof}[Proof Sketch]
To show that every profitable manipulation of $\ia \circ r_1$ is also a
profitable manipulation of $\ia \circ r_0$, fixing $\succ_{-i}$, note that
under the manipulation student $i$ obtains her improved school $y$ either
directly in the first stage or through an appeal. In the former case the same
report is trivially profitable under $\ia \circ r_0$. In the latter, the
appeal succeeded because $y$ admitted a student with lower priority than $i$;
but then $i$, by ranking $y$ first, would have been admitted to $y$ in the
first round under $\ia \circ r_0$. 
Either way, $i$ secures $y$ under
$\ia\circ r_0$. By Lemma~\ref{prop:pareto}, her truthful
$\ia\circ r_1$ outcome is weakly preferred to her truthful
$\ia\circ r_0$ outcome. Since she strictly prefers $y$ to the former, she also
strictly prefers $y$ to the latter.
Example~\ref{ex:moremanipulable} below, with three students and
three unit-capacity schools, illustrates how a manipulation opportunity (in
parentheses for $i_1$) disappears in IA once appeals are incorporated.

\begin{example}
\label{ex:moremanipulable}
\textcolor{white}{l}

\begin{center}
	\begin{tabular}{CCCC|CCC}
		\succ_{i_1}&(\succ'_{i_1})& \succ_{i_2}&\succ_{i_3}&\unrhd_{s_1}&\unrhd_{s_2}&\unrhd_{s_3}\\
		\hline
		s_1&(s_2) &s_1&s_2&i_2&i_1&\cdot\\
		s_2&(s_1) &s_3&s_3&i_1&i_3&\cdot\\
		s_3&(s_3) &s_2&s_1&i_3&i_2&\cdot\\
	\end{tabular}
\end{center}

Under truthful reporting, the first-stage $\ia$ outcome is
\[
(i_1-s_3,\ i_2-s_1,\ i_3-s_2).
\]
Indeed, in round 1, $i_1$ and $i_2$ apply to $s_1$, and $i_2$ is accepted;
$i_1$ is rejected. Student $i_3$ applies to $s_2$ and is accepted. In round 2,
$i_1$ applies to $s_2$ but is rejected because $s_2$ is already full. In
round 3, $i_1$ applies to $s_3$ and is accepted.

Under $\ia\circ r_0$, student $i_1$ can profitably manipulate by reporting
$s_2 \succ'_{i_1} s_1 \succ'_{i_1} s_3$.
Then in round 1, student $i_1$ applies to $s_2$ and is accepted, $i_2$ applies
to $s_1$ and is accepted, and $i_3$ applies to $s_2$ but is rejected. Student
$i_3$ then applies to $s_3$ and is accepted. Hence the outcome is
\[
(i_1-s_2,\ i_2-s_1,\ i_3-s_3),
\]
which is strictly better for $i_1$ than her truthful $\ia\circ r_0$ assignment
$s_3$.

Now consider $\ia\circ r_1$ at the truthful profile. The first-stage outcome is
again
\[
(i_1-s_3,\ i_2-s_1,\ i_3-s_2).
\]
Student $i_1$ has a successful appeal to $s_2$, because $i_1$ ranks $s_2$
above her first-stage assignment $s_3$ and
$i_1 \rhd_{s_2} i_3$.
Thus, after appeals, $i_1$ obtains $s_2$.
No student can profitably manipulate $\ia\circ r_1$ at this profile. Students
$i_2$ and $i_3$ obtain their top choices truthfully, so they cannot improve.
Student $i_1$ obtains $s_2$ truthfully after appeals. Her only strictly better
school is $s_1$. But $i_2$ ranks $s_1$ first and has higher priority than
$i_1$ at $s_1$, so $i_1$ cannot obtain $s_1$ directly by any report and cannot
successfully appeal to $s_1$. Therefore $i_1$ cannot profitably manipulate
$\ia\circ r_1$.
\end{example}
The full proof appears in Appendix~\ref{app:missingproofs}.
\end{proof}

While we have seen that appeals reduce the manipulability of the Immediate Acceptance mechanism, they do not make it strategy-proof. This is a potential concern because IA's efficiency under truthfulness disappears once we account for strategic behavior, so that in the no-appeals case the Nash equilibria of the corresponding preference revelation game become weakly Pareto-dominated by DA's truthful outcome (as shown by \cite{ergin2006})\footnote{\cite{ortega2026note} quantifies IA's inefficiency loss in random markets when moving from truthful to strategic behavior.}. We will see shortly that this issue does not represent a concern in our setup.    

Before showing that equilibrium play need not have an adverse welfare effect on IA with appeals, we note that appeals upheld through $r_1$ leave DA's strategy-proofness untouched (also under $r_2$, but not under $r_3$, as we show in Appendix~\ref{app:weakpriorities}).
\begin{proposition}
\label{thm:da_manipulable}
$\da \circ r_1$ is strategy-proof.
\end{proposition}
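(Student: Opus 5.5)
Under truthful reporting the appeal stage has nothing to correct, and a misreport cannot do better than DA's own strategy-proofness allows; we show each in turn. Fix a student \(i\), a true preference \(\succ_i\), and the reports \(\succ_{-i}\) of the others. Write \(\mu=\da(\succ_i,\succ_{-i})\) for the truthful first-stage matching. Two observations about \(r_1\) in the DA context carry the argument.

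\emph{Truthful reporting triggers no successful appeal.} It is well known that \(\da\) produces a matching that is stable with respect to the submitted preferences. So \(\mu\) has no strict blocking pair \((j,s)\) with \(s\,\hat\succ_j\,\mu_j\). A successful \(r_1\)-appeal by any student \(j\) requires \(s\in A_j\), so \(s\,\hat\succ_j\,\mu_j\), and also that \((j,s)\) be a strict blocking pair of \(\mu\). By stability no such pair exists. Hence \((\da\circ r_1)_i(\succ_i,\succ_{-i})=\da_i(\succ_i,\succ_{-i})\).

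\emph{Under a misreport, the composite outcome is attainable in DA without appeals.} Let \(i\) report \(\hat\succ_i\) and set \(\nu=\da(\hat\succ_i,\succ_{-i})\). The final school \(y\) is either \(\nu_i\) or a school \(s\) with \(s\,\hat\succ_i\,\nu_i\) such that \((i,s)\) strictly blocks \(\nu\). In the first case, strategy-proofness of \(\da\) gives \(\mu_i\succsim_i \nu_i=y\).

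The second case is the main step. Here \(i\,\rhd_s\, j\) for some \(j\in\nu_s^{-1}\). Construct the report \(\succ_i^s\) that lists \(s\) alone. I claim \(\da_i(\succ_i^s,\succ_{-i})=s\); strategy-proofness of \(\da\) then gives \(\mu_i\succsim_i s=y\), which completes the case.

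To prove the claim I would use the \emph{priority cutoff} characterization of DA, where the cutoff of \(s\) at \(\nu\) is the lowest-priority student \(s\) admits. Two facts are needed.
\begin{itemize}
\item \(s\) is full at \(\nu\). Otherwise, since \(i\) ranked \(s\) above \(\nu_i\) and was rejected there, \(s\) would not have rejected her; this contradicts \(\nu_i\neq s\).
\item Admission to \(s\) is settled by the cutoff. The cutoff of \(s\) at \(\nu\) is weakly below the lowest admitted student \(j\), and \(i\) has strictly higher priority than \(j\).
\end{itemize}
The worry is that changing \(i\)'s report changes the cutoffs. I would handle this by a standard DA comparative static. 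Truncating or permuting \(i\)'s list so that she applies only to \(s\) can only weakly \emph{lower} the demand faced by other schools before \(i\) reaches \(s\). Rejection chains started by \(i\) elsewhere are removed, so the set of students who apply to \(s\) weakly shrinks. The \(s\)-cutoff therefore weakly falls, and \(i\), who beats \(j\), is admitted.

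A cleaner alternative avoids cutoff monotonicity. Under \(\succ_i^s\), suppose \(i\) is unmatched. Then \(\nu\), modified by sending \(i\) to \(s_\emptyset\), would be compared with the new stable matching via the lattice/rural-hospitals theorem, which gives that \(s\) fills with the same number of students. Using the blocking student \(j\) to show a contradiction with stability is the step I expect to be the main obstacle. I would first try the direct route: \(\nu\) restricted to \(I\setminus\{i\}\), with \(i\) placed at \(s\) in place of the lowest-priority student, is used to argue via the strategy-proofness/\emph{weak Maskin monotonicity} of DA that \(i\) is admitted.

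With the claim in hand, every misreport yields \(y\) with \(\mu_i\succsim_i y\). Since truthful reporting gives exactly \(\mu_i\) after appeals, no profitable manipulation exists, and \(\da\circ r_1\) is strategy-proof.
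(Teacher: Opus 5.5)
Your conclusion is right, and your first observation is in fact the whole proof. However, you apply it only at the truthful profile. The step you call the main one is not established as written. That is the case in which $i$'s final school $y$ comes from an upheld appeal to some $s$ with $s\,\hat\succ_i\,\nu_i$. Your claim that $i$ would be admitted to $s$ by listing $s$ alone rests on a cutoff-monotonicity assertion (``the set of students who apply to $s$ weakly shrinks''), which you state but do not prove. Both fallbacks, the rural-hospitals comparison and weak Maskin monotonicity, are explicitly left unresolved. So, taken at face value, the proposal has an unproven step.

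The missing observation is that this case is empty. $\da$ is stable with respect to whatever profile is submitted, in particular $(\hat\succ_i,\succ_{-i})$, and with respect to the tie-broken priorities it uses.

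\begin{itemize}
\item An appeal by $i$ to $s$ is admissible only if $s\,\hat\succ_i\,\nu_i$.
\item $r_1$ upholds it only if $i\rhd_s j$ for some $j\in\nu_s^{-1}$.
\item The tie-break never reverses a strict priority comparison, so $(i,s)$ would then block $\nu$ with respect to the submitted preferences, contradicting stability.
\end{itemize}

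Your own two bullets already deliver this contradiction. Student $i$ proposed to $s$ and was rejected. At that moment $s$ held $q_s$ students ranked above $i$. The priority of the lowest-ranked holder at $s$ can only rise afterwards, so no final occupant $j$ of $s$ satisfies $i\rhd_s j$.

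Hence $r_1$ upholds no appeal at any report profile, and $\da\circ r_1$ coincides with $\da$ as a function of reports. Strategy-proofness is therefore inherited directly. This is exactly the paper's one-line proof. The case split, the single-school report $\succ_i^s$, and the cutoff machinery can all be dropped.
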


\begin{proof}
Note that the DA outcome never admits a blocking pair under any reported preferences. Hence $r_1$ never upholds any appeal, and the composite mechanism coincides with DA, which is strategy‑proof.
\end{proof}

\subsection{The Classical Ranking Can Reverse}

Proposition~\ref{thm:ia_manipulable} shows that appeals reshape the incentive properties of standard mechanisms. We now show that appeals can also reverse the classical welfare comparison between DA and IA in the Nash equilibrium of their induced preference revelation games under complete information. Without appeals, a seminal result by \cite{ergin2006} shows that every Nash equilibrium of the IA game is weakly Pareto-dominated by the unique equilibrium in weakly dominant strategies of DA. But when we add appeals, not only does there exist a Nash equilibrium of the IA game that produces a weakly better outcome than when students report their preferences truthfully; in fact, there exists a Nash equilibrium that weakly Pareto-dominates (strictly so if DA is inefficient\footnote{\cite{ortega2026large} show that DA is Pareto-inefficient with high probability, and that the majority of students can be simultaneously improved without harming anyone else.}) the DA truthful outcome with or without appeals.   

\begin{theorem}
\label{thm:ia_r1_dominates_da}
For every school choice problem with strict priorities, there exists a Nash
equilibrium \(\sigma^*\) of \(\ia\circ r_1\) such that
\[
(\ia\circ r_1)(\sigma^*) \succsim \da(\succ).
\]
Furthermore, if
\[
(\ia\circ r_1)(\sigma^*)\neq \da(\succ),
\]
then at least one student is strictly better off under
\((\ia\circ r_1)(\sigma^*)\) than under \(\da(\succ)\).
\end{theorem}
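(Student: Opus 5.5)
The plan is to exhibit an explicit strategy profile whose first-stage $\ia$ outcome reproduces a stable matching, and to check the Nash property directly. Let $\mu=\da(\succ)$ be the truthful DA outcome. For each student $i$, let $\sigma^*_i$ be the report that ranks $\mu_i$ first, followed by the remaining schools in the true order $\succ_i$; students with $\mu_i=s_\emptyset$ report the empty list.

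\emph{Step 1 (outcome).} Since $\mu$ is feasible, at most $q_s$ students rank each $s$ first under $\sigma^*$. Hence every student is accepted in round 1 of $\ia$ and $\ia(\sigma^*)=\mu$. No school is ranked above $\mu_i$ in $\sigma^*_i$, so $A_i(\sigma^*,\ia)=\emptyset$ and no appeal is filed. Therefore $(\ia\circ r_1)(\sigma^*)=\mu=\da(\succ)$, and the weak domination holds with equality. The second clause of the theorem is then vacuous for this $\sigma^*$. If one wants a nontrivial improvement, the same template applies with $\mu$ replaced by a suitable Pareto improvement; see the final paragraph.

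\emph{Step 2 (no profitable deviation).} Fix $i$ and any report $\hat\succ_i$, and suppose $i$ ends up at some $s$ with $s\succ_i\mu_i$. There are two routes, and I will rule out each using stability of $\mu$.

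First, suppose $i$ obtains $s$ in the first stage. Because all other students apply in round 1 to their $\mu$-schools, $s$ has exactly $|\mu_s^{-1}|$ round-1 applicants other than $i$. If $|\mu_s^{-1}|<q_s$, then $(i,s)$ is a wasteful pair at $\mu$, which non-wastefulness of DA excludes. So $s$ is full after round 1 unless $i$ displaces someone. With strict priorities, $i$ can take a round-1 seat only if $i\rhd_s j$ for some $j\in\mu_s^{-1}$. Such a $j$ would make $(i,s)$ a strict blocking pair at $\mu$, contradicting stability. Moreover, if $i$ applies to $s$ in a later round, $s$ is already full.

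Second, suppose $i$ obtains $s$ through an appeal. Because of $i$'s deviation, the first-stage matching may differ from $\mu$, but only for the at most one student whom $i$ displaced. By the first case, $i$ cannot displace anyone at a school she prefers to $\mu_i$. She can displace someone only at a school she likes less, and she cannot lower her priority relative to the incumbents of $s$. So at the first-stage matching, $s$ still contains only students of $\mu_s^{-1}$ (possibly minus one), and $r_1$ upholds $i$'s appeal only if $i\rhd_s j$ for such a $j$. This again contradicts stability of $\mu$. Appeals by other students cannot hurt $i$, since successful appeals add seats rather than displace anyone. Hence $i$'s final assignment is weakly worse than $\mu_i$, and $\sigma^*$ is a Nash equilibrium.

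\emph{Main obstacle.} The delicate point is the appeal route in Step 2: one must make sure that $i$'s deviation cannot reshuffle others in the first stage so as to create a strict blocking pair for $i$ herself. I would handle this by noting that all other students are placed in round 1 at their $\mu$-schools regardless of $i$'s report, except possibly one student displaced by $i$. For the strengthened version in which the equilibrium outcome strictly improves on DA, I would try to replace $\mu$ by a Pareto improvement $\nu$ of $\da(\succ)$ that eliminates only priority violations by students not interested in the relevant seats, as in Kesten's EADAM. The same round-1 argument would then be used, with the inequalities checked against the ``interrupter'' structure of $\nu$. I expect that verification to be the hardest part.
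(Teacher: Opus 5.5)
Your argument is correct for the theorem as literally stated, but it takes a different and much weaker route than the paper. You implement $\da(\succ)$ itself. Everyone ranks $\mu_i$ first, and IA reproduces $\mu$ in round 1. A deviator $i$ then cannot obtain any $s\succ_i\mu_i$:
\begin{itemize}
\item not directly, because by non-wastefulness $s$ is filled by $\mu_s^{-1}$ in round 1, and by stability no incumbent there is below $i$;
\item not on appeal, because the first-stage occupants of $s$ are still exactly $\mu_s^{-1}$.
\end{itemize}
This is essentially the Ergin and S\"onmez observation that stable matchings are Nash outcomes of IA, plus a check that appeals open no new deviations. It is short and uses nothing beyond stability of DA. But the dominance then holds with equality, the second clause is vacuous, and your $\sigma^*$ is equally an equilibrium of $\ia\circ r_0$, so the proof says nothing about appeals.

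The paper instead constructs Modified IA. This runs IA on true preferences but overbooks a rejected student whenever she outranks a regular admittee. The paper proves $\mia(\succ)\succsim\da(\succ)$ by a minimal-round argument based on DA's stability. It then implements $\mia(\succ)$ as an equilibrium: regular admittees rank their $\mia$ school first, while overbooked students report truthfully and truncate at their $\mia$ school. The overbooked students are therefore rejected at stage one and recover the seat through an upheld strict appeal. That equilibrium strictly improves on DA whenever $\mia(\succ)\neq\da(\succ)$. In the experimental market of Section~\ref{sec:exp}, six of the seven students strictly gain. This is what makes the result a reversal of Ergin and S\"onmez rather than a restatement of it.

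Your sketched strengthening would not work. Since $\da(\succ)$ is the student-optimal stable matching, any feasible $\nu$ that strictly Pareto-dominates it is unstable. So some student $i$ has a strict blocking pair, or an empty seat, at some $s\succ_i\nu_i$. If everyone ranks $\nu_k$ first, $i$ is admitted to $s$ in round 1 simply by ranking it first. Hence no EADAM-type $\nu$ can be supported with your template.

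More generally, order the students who win an appeal by the IA round in which they were first admitted. This shows that an upheld appeal always leaves some school over capacity. So any equilibrium outcome of $\ia\circ r_1$ that strictly improves on $\da(\succ)$ must be infeasible. The missing idea is that the gains have to come as capacity expansions through the appeal stage: the beneficiaries must deliberately not rank their target first, be rejected there at stage one, and then claim the seat on appeal. That is exactly what the $\mia$ construction organizes.
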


\begin{proof}
Our proof strategy is as follows: we construct an auxiliary one-shot mechanism, \emph{Modified IA} ($\mia$), that
runs like IA on truthful preferences but overbooks a rejected student whenever
she outranks an admitted student, mimicking a successful strict appeal within
the mechanism itself. We show that MIA weakly Pareto-dominates DA, using DA's stability, and then implement the MIA outcome as a
Nash equilibrium of $\ia\circ r_1$: regularly admitted students rank their MIA
school first, while overbooked students report truthfully and truncate at their
MIA school, so that they are rejected in the first stage and recover it through
an upheld appeal. The first stage of IA thus reproduces MIA's regular
admissions and the appeal stage its overbooking, so the equilibrium outcome
equals $\mia(\succ)\succsim\da(\succ)$.

\paragraph{Modified IA.}
The mechanism \(\mia\) proceeds in rounds, as IA does. In round
\(k\), every currently unassigned student applies to the \(k\)-th school on
her true preference list. Admissions are permanent. At each school \(s\),
regular seats are filled only while fewer than \(q_s\) students have been
regularly admitted to \(s\). If \(s\) has remaining regular capacity, it admits
the highest-priority current applicants up to the remaining regular capacity.
Once \(q_s\) students have been regularly admitted to \(s\), no further regular
admissions occur. Any further current applicant \(i\) is admitted by
overbooking if and only if $i\rhd_s j$
for some regularly admitted student \(j\) at \(s\). In that case, \(s\)'s
effective capacity increases by one. Students admitted within the regular
capacity are called \emph{regularly admitted}; students admitted through the
capacity-expansion rule are called \emph{overbooked}. All other applicants are
rejected and proceed to the next school on their list in the following round.

A useful consequence of this definition is that no student can be overbooked
at a school she ranks first. If \(s\) is her first choice, she applies to
\(s\) in round 1. If she is among the highest-priority applicants selected for
the regular seats, she is regularly admitted. If she is not, then all regularly
admitted students at \(s\) have higher priority than her, so she cannot trigger
overbooking. Hence she is rejected.
Let \(\mia(\succ)\) denote the outcome of this modified mechanism. We now present a useful property of MIA, which we call the Rejection Lemma.

\begin{lemma}
\label{lem:mia_rejection}
If student \(i\) is rejected from school \(s\) under \(\mia(\succ)\), then
every regularly admitted student at \(s\) under \(\mia(\succ)\) has strictly
higher priority at \(s\) than \(i\).
\end{lemma}

\begin{proof}
Student \(i\) is rejected from \(s\) only if all regular seats at \(s\) have
already been filled and \(i\) has lower priority than every regularly admitted
student at \(s\). Since regular admittees are never removed, and no further
regular admittees can be added after the \(q_s\) regular seats are filled,
every regularly admitted student at \(s\) under \(\mia(\succ)\) has strictly
higher priority at \(s\) than \(i\).
\end{proof}

By the definition of overbooking, if student \(i\) is overbooked at school
\(s\) under \(\mia(\succ)\), then there exists a regularly admitted student
\(j\) at \(s\) such that $i\rhd_s j$.

Having established the Rejection Lemma, we now show that MIA weakly Pareto-dominates DA.

\begin{lemma}
\label{lem:mia_dominates_da}
For every school choice problem with strict priorities,
\[
\mia(\succ)\succsim \da(\succ).
\]
\end{lemma}

\begin{proof}
Suppose, toward a contradiction, that some student is strictly better off under
\(\da(\succ)\) than under \(\mia(\succ)\). Let
\[
B=\{i\in I:\da_i(\succ)\succ_i \mia_i(\succ)\}.
\]
By assumption, \(B\neq\varnothing\).

For each \(i\in B\), let $s_i=\da_i(\succ)$.
Since \(s_i\succ_i \mia_i(\succ)\), student \(i\) applied to \(s_i\) under
\(\mia\) before receiving her \(\mia\)-assignment, and was rejected. Let
\(r_i\) be the round in which \(i\) applied to \(s_i\) under \(\mia\). Let
\(\rho_i\) be the round in which \(i\) receives her \(\mia\)-assignment, and
set \(\rho_i=\infty\) if \(i\) is unassigned under \(\mia\). Then $r_i<\rho_i$.

By Lemma~\ref{lem:mia_rejection}, every regularly admitted student at \(s_i\)
under \(\mia\) has strictly higher priority at \(s_i\) than \(i\). Since
\(i\) was rejected from \(s_i\), the \(q_{s_i}\) regular seats at \(s_i\) had
already been filled. Let \(D(s_i)\) denote the set of these \(q_{s_i}\)
regular admittees.

Since \(i\notin D(s_i)\), \(\da_i(\succ)=s_i\), and school \(s_i\) has only
\(q_{s_i}\) seats under DA, at most \(q_{s_i}-1\) students from \(D(s_i)\) can
be assigned to \(s_i\) under \(\da(\succ)\). Hence there exists some student \(h\in D(s_i)\)
who is not assigned to \(s_i\) under \(\da(\succ)\).

Because \(h\rhd_{s_i} i\) and \(i\) is assigned to \(s_i\) under
\(\da(\succ)\), stability of \(\da(\succ)\) implies that
$\da_h(\succ)\succ_h s_i$.
Otherwise, \((h,s_i)\) would be a blocking pair of \(\da(\succ)\). But
\(h\)'s \(\mia\)-assignment is \(s_i\), since \(h\) is regularly admitted to
\(s_i\) under \(\mia\). Therefore \(\da_h(\succ)\succ_h \mia_h(\succ)\),
so \(h\in B\).
Moreover, \(h\) was regularly admitted to \(s_i\) before or at the round in
which \(i\) was rejected from \(s_i\). Hence $\rho_h\le r_i<\rho_i$.

This shows that from any \(i\in B\), we can construct another student
\(h\in B\) with \(\rho_h<\rho_i\).
If all students in \(B\) had \(\rho_i=\infty\), this argument would produce a
student in \(B\) with finite \(\rho_h\), a contradiction. Hence \(B\) contains
at least one student with finite \(\rho_i\). Choose \(i^*\in B\) with minimal
finite \(\rho_{i^*}\). Applying the argument above to \(i^*\) produces
\(h\in B\) with finite \(\rho_h<\rho_{i^*}\), contradicting the minimality of
\(\rho_{i^*}\). Therefore \(B=\varnothing\), and \(\mia(\succ)\succsim \da(\succ)\).
\end{proof}

\paragraph{Constructing the report profile.} Having established that MIA weakly Pareto dominates DA, we proceed to construct the equilibrium strategies that deliver our result.

Partition students according to their status under \(\mia(\succ)\):
\begin{itemize}
\item \(D\): regularly admitted students;
\item \(O\): overbooked students;
\item \(U\): unassigned students.
\end{itemize}

Define the strategy profile \(\sigma^*\) as follows:
\begin{itemize}
\item If \(i\in D\) and \(\mia_i(\succ)=s\), then \(i\) ranks \(s\) first,
followed by the remaining schools in her true order.
\item If \(i\in O\) and \(\mia_i(\succ)=s\), then \(i\) reports truthfully
down to \(s\), truncating her list after \(s\).
\item If \(i\in U\), then \(i\) reports truthfully.
\end{itemize}

The general construction permits truncation. The laboratory interface requires
complete rank-order lists, but the particular experimental environment supports
the relevant equilibrium with complete lists, as shown in
Section~\ref{sec:exp}.

\paragraph{Implementation.}	
We show that
\[
(\ia\circ r_1)(\sigma^*)=\mia(\succ).
\]

Fix a school \(s\), and let \(D_s\) denote the set of students regularly admitted to \(s\) under \(\mia\). Every student in \(D_s\) ranks \(s\) first under \(\sigma^*\), so every student in \(D_s\) applies to \(s\) in round 1 of IA.
We claim that the round-1 admits at \(s\) under \(\ia(\sigma^*)\) are exactly the students in \(D_s\).

First suppose that \(|D_s|<q_s\). Then no student outside \(D_s\) ranks \(s\) first under \(\sigma^*\). To see this, suppose some student \(i\notin D_s\) also ranks \(s\) first under \(\sigma^*\). If \(i\in D\), then \(i\) ranks her
own \(\mia\)-school first, so she cannot rank \(s\) first unless she belongs to \(D_s\), a contradiction. If \(i\in O\) or \(i\in U\), then \(s\) is also her first choice under \(\succ_i\). Hence \(i\) applied to \(s\) in round 1 under
\(\mia\). 

Since \(|D_s|<q_s\), school \(s\) never filled its regular capacity under \(\mia\). Hence a student applying to \(s\) in round 1 could not have been rejected from \(s\), and would have been regularly admitted, again a
contradiction.

Therefore the only round-1 applicants to \(s\) under \(\sigma^*\) are the students in \(D_s\), and they are admitted.

Now suppose that \(|D_s|=q_s\). Consider any student \(i\notin D_s\) who also applies to \(s\) in round 1 under \(\sigma^*\). There are three cases.

First, suppose \(i\in O\) and \(\mia_i(\succ)=s\). This is impossible. If \(s\) were \(i\)'s first-ranked school under \(\sigma_i^*\), then \(s\) would also be \(i\)'s first choice under \(\succ_i\). But no student can be overbooked at her first-choice school under \(\mia\), as observed above.

Second, suppose \(i\in O\), \(\mia_i(\succ)=t\neq s\), and \(s\succ_i t\). Then \(i\) was rejected from \(s\) under \(\mia\). By Lemma~\ref{lem:mia_rejection}, every student in \(D_s\) has strictly higher priority at \(s\) than \(i\).

Third, suppose \(i\in U\) and \(s\) is her first choice. Then \(i\) was
rejected from \(s\) under \(\mia\). Again, by Lemma~\ref{lem:mia_rejection},
every student in \(D_s\) has strictly higher priority at \(s\) than \(i\).

Thus every student in \(D_s\) has higher priority at \(s\) than every
additional round-1 applicant to \(s\) under \(\sigma^*\). Since
\(|D_s|=q_s\), the students in \(D_s\) are exactly the round-1 admits at \(s\).

Now consider \(i\in O\), and let \(\mia_i(\succ)=s\). Under
\(\sigma_i^*\), student \(i\) applies to all schools she truly prefers to
\(s\), and then to \(s\). At every school \(t\succ_i s\), student \(i\) was
rejected under \(\mia\). By Lemma~\ref{lem:mia_rejection}, every regular
admittee at \(t\) has strictly higher priority than \(i\). Since the round-1
IA admits at \(t\) are exactly the students in \(D_t\), student \(i\) is
rejected from every such \(t\) in the IA first stage.

At \(s\), the regular seats are held by \(D_s\), so \(i\) is rejected in the
first stage. Her list is truncated at \(s\), so she applies to no further
school. Since \(i\) is overbooked at \(s\) under \(\mia\), there exists
\(j\in D_s\) such that
$i\rhd_s j$.

This student \(j\) occupies \(s\) in the IA first stage. Therefore \((i,s)\)
is a strict blocking pair of the first-stage outcome, and \(r_1\) upholds
\(i\)'s appeal. Hence \(i\) obtains \(s\), exactly as under \(\mia\).

Finally, consider \(i\in U\). Student \(i\) was rejected from every school
under \(\mia\). By Lemma~\ref{lem:mia_rejection}, every regular admittee at
every school \(t\) has strictly higher priority than \(i\). Since the
first-stage IA seats at \(t\) are filled by the students in \(D_t\), student
\(i\) is rejected everywhere in the first stage and has no successful strict
appeal. Hence \(i\) remains unassigned, as under \(\mia\).

We conclude that \((\ia\circ r_1)(\sigma^*)=\mia(\succ)\), as desired.

\paragraph{\(\sigma^*\) is a Nash equilibrium.}

Fix a student \(i\), and consider any school \(t\) such that $t\succ_i \mia_i(\succ)$.
Student \(i\) was rejected from \(t\) under \(\mia\). By
Lemma~\ref{lem:mia_rejection}, every regular admittee of \(t\) has strictly
higher priority at \(t\) than \(i\). Under \(\sigma^*_{-i}\), these regular
admittees rank \(t\) first and fill the first-stage IA seats at \(t\),
independently of \(i\)'s report. Since IA admissions are permanent, \(i\)
cannot obtain \(t\) directly, regardless of when she applies to \(t\).
Nor can \(i\) obtain \(t\) through a strict appeal. The first-stage occupants
of \(t\) are precisely the students in \(D_t\), all of whom have strictly
higher priority at \(t\) than \(i\). Thus \((i,t)\) is not a strict blocking
pair, and \(i\)'s appeal to \(t\) fails.
Therefore no deviation allows \(i\) to obtain a school she strictly prefers to
\(\mia_i(\succ)\). Since \(i\) was arbitrary, \(\sigma^*\) is a Nash
equilibrium of \(\ia\circ r_1\).

Combining the implementation result with Lemma~\ref{lem:mia_dominates_da},
we obtain
\[
(\ia\circ r_1)(\sigma^*)=\mia(\succ)\succsim\da(\succ).
\]
If \((\ia\circ r_1)(\sigma^*)\neq \da(\succ)\), then Pareto dominance together
with strict preferences implies that at least one student is strictly better
off. This proves the theorem.
\end{proof}

Theorem~\ref{thm:ia_r1_dominates_da} is an equilibrium-existence result: it shows that once strict appeals are
available, IA admits an equilibrium whose outcome weakly Pareto-dominates the
truthful DA outcome. It does not say that IA with appeals is always better
than DA, nor that every equilibrium of IA with appeals has this property.
This distinction is important. The point is not that IA becomes uniformly
superior once appeals are introduced. Rather, the theorem shows that appeals
change the comparison between mechanisms in a fundamental way. In the standard
one-stage model, IA is strategically fragile relative to
DA's efficiency. With appeals, however, some of the risk that generates ``defensive'' ranking
under IA is shifted to the second stage. A student who is rejected from a
preferred school may still obtain it if the rejection creates a strict priority
claim. Thus appeals act as partial insurance against one of IA's central
strategic costs.

Together, Proposition~\ref{thm:ia_manipulable} and Theorem~\ref{thm:ia_r1_dominates_da} suggest that appeals are not a small
institutional detail added after the mechanism has done its work. They change
both the feasible final outcomes and the strategic environment in which
families submit rankings. Consequently, a model that compares DA and IA without
appeals may miss an important part of the institutional trade-off.

In Appendix~\ref{app:randomized}, we extend the reversal result to the
probabilistic rule \(r_2\). Let \(\mu=\da(\succ)\), and let each student
report her true ranking truncated immediately after her assignment under
\(\mu\). For every school-choice problem with strict priorities and every
\(p\in(0,1)\), this DA-cutoff profile is an sd-Nash equilibrium of
\(\ia\circ r_2\), and its induced lottery sd-Pareto-dominates the
degenerate truthful DA outcome. The equilibrium notion is ordinal: no
unilateral deviation yields a stochastically dominant lottery.
Probabilistic appeals therefore preserve the reversal in
stochastic-dominance terms while turning otherwise unsuccessful
applications into lottery tickets.

\subsection{Hypotheses}

The theoretical analysis yields predictions concerning reporting behavior,
student welfare, and the number of blocking pairs.

\begin{hypo}
\label{hyp:truth}
{\normalfont Truth-telling
(Propositions~\ref{thm:ia_manipulable} and
\ref{thm:da_manipulable}).}
\begin{enumerate}[label=(1.\Alph*)]
\item Truth-telling is more frequent under \(\ia\circ r_1\) than under
\(\ia\circ r_0\), and more frequent under \(\ia\circ r_2\) than under
\(\ia\circ r_0\).

\item The effect of strict and probabilistic appeals on truth-telling is
smaller under DA than under IA.
\end{enumerate}
\end{hypo}

\begin{hypo}
\label{hyp:reversal}
{\normalfont Efficiency
(Theorem~\ref{thm:ia_r1_dominates_da}).}
Strict appeals improve student welfare more under IA than under DA.
The same directional comparison holds in stochastic-dominance terms under
the probabilistic rule.
\end{hypo}

The welfare prediction is based on the equilibrium constructed in
Theorem~\ref{thm:ia_r1_dominates_da}. It is therefore an equilibrium-selection
prediction rather than a claim that every equilibrium of IA with appeals
dominates the truthful DA outcome.

\begin{hypo}
\label{hyp:stability}
{\normalfont Blocking pairs
(Lemma~\ref{prop:pareto}).}
Strict appeals reduce the incidence of strict blocking pairs under IA.
They do not improve this outcome under DA, which already produces an
assignment without strict blocking pairs.
\end{hypo}

We do not formulate an analogous hypothesis for probabilistic
appeals. Unlike the strict rule, the probabilistic rule does not guarantee the
elimination of blocking pairs.

\section{Experiment}\label{sec:exp}

To test our hypotheses, we conduct a laboratory experiment in the spirit of the seminal school-choice experiment of \cite{chen2006school}.\footnote{For a comprehensive review of the experimental literature on school choice, see \cite{hakimov2021experiments}.}
Participants take part in a simulated school-choice market consisting of seven students and three schools, each with two seats.
Preferences and priorities are as follows:

\begin{center}
\begin{tabular}{CCCCCCC|CCC}
\succ_{i_1} & \succ_{i_2} & \succ_{i_3} & \succ_{i_4} & \succ_{i_5} & \succ_{i_6} & \succ_{i_7}
& \unrhd_{s_1} & \unrhd_{s_2} & \unrhd_{s_3} \\
\hline
s_3 & s_3 & s_2 & s_1 & s_1 & s_2 & s_3
& i_6,i_7 & i_4 & i_4,i_5 \\
s_2 & s_2 & s_1 & s_3 & s_2 & s_3 & s_2
& i_5 & i_1,i_2,i_3,i_5 & i_3,i_6 \\
s_1 & s_1 & s_3 & s_2 & s_3 & s_1 & s_1
& i_2,i_3,i_4 & i_6,i_7 & i_2,i_7 \\
\cdot & \cdot & \cdot & \cdot & \cdot & \cdot & \cdot
& i_1 & \cdot & i_1 \\
\end{tabular}
\end{center}

Comma-separated students have equal underlying priority, with ties broken in favor of the lower-indexed student. Participants receive \(15\), \(10\), or \(5\) points when assigned to their first, second, or third choice, respectively, and \(0\) points when unmatched.

We use a \(2\times3\) factorial between-subjects design.
The first factor varies the first-stage assignment mechanism between immediate acceptance (\(\ia\)) and deferred acceptance (\(\da\)).
The second factor varies the appeal rule between no appeals (\(r_0\)), strict appeals (\(r_1\)), and probabilistic appeals (\(r_2\)).
Under \(r_1\), an appeal is upheld only when the appellant has a strict priority claim.
Under \(r_2\), strict priority claims are upheld with certainty, while all other appeals are upheld independently with probability \(p=0.1\).
The six treatments are therefore
\begin{eqnarray*}
\da\circ r_0,\qquad
\da\circ r_1,&\qquad
\da\circ r_2,\qquad\\
\ia\circ r_0,\qquad
\ia\circ r_1,&
\ia\circ r_2.
\end{eqnarray*}

Participants are randomly assigned to one of the six treatments for the duration of the experiment.
The experiment proceeds for ten periods.
In each period, sessions of either 14 or 28 participants were partitioned into seven-person markets. Within each market, the seven student types were assigned one-to-one at random. Participants were rematched across periods within their session. Because rematching occurred across the entire session, each session constitutes one independent observation (matching group), and all standard errors are clustered at the matching group level.
%We preregistered twelve independent 14-participant sessions per treatment. In implementation, some sessions contained 28 rather than 14 participants. Because participants were rematched across the entire session, each 28-participant session constitutes one independent observation rather than two. 
We collected between 9 and 12 independent observations per treatment and 59 independent sessions in total.

Participants have complete information about the preferences of all student types and the priorities of all schools.
They do not, however, observe the rank-order lists submitted by the other participants.
Each participant submits a complete rank-order list over the three schools; truncation and ranking the outside option are not permitted.
Under truthful reporting, the relevant first- and second-stage assignments are
\[
\begin{array}{rcl}
\da(\succ)=(\da\circ r_1)(\succ)
&=&
(s_1:\{i_6,i_7\};\
s_2:\{i_1,i_2\};\
s_3:\{i_4,i_5\};\
s_\emptyset:\{i_3\}), \\[0.5ex]
\ia(\succ)
&=&
(s_1:\{i_4,i_5\};\
s_2:\{i_3,i_6\};\
s_3:\{i_2,i_7\};\
s_\emptyset:\{i_1\}), \\[0.5ex]
(\ia\circ r_1)(\succ)
&=&
(s_1:\{i_4,i_5\};\
s_2:\{i_1,i_3,i_6\};\
s_3:\{i_2,i_7\}).
\end{array}
\]
Thus, strict appeals leave the truthful DA assignment unchanged in this market, whereas under IA they allow \(i_1\) to obtain a seat at \(s_2\).
For this preference profile, truthful reporting is a Nash equilibrium of the
preference revelation game induced by \(\ia\circ r_1\).
Students \(i_2,\ldots,i_7\) obtain their first-choice schools, while \(i_1\)
obtains her second choice, \(s_2\), through a successful strict appeal.
The only school that \(i_1\) strictly prefers to \(s_2\) is \(s_3\), whose
two seats are filled in the first round by \(i_2\) and \(i_7\), both of whom
have strictly higher priority than \(i_1\).
No report can therefore give \(i_1\) a strictly better final assignment.
This equilibrium is supported by complete rank-order lists and hence does not
rely on truncation.

In the appeal treatments, participants observe the first-stage assignment and may appeal to any school that they ranked above their assignment and from which they were rejected.
Filing an appeal carries no monetary fee but requires completing a short typing task.
This small non-monetary friction is identical under the strict and probabilistic appeal rules.
It makes each appeal an active and observable choice without changing either the substantive grounds on which it is evaluated or its probability of success.
This feature deliberately differs from the theoretical benchmark.
In the experiment, participants instead choose whether to appeal and, when several appeal opportunities are available, which claims to pursue.
Appeal take-up is therefore an additional behavioral outcome of the experiment.

Because participants are rematched across the entire 14- or 28-person session, each session (matching group) constitutes one independent observation.
We cluster all standard errors at the session level.

Experimental earnings are denominated in points and converted at the rate \(1\) point \(=\pounds 0.05\).
At the end of the experiment, two of the ten periods are randomly selected for payment, and participants receive the sum of their earnings in those periods together with a \pounds 5 show-up fee.
Before entering the simulated school-choice market, participants must correctly answer a set of comprehension questions.
After the ten periods, they complete a short incentivized risk-elicitation task, a questionnaire concerning their appeal decisions in the appeal treatments, and a sociodemographic survey.

%We preregistered a target of twelve independent observations of fourteen participants per treatment, corresponding to 168 participants per treatment and 1,008 participants in total. This target was chosen ex ante to achieve statistical power of \(0.90\). In implementation, some matching groups contained 28 rather than 14 participants. Because rematching occurs across the entire group, these larger groups count as one independent observation.
We collected between 9 and 12 independent observations per treatment, i.e., 168 participants in each treatment, except for \(\ia\circ r_2\), which included 182 participants.
The experiment therefore involved 1,022 participants across 59 independent matching groups.
Following the session records, we exclude 25 observations recorded after six participants had exited their sessions.
The resulting analysis sample contains \(10{,}195\) participant-period observations.

The experiment received ethics approval from the institutional review board at City St George's, University of London, and was preregistered on the Open Science Framework (\url{https://osf.io/q92de/overview}).
The experiment was conducted at EssexLab at the University of Essex.
%The data and replication materials are available at \textbf{\textcolor{red}{XXX}}.

\section{Results}
\label{sec:results}

We begin by presenting results on individual-level outcomes (truth-telling and appeals) and then discuss market-level outcomes (efficiency and stability). We conclude with a discussion of the trade-offs observed in all treatments.

\subsection{Truth-telling}\label{sec:res_truth}

We define truth-telling as exactly submitting the induced preference ranking. Figure~\ref{fig:truth_telling_treatment} reveals a clear asymmetry in how appeals affect truth-telling under the two mechanisms. Without appeals, truth-telling is \(0.29\) under DA and \(0.28\) under IA, a difference we cannot distinguish from zero (\(p=0.626\)); we note that this is a failure to reject rather than evidence of equivalence. This baseline pattern is striking from a theory point of view, given DA's strategy-proofness, but it is in line with other laboratory experiments where substantial manipulation has been observed under DA, among both student subjects and parents \citep{freer2026experimental,cerrone2024consent}. Appeals increase truth-telling under IA. The percentage of truthful reports rises from \(28\) without appeals to \(39\) under both strict and probabilistic appeals. The corresponding increases under DA are considerably smaller, from \(29\) without appeals to \(32\) under strict appeals and \(33\) under probabilistic appeals.

%YH: I created a new plot and placed averages above the CIs for readibility.
\begin{figure}[htbp]
\centering
\includegraphics[width=\textwidth]{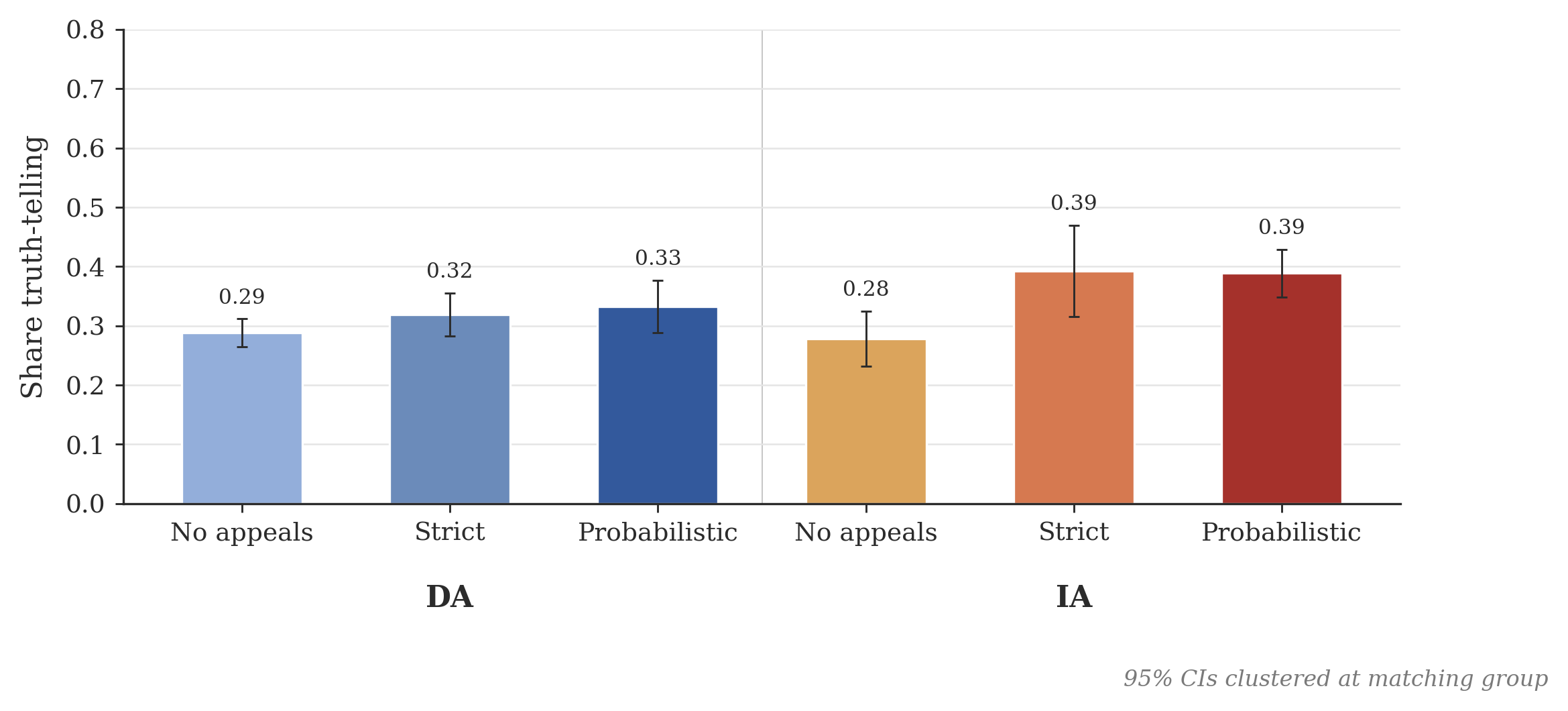}
\caption{Truth-telling by treatment.\\ \footnotesize Bars show the share of participants submitting the complete induced preference ranking. Whiskers are 95\% confidence intervals clustered at the matching-group level.}
\label{fig:truth_telling_treatment}
\end{figure}

To test these treatment differences more formally and account for changes over repeated play, including learning, Table~\ref{tab:truth_appeals} reports average marginal effects from logit regressions with progressively richer controls. Specification~(1) includes treatment indicators only. Specification~(2) adds round effects. Specification~(3) adds participant-type effects, and Specification~(4) adds age, gender, and risk aversion. Standard errors are clustered at the matching-group level.

\begin{table}[htbp]
\centering
\caption{Impact of appeals on truth-telling.}
\label{tab:truth_appeals}
\resizebox{\textwidth}{!}{\begin{threeparttable}
\begin{tabular*}{\textwidth}{@{\extracolsep{\fill}} lcccc @{}}
\toprule
& \((1)\) & \((2)\) & \((3)\) & \((4)\)\\
\midrule
\multicolumn{5}{l}{\textbf{Panel A. IA treatments}}\\
IA strict
& \(0.11^{***}\) & \(0.11^{***}\) & \(0.11^{***}\) & \(0.12^{***}\)\\
& {\footnotesize \((0.04)\)} & {\footnotesize \((0.04)\)} & {\footnotesize \((0.04)\)} & {\footnotesize \((0.04)\)}\\
IA probabilistic
& \(0.11^{***}\) & \(0.11^{***}\) & \(0.11^{***}\) & \(0.11^{***}\)\\
& {\footnotesize \((0.03)\)} & {\footnotesize \((0.03)\)} & {\footnotesize \((0.03)\)} & {\footnotesize \((0.03)\)}\\
\addlinespace
\multicolumn{5}{l}{\textbf{Panel B. DA treatments}}\\
DA strict
& \(0.03\) & \(0.03\) & \(0.03\) & \(0.03^{*}\)\\
& {\footnotesize \((0.02)\)} & {\footnotesize \((0.02)\)} & {\footnotesize \((0.02)\)} & {\footnotesize \((0.02)\)}\\
DA probabilistic
& \(0.04^{*}\) & \(0.04^{*}\) & \(0.04^{*}\) & \(0.04^{*}\)\\
& {\footnotesize \((0.02)\)} & {\footnotesize \((0.02)\)} & {\footnotesize \((0.02)\)} & {\footnotesize \((0.02)\)}\\
\addlinespace
\multicolumn{5}{l}{\textbf{Panel C. Difference in appeal effects between IA and DA}}\\
Strict appeals
& \(0.08^{*}\) & \(0.08^{*}\) & \(0.08^{*}\) & \(0.09^{*}\)\\
& {\footnotesize \((0.05)\)} & {\footnotesize \((0.05)\)} & {\footnotesize \((0.05)\)} & {\footnotesize \((0.05)\)}\\
Probabilistic appeals
& \(0.07^{*}\) & \(0.07^{*}\) & \(0.07^{*}\) & \(0.07^{*}\)\\
& {\footnotesize \((0.04)\)} & {\footnotesize \((0.04)\)} & {\footnotesize \((0.04)\)} & {\footnotesize \((0.04)\)}\\
\addlinespace
Round
& No & Yes & Yes & Yes\\
Participant type
& No & No & Yes & Yes\\
Demographics + risk att.
& No & No & No & Yes\\
\midrule
Observations
& \(10{,}195\) & \(10{,}195\) & \(10{,}195\) & \(10{,}185\)\\
Matching groups
& \(59\) & \(59\) & \(59\) & \(59\)\\
\bottomrule
\end{tabular*}
\begin{tablenotes}
\footnotesize
\item \(^{***}\ p<0.01\); \(^{**}\ p<0.05\); \(^{*}\ p<0.10\). Coefficients are reported as average marginal effects from logit regressions. Standard errors are clustered at the matching-group level. Truth-telling equals one when the participant submits the complete induced preference ranking. The reference categories in Panels~A and B are the corresponding mechanisms without appeals. Panel~C reports \((\text{IA appeal}-\text{IA no appeals})-(\text{DA appeal}-\text{DA no appeals})\), so positive values indicate that appeals increase truth-telling more under IA. Each observation is a participant-round. Specification~(4) excludes ten participant-rounds with missing demographic or risk-attitude data. Matching groups are experimental sessions. Demographic controls are gender and age. Risk att. refers to risk attitudes.
\end{tablenotes}
\end{threeparttable}}
\end{table}

Under IA, strict and probabilistic appeals increase truth-telling by approximately \(11\) percentage points -- an effect that is highly significant and remains nearly unchanged across specifications. Under DA, the estimated
increases are only \(3\) to \(4\) percentage points and are at most marginally significant.

These estimates rely on cluster-robust standard errors computed from nine to twelve matching groups per arm, a range in which such standard errors are known to over-reject. Table~\ref{tab:nonparametric} (Appendix~\ref{app:additional-results}) therefore repeats the comparisons
without distributional assumptions, treating each matching group as a single observation. The two approaches agree for probabilistic appeals under IA, where truth-telling rises by roughly \(10\) percentage points at the session level
(\(p=0.006\)). They do not agree for strict appeals under IA: the estimated shift is of similar magnitude (\(0.09\)), but with nine sessions per arm the rank test does not reject (\(p=0.122\)). The same holds for both DA appeal
rules (\(p=0.122\) and \(p=0.119\)). We therefore regard the effect of probabilistic appeals on truth-telling under IA as firmly established, and the effect of strict appeals as indicative but resting on the parametric specification.

Panel~C directly compares the appeal effects across mechanisms. The effect of strict appeals is \(8\) to \(9\) percentage points larger under IA than under DA, while the effect of probabilistic appeals is approximately \(7\) percentage
points larger. Both differences are marginally significant across specifications.
 
This comparison is a difference-in-differences and therefore has no direct rank-test counterpart: Mann--Whitney compares two samples and cannot test a difference between two differences. Panel~B of Table~\ref{tab:nonparametric}
reports the level comparison at each regime, which is informative but not the same quantity. There, IA and DA are indistinguishable without appeals (\(p=0.626\)) and under strict appeals (\(p=0.453\)), and differ under probabilistic appeals (\(p=0.027\)).

This asymmetry is consistent with the mechanism identified by the theory. Under IA, appeals reduce the risk associated with ranking an ambitious school first because an unjustified rejection may be corrected afterwards. Under DA, truthful reporting is already a dominant strategy, leaving less scope for appeal rights to change initial reports.

As a robustness check, we perform the same analysis when only the first reported school was the same as the induced first preference. The same qualitative conclusions emerge (see Appendix~\ref{app:additional-results}).

\begin{result}
The data support Hypothesis~\ref{hyp:truth}.A and provide partial support for Hypothesis~\ref{hyp:truth}.B. Under IA, both appeal rules raise truth-telling by approximately \(11\) percentage points in the regression estimates. At the
matching-group level this increase is confirmed for probabilistic appeals (\(p=0.006\)) but not for strict appeals (\(p=0.122\)), so the evidence for the strict rule is weaker than the regression alone suggests. Under DA, the effects are only \(3\) to \(4\) percentage points and are at most marginally significant.
\end{result}

\paragraph{Consequential non-truthful reports.}

Exact truth-telling treats every departure from the induced ranking alike, even when it has no effect on the participant's assignment. We therefore replace each non-truthful report with the participant's induced ranking, hold all other reports fixed, and rerun the first-stage mechanism. We classify a report as inconsequential if the assignment is unchanged, beneficial if the observed report yields more points, and harmful if it yields fewer points.\footnote{We evaluate the counterfactual at the first stage because a different report generally changes the set of available appeals. A final-outcome comparison would require assumptions about counterfactual appeal choices and, under the probabilistic rule, unobserved appeal draws.}

The contrast between the mechanisms is pronounced. Under DA, only about \(9\) percent of non-truthful reports are consequential without appeals, \(8\) percent under strict appeals, and \(7\) percent under probabilistic appeals. Every consequential DA report is harmful. Low truth-telling under DA therefore largely reflects reports that do not affect assignments.
Under IA, by contrast, almost two thirds of non-truthful reports are consequential in every treatment. Yet participants exploit IA's manipulability poorly: among consequential reports, about \(40\) percent are beneficial and \(60\) percent are harmful.

Appeals mainly affect whether participants misreport at all. They increase truth-telling under IA, but do not make the remaining non-truthful reporters more successful at manipulating the mechanism.

\subsection{Appeal take-up}\label{sec:res_appeals}

We next examine when participants make use of the appeal stage and whether they direct appeals toward claims that can improve their payoff.

%YH: I created a new plot and placed averages above the CIs for readibility.
\begin{figure}[htbp]
\centering
\includegraphics[width=\textwidth]{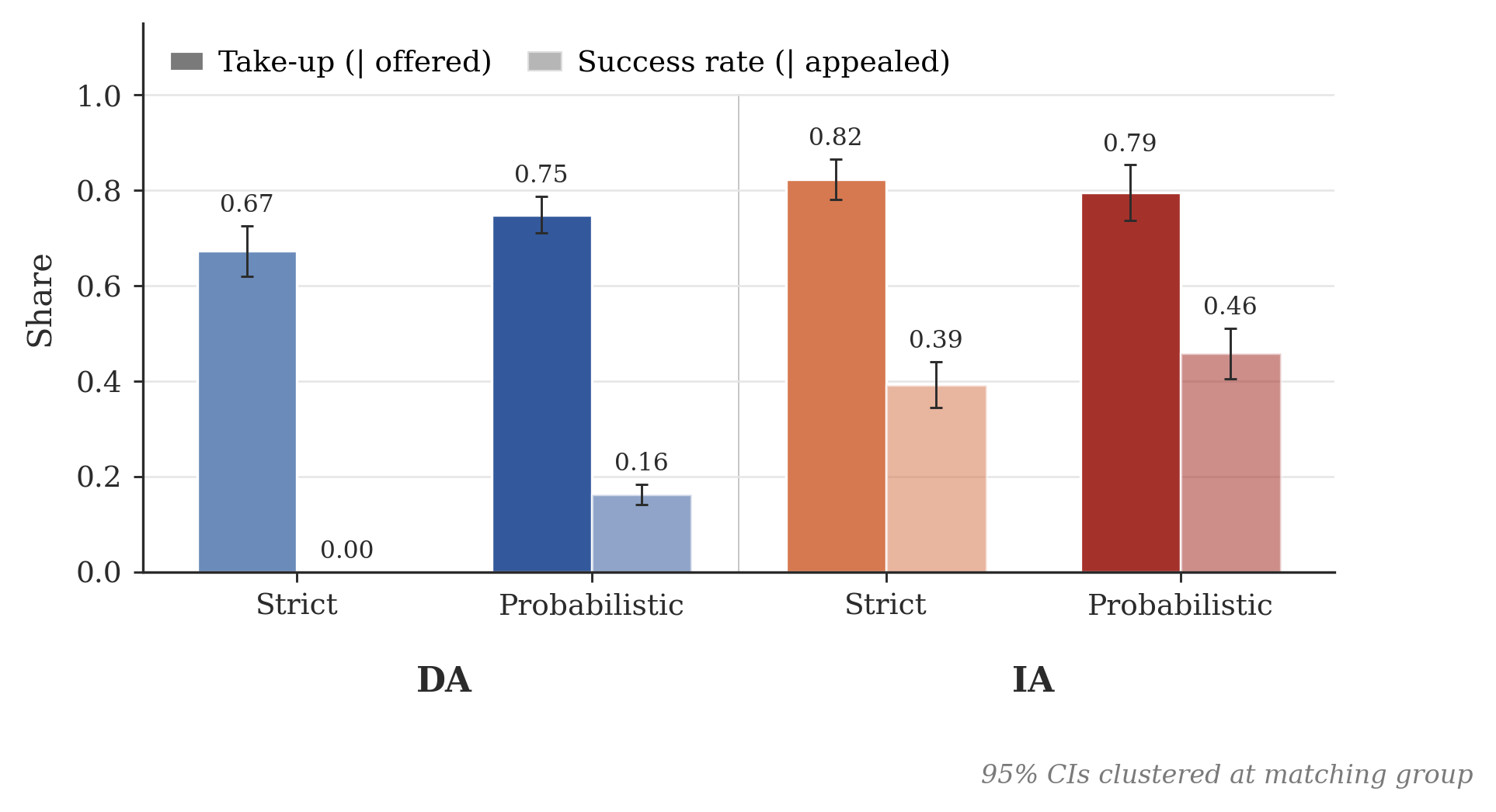}
\caption{Appeal take-up and success by treatment.\\ \footnotesize Dark bars show the share of eligible participant-rounds in which at least one appeal is filed. Light bars show the share of appealing participant-rounds in which at least one appeal succeeds. Whiskers are 95\% confidence intervals clustered at the matching-group level.}
\label{fig:appeal_combined}
\end{figure}

Appeal take-up is high in all four treatments. Participants file more appeals under DA than under IA, averaging between \(6\) and \(7\) appeals over ten rounds under DA compared with about \(4\) under IA, but this reflects the number of opportunities rather than a greater willingness to appeal. DA
produces worse first-stage assignments and therefore roughly twice as many school-specific appeal opportunities, approximately \(11\) per participant compared with \(6\) under IA. Conditional on being offered an opportunity,
take-up is in fact higher under IA. Under strict appeals, at least one appeal is filed in \(67\) percent of eligible participant-rounds under DA and \(82\) percent under IA (\(p=0.005\)). Under probabilistic appeals the corresponding rates are \(75\) and \(80\) percent, a difference the session-level test does not resolve
(\(p=0.103\)). The two observations are consistent: DA yields more appeals overall because it creates more occasions to appeal, while IA yields a higher rate of appeal per occasion.
 
The central difference between DA and IA is whether those appeals eventually work. Under strict appeals no appeal can succeed under DA, because DA creates no strict priority violation; this is a property of the design rather than an
estimated quantity, and we therefore do not test it. Under IA, at least one appeal succeeds in \(39\) percent of appealing participant-rounds. Under
probabilistic appeals the success rates are \(16\) percent under DA and \(46\) percent under IA (\(p<0.001\)). Appeals under DA therefore rely entirely on the random channel, while IA frequently creates priority claims that the appeal stage can correct. Overall, IA produces fewer appeals, but substantially more
effective ones.\footnote{Success is measured at the participant-round level and equals one when at least one filed appeal succeeds. Because a participant may file several probabilistic appeals, each with success probability \(0.1\), the probability that at least one succeeds can exceed \(10\) percent.}

These aggregate rates do not reveal whether participants select carefully among the appeal opportunities available to them. Table~\ref{tab:appeal_use} therefore examines which features of an appeal predict whether it is lodged. Specification~(1) is estimated at the participant-round level and considers whether at least one appeal is filed. Specification~(2) is estimated at the school-specific opportunity level and considers whether a particular rejection is appealed.

%YH: I think these specifications are too complex. I'd rather have a simple one where we regress only appeals on treatment, and a second one with controls. We would do this for both appeals and take-up, yielding four specifications.

\begin{table}[htbp]
\centering
\caption{Determinants of appeal use.}
\label{tab:appeal_use}
\resizebox{\textwidth}{!}{\begin{threeparttable}
\begin{tabular*}{\textwidth}{@{\extracolsep{\fill}} lcc @{}}
\toprule
& Any appeal & Appeal opportunity taken \\
& \((1)\) & \((2)\) \\
\midrule
DA, probabilistic appeals
& \(0.08^{**}\) & \(0.05\) \\
& {\footnotesize \((0.03)\)} & {\footnotesize \((0.03)\)} \\
IA, strict appeals
& \(0.11^{***}\) & \(0.03\) \\
& {\footnotesize \((0.04)\)} & {\footnotesize \((0.04)\)} \\
IA, probabilistic appeals
& \(0.08^{**}\) & \(0.05\) \\
& {\footnotesize \((0.04)\)} & {\footnotesize \((0.04)\)} \\
\addlinespace
First-stage assignment: second choice
& \(0.31^{***}\) & \(0.23^{**}\) \\
& {\footnotesize \((0.09)\)} & {\footnotesize \((0.09)\)} \\
First-stage assignment: third choice
& \(0.49^{***}\) & \(0.40^{***}\) \\
& {\footnotesize \((0.09)\)} & {\footnotesize \((0.08)\)} \\
First-stage assignment: unassigned
& \(0.51^{***}\) & \(0.44^{***}\) \\
& {\footnotesize \((0.10)\)} & {\footnotesize \((0.08)\)} \\
Strict priority claim
& \(-0.01\) & \(0.07^{***}\) \\
& {\footnotesize \((0.02)\)} & {\footnotesize \((0.03)\)} \\
Number of appeal opportunities
& \(0.11^{***}\) & \(\cdot\) \\
& {\footnotesize \((0.03)\)} & \\
Appeal target: second choice
& \(\cdot\) & \(-0.05^{***}\) \\
& & {\footnotesize \((0.01)\)} \\
Appeal target: third choice
& \(\cdot\) & \(-0.10^{***}\) \\
& & {\footnotesize \((0.02)\)} \\
\midrule
Round fixed effects
& Yes & Yes \\
Observations
& \(2{,}912\) & \(5{,}725\) \\
Matching groups
& \(41\) & \(41\) \\
\(R^2\)
& \(0.16\) & \(0.03\) \\
\bottomrule
\end{tabular*}
\begin{tablenotes}
\footnotesize
\item \(^{***}\ p<0.01\); \(^{**}\ p<0.05\); \(^{*}\ p<0.10\). Coefficients are reported from linear probability models. Standard errors are clustered at the matching-group level. The reference category is DA with strict appeals. Column~\((1)\) contains one observation for each participant-round with at least one appeal opportunity. Column~\((2)\) contains one observation for each school-specific appeal opportunity. Only the four appeal treatments are included, so the sample contains \(41\) matching groups. Matching groups are experimental sessions. First-stage assignments and appeal targets are ranked using the induced true preferences. In Column~\((1)\), strict priority claim indicates that at least one available appeal is supported by a strict priority violation.
\end{tablenotes}
\end{threeparttable}}
\end{table}

The first-stage outcome is the strongest predictor of appeal use. Relative to receiving the true first-choice school, receiving the second choice raises the probability of filing an appeal by \(0.31\), receiving the third choice raises it by \(0.49\), and remaining unassigned raises it by \(0.51\). Each additional appeal opportunity raises the probability of filing at least one appeal by \(0.11\). Participants therefore respond strongly to how poorly they fare in the first stage.

They also respond to the quality of individual claims. Conditional on being offered a particular appeal opportunity, a strict priority claim is \(7\) percentage points more likely to be pursued. Appeals to the true second- and third-choice schools are respectively \(5\) and \(10\) percentage points less likely to be lodged than appeals to the true first choice. Once these features are held fixed, the treatment indicators have little effect at the school-specific opportunity level. The aggregate differences across treatments therefore arise principally from the kinds of appeal opportunities the mechanisms generate.

Appeal use is nevertheless imperfect. Under DA with strict appeals, none of the \(1{,}845\) appeal opportunities can succeed, yet participants pursue \(1{,}083\) of them. Under IA with strict appeals, participants pursue \(166\) of \(233\) sure, payoff-improving claims, but also lodge \(465\) appeals with no chance of success. Under IA with probabilistic appeals, they pursue \(178\) of \(256\) sure gains and \(519\) of \(785\) positive-value lottery tickets. Participants are more likely to pursue probabilistic appeals as their expected payoff increases, but leave approximately three in ten sure gains unclaimed.\footnote{Appendix~\ref{app:full-takeup} imposes the full-take-up appeal strategy assumed by the model while holding submitted rank-order lists and first-stage assignments fixed. The results in the main text are
	preserved and sharpened. Under strict appeals, IA's average assigned rank
	improves from \(2.06\) to \(2.01\), while DA remains unchanged at \(2.74\);
	the DA--IA rank gap therefore widens from \(0.68\) to \(0.73\). Under
	probabilistic appeals, average rank improves from \(2.04\) to \(1.94\) under
	IA and from \(2.61\) to \(2.55\) under DA, so IA's rank advantage again
	widens. Full take-up also raises the share of IA market-rounds that strictly
	Pareto-dominate truthful DA from \(0.48\) to \(0.60\) under strict appeals
	and from \(0.46\) to \(0.63\) under probabilistic appeals. These additional
	efficiency gains are accompanied by substantially greater procedural
	stability, which rises to \(0.94\) under strict appeals and \(0.84\) under
	probabilistic appeals. Under the probabilistic rule, full take-up therefore
	makes IA both more rank-efficient and more procedurally stable than DA.}

\subsection{Efficiency}
\label{sec:res_efficiency}

We now test the paper's central welfare prediction. As Figure~\ref{fig:rank_welfare_combined} shows, strict appeals improve IA while leaving DA essentially unchanged. Our primary outcome is the rank of the final assignment in the participant's induced true preferences; rank~\(1\) denotes the most preferred school, and unassigned participants are coded as rank~\(4\). We also report welfare in experimental points in Appendix \ref{app:additional-results}.

%YH: I created a new plot and placed averages above the CIs for readibility.
\begin{figure}[htbp]
\centering
\includegraphics[width=\textwidth]{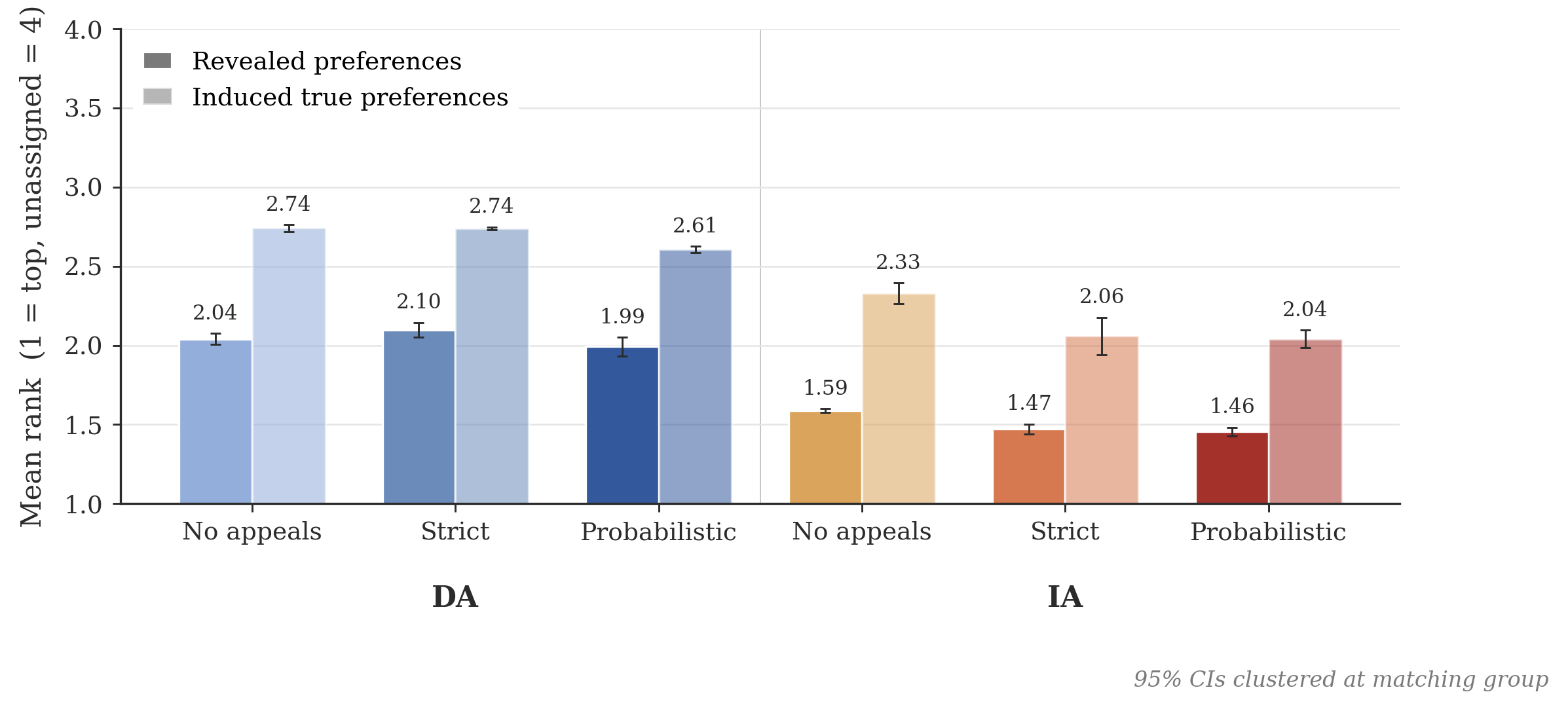}
\caption{Average assigned rank by treatment.\\
\footnotesize Assignments are evaluated against revealed and induced true preferences.
Unassigned participants are coded as rank~\(4\), and lower values indicate better
assignments. Whiskers are 95\% confidence intervals clustered at the
matching-group level.}
\label{fig:rank_welfare_combined}
\end{figure}

IA delivers better average ranks than DA in every treatment. This is not a reversal in the sign of the observed ranking, because IA already outperforms DA without appeals. The relevant comparative static is that strict appeals substantially widen IA's advantage. The average-rank gap between DA and IA rises from \(0.41\) without appeals to
\(0.68\) with strict appeals. We test this increase with a difference-in-differences regression of assigned rank on a mechanism indicator, a strict-appeals indicator and their interaction, estimated on the four corresponding arms with standard errors clustered at the matching group. The interaction is \(-0.27\) (standard error \(0.07\), \(p<0.001\)), so strict
appeals widen IA's rank advantage by \(0.27\) rank positions. This is the linear counterpart of Panel~C of Table~\ref{tab:efficiency_rank_appeals}, which reports the same quantity from the ordered logit.

Appeals also reduce the probability of remaining unassigned under IA (see Appendix~\ref{app:additional-results}). The share of unassigned participants falls from \(0.14\) without appeals to \(0.07\) with strict appeals and \(0.06\) with probabilistic appeals; under DA it remains at
\(0.14\) with strict appeals and falls to \(0.12\) with probabilistic appeals. All three reductions are significant at the matching-group level (\(p<0.001\)).
 
This margin requires a qualification. Six seats are available in a seven-person market, so exactly one participant is unassigned in every market-round in which the first-stage assignment is respected: the observed share is precisely
\(1/7=0.143\) under DA without appeals, DA with strict appeals, and IA without appeals, with no variation across sessions. The arms that fall below \(1/7\)
are exactly those in which a successful appeal seats a participant at the appealed school without displacing an occupant, so that school ends the round
over capacity. This occurs in \(400\) school-market-rounds, with up to four participants at a two-seat school, and only under DA with probabilistic
appeals, IA with strict appeals, and IA with probabilistic appeals. The reduction in the unassigned share therefore reflects seats added by the appeal stage rather than a better allocation of the original six, and we treat it as a
property of the appeal procedure rather than as an efficiency gain.

The level comparison differs from the no-appeals benchmark embedded in Hypothesis~\ref{hyp:reversal}. IA already produces better assignments than DA without appeals, with mean ranks of \(2.33\) and \(2.74\), respectively. Appeals therefore do not reverse the observed ranking between DA and IA. Instead, they widen an IA advantage that is already present in the experiment. This efficiency-improving effect is substantially stronger under strict and probabilistic appeals, because appeals recover part of the inefficiency obtained under IA without appeals while leaving DA broadly unchanged.

Table~\ref{tab:efficiency_rank_appeals} examines the robustness of these comparisons and reports adjusted differences in expected assigned rank from ordered logit models. The specifications follow those in Table~\ref{tab:truth_appeals}. Specification~(1) includes treatment indicators only. Specification~(2) adds round effects. Specification~(3) adds participant types. Specification~(4) adds demographics and risk attitudes. Standard errors are clustered at the matching-group level. The regression estimates closely match the descriptive differences. Under IA, strict and probabilistic appeals improve expected rank by approximately \(0.27\) and \(0.29\) rank points, respectively. Under DA, strict appeals have no detectable effect, while probabilistic appeals improve expected rank by approximately \(0.11\) to \(0.12\) rank points. The estimates are stable across specifications.

\begin{table}[htbp]
\centering
\caption{Impact of appeals on assigned rank.}
\label{tab:efficiency_rank_appeals}
\resizebox{\textwidth}{!}{\begin{threeparttable}
\begin{tabular*}{\textwidth}{@{\extracolsep{\fill}} lcccc @{}}
\toprule
& \((1)\) & \((2)\) & \((3)\) & \((4)\) \\
\midrule
\multicolumn{5}{l}{\textbf{Panel A. IA treatments}} \\
IA strict
& \(-0.27^{***}\) & \(-0.27^{***}\) & \(-0.27^{***}\) & \(-0.27^{***}\) \\
& {\footnotesize \((0.07)\)} & {\footnotesize \((0.07)\)} & {\footnotesize \((0.07)\)} & {\footnotesize \((0.07)\)} \\
IA probabilistic
& \(-0.29^{***}\) & \(-0.29^{***}\) & \(-0.29^{***}\) & \(-0.29^{***}\) \\
& {\footnotesize \((0.05)\)} & {\footnotesize \((0.05)\)} & {\footnotesize \((0.04)\)} & {\footnotesize \((0.04)\)} \\
\addlinespace
\multicolumn{5}{l}{\textbf{Panel B. DA treatments}} \\
DA strict
& \(0.00\) & \(0.00\) & \(0.00\) & \(0.00\) \\
& {\footnotesize \((0.01)\)} & {\footnotesize \((0.01)\)} & {\footnotesize \((0.01)\)} & {\footnotesize \((0.01)\)} \\
DA probabilistic
& \(-0.12^{***}\) & \(-0.12^{***}\) & \(-0.11^{***}\) & \(-0.11^{***}\) \\
& {\footnotesize \((0.01)\)} & {\footnotesize \((0.01)\)} & {\footnotesize \((0.01)\)} & {\footnotesize \((0.01)\)} \\
\addlinespace
\multicolumn{5}{l}{\textbf{Panel C. Change in rank gap between DA and IA relative to no appeals}} \\
Strict appeals
& \(0.26^{***}\) & \(0.26^{***}\) & \(0.27^{***}\) & \(0.27^{***}\) \\
& {\footnotesize \((0.07)\)} & {\footnotesize \((0.07)\)} & {\footnotesize \((0.07)\)} & {\footnotesize \((0.07)\)} \\
Probabilistic appeals
& \(0.16^{***}\) & \(0.16^{***}\) & \(0.18^{***}\) & \(0.18^{***}\) \\
& {\footnotesize \((0.05)\)} & {\footnotesize \((0.05)\)} & {\footnotesize \((0.05)\)} & {\footnotesize \((0.05)\)} \\
\addlinespace
Round effects
& No & Yes & Yes & Yes \\
Participant-type effects
& No & No & Yes & Yes \\
Demographics + risk att.
& No & No & No & Yes \\
\midrule
Observations
& \(10{,}195\) & \(10{,}195\) & \(10{,}195\) & \(10{,}185\) \\
Matching groups
& \(59\) & \(59\) & \(59\) & \(59\) \\
\bottomrule
\end{tabular*}
\begin{tablenotes}
\footnotesize
\item \(^{***}\ p<0.01\); \(^{**}\ p<0.05\); \(^{*}\ p<0.10\). Coefficients are reported as adjusted differences in expected assigned rank from ordered logit models. Lower values indicate better assignments, and unassigned participants are coded as rank~\(4\). The reference categories in Panels~A and B are the corresponding mechanisms without appeals. Panel~C reports \((\text{DA appeal}-\text{IA appeal})-(\text{DA no appeals}-\text{IA no appeals})\), so positive values indicate that appeals widen IA's rank advantage. Standard errors are clustered at the matching-group level. Each observation is a participant-round. Specification~(4) excludes ten participant-rounds with missing demographic or risk-attitude data. Matching groups are experimental sessions. Demographic controls are gender and age. Risk att. refers to risk attitudes.
\end{tablenotes}
\end{threeparttable}}
\end{table}

Panel~C shows that appeals significantly widen the rank gap between DA and IA. In Specification (4), which includes most controls, strict appeals increase the gap by \(0.27\) rank points, while probabilistic appeals increase it by \(0.18\) rank points. The strict rule produces the sharper contrast because it improves IA without generating the random gains that probabilistic appeals also create under DA.

\paragraph{Pareto dominance relative to truthful DA.} Theorem~\ref{thm:ia_r1_dominates_da} is an equilibrium-existence result: strict appeals allow IA to support an equilibrium outcome that weakly Pareto-dominates truthful DA, and strictly Pareto-dominates it whenever the two outcomes differ. Observed laboratory play need not coincide with the equilibrium constructed in the proof. We therefore use a direct allocation-level counterpart of the theorem. For every complete seven-person market-round, we compare the realized final assignment with the truthful DA benchmark. We say that the realized assignment weakly Pareto-dominates this benchmark if every participant is weakly better off according to her induced true preferences, and strictly Pareto-dominates it if, in addition, at least one participant is strictly better off.

Without appeals, observed IA strictly Pareto-dominates truthful DA in 27 market-rounds (\(11\) percent). With strict appeals, this frequency rises to 112 of 235 (\(48\) percent). Strict appeals therefore increase the frequency of strict Pareto dominance by \(36\) percentage points. We estimate this difference from a linear probability
model of the strict-dominance indicator on a strict-appeals indicator, estimated across the two IA arms with one observation per complete seven-person market-round and standard errors clustered at the matching group. The estimated difference is \(0.36\) with a standard error of \(0.046\) (\(p<0.001\)); the one-point discrepancy relative to the raw percentages is due to rounding.

This is a close empirical counterpart of Theorem~\ref{thm:ia_r1_dominates_da}. The theorem identifies a change in the outcomes that IA can sustain in equilibrium; the experiment shows that strict appeals move realized IA assignments sharply toward the Pareto region above truthful DA.

\paragraph{When do the rank improvements emerge?}

Appeal rights can affect final assignments at two points. Anticipating the appeal stage may change the rank-order lists participants submit and therefore
the assignment produced by the first-stage mechanism. Filed and upheld appeals may then generate a further improvement after the first-stage assignment has been announced. We therefore ask how much of the final rank difference is
already present before appeals are resolved and how much accrues subsequently.

For a first-stage mechanism \(M\in\{\da,\ia\}\) and an appeal rule
\(r\in\{r_1,r_2\}\), let
\(\bar R^{\mathrm{first}}_{M,r}\) and
\(\bar R^{\mathrm{final}}_{M,r}\) denote average assigned rank after the
first-stage and final assignments, respectively. Both are evaluated against
induced true preferences, with unassigned participants coded as rank~\(4\).
Lower values therefore indicate better assignments. Under no appeals,
\(\bar R^{\mathrm{final}}_{M,r_0}
=\bar R^{\mathrm{first}}_{M,r_0}\), so

\[
\underbrace{
\bar R^{\mathrm{final}}_{M,r_0}
-
\bar R^{\mathrm{final}}_{M,r}
}_{\text{total rank improvement}}
=
\underbrace{
\bar R^{\mathrm{first}}_{M,r_0}
-
\bar R^{\mathrm{first}}_{M,r}
}_{\text{first-stage difference}}
+
\underbrace{
\bar R^{\mathrm{first}}_{M,r}
-
\bar R^{\mathrm{final}}_{M,r}
}_{\text{subsequent appeal gain}}.
\]

This identity is a chronological accounting decomposition rather than a
causal mediation analysis. The first-stage difference measures how much of
the treatment difference is already present before any appeal is resolved.
In our design, this difference can operate only through changes in submitted
rank-order lists and the assignments they generate. The subsequent appeal
gain measures the improvement from the first-stage to the final assignment
among participants assigned to the appeal treatment. It is evaluated at the
reports those participants submitted when appeals were available and is
non-negative by construction. The decomposition therefore identifies when
the observed rank improvement emerges, but it does not provide a unique
causal allocation between reporting behavior and the appeal stage.

\begin{table}[htbp]
\centering
\caption{Decomposition of improvements in average assigned rank.}
\label{tab:rank_decomposition}
\begin{threeparttable}
\begin{tabular*}{\textwidth}{@{\extracolsep{\fill}}lccc@{}}
	\toprule
	Treatment
	& Total rank
	& First-stage
	& Subsequent appeal \\
	& change
	& change
	& change \\
	& \((1)\) & \((2)\) & \((3)\) \\
	\midrule
  DA, strict appeals
  & \(0.00\) & \(0.00\) & --- \\
  & \((0.01)\) & \((0.01)\) & \\
	
	DA, probabilistic appeals
	& \(0.14^{***}\) & \(0.02\) & \(0.12^{***}\) \\
	& \((0.01)\) & \((0.01)\) & \((0.01)\) \\
	
	\addlinespace
	IA, strict appeals
	& \(0.27^{***}\) & \(0.16^{**}\) & \(0.11^{***}\) \\
	& \((0.07)\) & \((0.06)\) & \((0.01)\) \\
	
	IA, probabilistic appeals
	& \(0.29^{***}\) & \(0.17^{***}\) & \(0.12^{***}\) \\
	& \((0.04)\) & \((0.04)\) & \((0.01)\) \\
	\bottomrule
\end{tabular*}
\begin{tablenotes}[flushleft]
	\footnotesize
	\item Positive entries indicate reductions in average assigned rank. Each observation is a participant-round, and matching groups are experimental sessions. Ranks are evaluated against induced true preferences, with
	unassigned participants coded as rank~\(4\). The entry marked --- is not estimated. No appeal can succeed under DA with strict appeals, so the first-stage and final assignments coincide in every market-round and the subsequent appeal gain is identically zero. The total improvement compares final rank under the indicated appeal rule with final rank under the same first-stage mechanism without appeals. The first-stage difference makes the corresponding comparison before appeals are resolved. The subsequent appeal gain is the within-treatment reduction in average rank between the first-stage and final assignments. The three point estimates in each row satisfy the accounting identity exactly. Standard errors, clustered at the matching-group level, are in parentheses.
	\(^{***}\ p<0.01\), \(^{**}\ p<0.05\), \(^{*}\ p<0.10\).
\end{tablenotes}
\end{threeparttable}
\end{table}

Table~\ref{tab:rank_decomposition} shows that the rank improvement under IA
emerges both before and after appeals are resolved. Strict appeals improve
final assigned rank by \(0.270\) positions. Of this improvement, \(0.163\)
positions are already present in the first-stage assignment, while the appeal
stage generates a further gain of \(0.107\) positions. Thus, approximately
60 percent of the observed improvement is visible before any appeal is
resolved.

Probabilistic appeals produce a similar pattern under IA. They improve final
assigned rank by \(0.288\) positions, of which \(0.167\) positions are present
at the first stage and \(0.121\) accrue subsequently. Under both appeal rules,
more than half of IA's final rank improvement is therefore already visible in
the first-stage assignment.

The contrast with DA is pronounced. Strict appeals leave both first-stage and final
rank essentially unchanged because DA produces no strict priority claim for
the appeal stage to correct. Probabilistic appeals improve final rank by
\(0.136\) positions, but almost all of this improvement, \(0.117\) positions,
arises after DA has run. The first-stage difference of \(0.019\) positions is
small and statistically insignificant.

\begin{result}
The data strongly support Hypothesis~\ref{hyp:reversal}. Strict appeals
improve IA substantially more than DA: they improve mean assigned rank
under IA by \(0.27\) positions while leaving DA unchanged, thereby widening
IA's rank advantage by \(0.27\) positions. They also raise the share of
realized IA outcomes that strictly Pareto-dominate truthful DA from \(11\)
to \(48\) percent. Approximately \(60\) percent of the observed IA rank
improvement is already present before appeals are resolved. Probabilistic
appeals produce the same directional pattern.
\end{result}

\subsection{Stability}
\label{sec:res_stability}

We finally examine whether IA's efficiency gains come at the cost of stability. The strict appeal rule is defined using the underlying priority classes, not the tie-break used by the first-stage mechanism. Accordingly, a strict blocking pair requires the student to have strictly higher underlying priority than an assigned student. Equal-priority students who lose the tie-break do not form strict blocking pairs. We treat wastefulness separately. A market-round is stable if it contains neither a strict blocking pair nor a wasted-seat claim.

Our main measure evaluates stability against \textit{submitted} preferences -- the natural procedural benchmark because both appeal eligibility and the panel's decision are based on the rank-order list submitted in the first stage. We also report stability against \textit{induced} true preferences -- a more demanding benchmark because a school that a participant places below her assignment cannot subsequently be claimed on appeal.

%YH: I created a new plot and placed averages above the CIs for readibility.
\begin{figure}[htbp]
\centering
\includegraphics[width=1\textwidth]{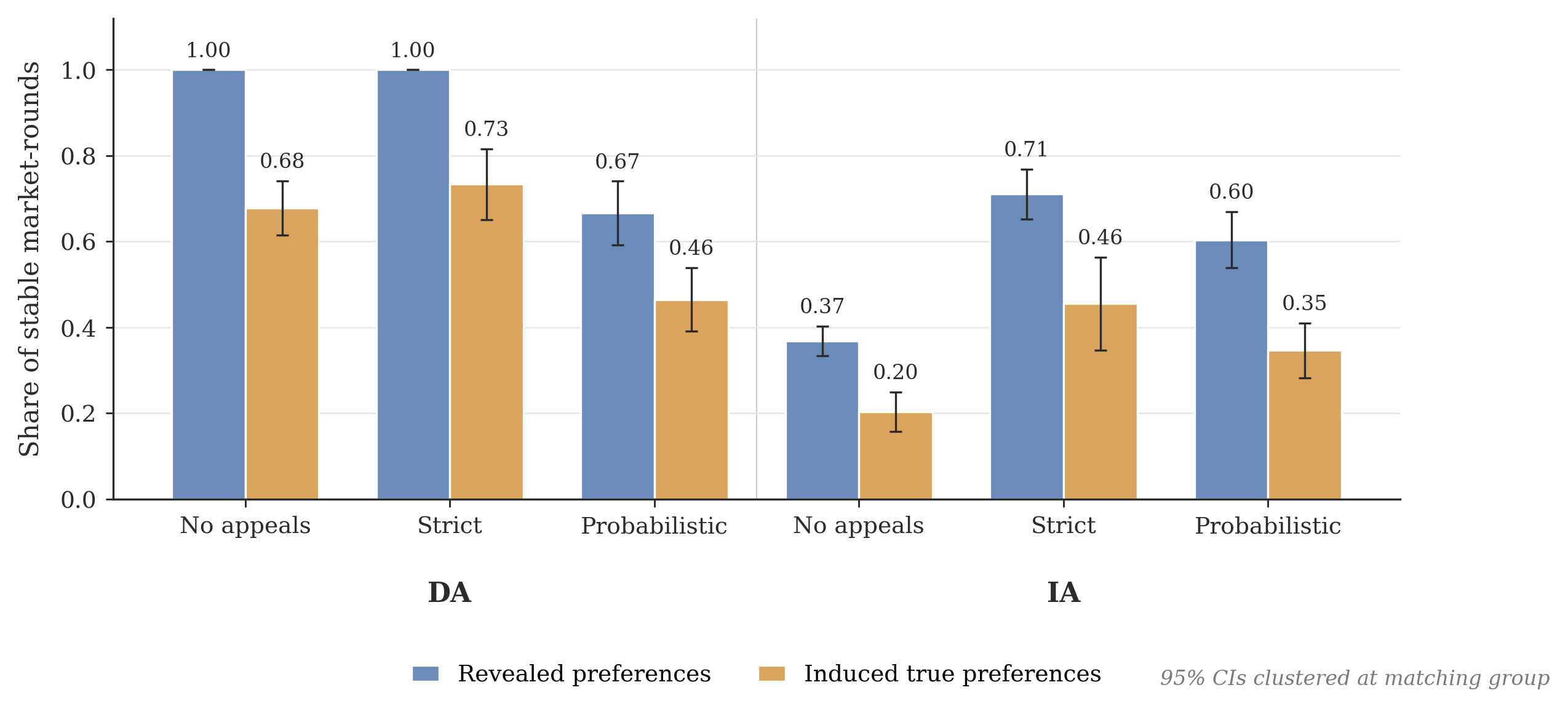}
\caption{Stability by treatment.\\ \footnotesize Bars report the share of stable market-rounds, evaluated against submitted and induced true preferences. A market-round is stable if it contains neither a strict blocking pair nor a wasted-seat claim. Equal-priority tie-break claims are not counted as strict blocking pairs. Whiskers are 95\% confidence intervals clustered at the matching-group level.}
\label{fig:stability}
\end{figure}

Figure~\ref{fig:stability} shows that strict appeals substantially improve IA's procedural stability. Without appeals, IA produces \(0.84\) strict blocking
pairs per market-round and a stable share of \(0.37\). Strict appeals reduce strict blocking pairs to \(0.33\), generate \(0.02\) wasted-seat claims per
market-round, and raise the stable share to \(0.71\). Probabilistic appeals also improve IA, but less cleanly: they reduce strict blocking pairs to
\(0.40\), generate \(0.08\) wasted-seat claims, and raise the stable share to \(0.60\). Every one of these changes is significant at the matching-group level (\(p<0.001\); Table~\ref{tab:nonparametric}, Panel~A).
 
The advantage of the strict rule over the probabilistic rule within IA is smaller but visible. Strict appeals yield a higher procedurally stable share (\(0.71\) against \(0.60\), \(p=0.029\)) and fewer wasted-seat claims
(\(p=0.015\)); the difference in blocking pairs against induced preferences is weaker (\(p=0.095\)) and the difference in mean assigned rank is not resolved (\(p=0.165\)). Strict appeals do create a small amount of waste relative to no appeals (\(p=0.032\)), so the accurate statement is that they create very little, not none.

These findings align closely with the logic of the appeal rules. Strict appeals target precisely the priority violations generated by IA and create
almost no waste. The experiment deliberately departs from the theoretical benchmark by requiring participants to choose which claims to pursue, whereas
the model assumes that every profitable appeal is submitted. Appendix~\ref{app:full-takeup} shows that imposing the model's appeal-stage strategy sharpens the main
findings. Holding submitted rank-order lists and first-stage assignments fixed, IA's remaining strict blocking pairs fall from \(0.33\) to \(0.04\) per
market-round, while its procedurally stable share rises from \(0.71\) to \(0.94\). Thus, the residual instability observed in the experiment does not
reflect a failure of the appeal rule. When appeal-stage behavior follows the theoretical benchmark, strict appeals eliminate almost all of IA's procedural
instability while preserving its substantial efficiency advantage. The small remaining instability reflects that the empirical measure is evaluated against
submitted preferences and also requires non-wastefulness, whereas Lemma~\ref{prop:pareto} concerns appealable strict claims involving schools that the student genuinely prefers.

The DA comparison is the reverse. DA without appeals contains no strict blocking pair and no waste when evaluated against submitted preferences, in every market-round of every session. Strict appeals leave this outcome unchanged, so every market-round remains stable. Because both arms are degenerate, no test is possible or needed, and the corresponding entries in
Table~\ref{tab:stability_effects} and Table~\ref{tab:nonparametric} are marked accordingly rather than reported as precisely estimated zeros. Probabilistic
appeals instead disturb an initially stable assignment: they generate \(0.31\) strict blocking pairs and \(0.78\) wasted-seat claims per market-round,
reducing the stable share to \(0.65\) (\(p<0.001\) for all three). The random channel can admit a participant without a priority claim and leave her
first-stage seat vacant. It therefore creates precisely the instability and waste that strict appeals avoid.

Table~\ref{tab:stability_effects} reports regression estimates at the market-round level. The specifications include treatment indicators and round fixed effects. Standard errors are clustered at the matching-group level. 
The regression estimates confirm the descriptive pattern. Under IA, strict appeals reduce strict blocking pairs by \(0.5\) per market-round and increase the probability of stability by \(0.3\). Probabilistic appeals also improve IA, but their effects are smaller and they create more waste. Under DA, strict appeals have no effect on any outcome. Probabilistic appeals instead increase both strict blocking pairs and wasted-seat claims and reduce stability by \(0.3\).

\begin{table}[htbp]
\centering
\caption{Impact of appeals on stability.}
\label{tab:stability_effects}
\resizebox{\textwidth}{!}{\begin{threeparttable}
\begin{tabular*}{\textwidth}{@{\extracolsep{\fill}} lccc @{}}
\toprule
\cmidrule(lr){2-4}
& Blocking pairs & Wasted-seat claims & Stable market-round \\
& \((1)\) & \((2)\) & \((3)\) \\
\midrule
\multicolumn{4}{l}{\textbf{Panel A. IA treatments}} \\
IA strict
& \(-0.52^{***}\) & \(0.03^{***}\) & \(0.34^{***}\) \\
& {\footnotesize \((0.06)\)} & {\footnotesize \((0.01)\)} & {\footnotesize \((0.03)\)} \\
IA probabilistic
& \(-0.45^{***}\) & \(0.08^{***}\) & \(0.23^{***}\) \\
& {\footnotesize \((0.06)\)} & {\footnotesize \((0.01)\)} & {\footnotesize \((0.04)\)} \\
\addlinespace
\multicolumn{4}{l}{\textbf{Panel B. DA treatments}} \\
DA strict
& --- & --- & --- \\
&  &  & \\
DA probabilistic
& \(0.31^{***}\) & \(0.78^{***}\) & \(-0.33^{***}\) \\
& {\footnotesize \((0.05)\)} & {\footnotesize \((0.09)\)} & {\footnotesize \((0.04)\)} \\
\addlinespace
Round effects
& Yes & Yes & Yes \\
\midrule
Observations
& \(1{,}435\) & \(1{,}435\) & \(1{,}435\) \\
Matching groups
& \(59\) & \(59\) & \(59\) \\
\bottomrule
\end{tabular*}
\begin{tablenotes}
\footnotesize
\item \(^{***}\ p<0.01\); \(^{**}\ p<0.05\); \(^{*}\ p<0.10\). Coefficients for Specifications~(1) and (2) are reported from OLS regression models. Coefficients for Specification~(3) are reported from a linear probability model. The reference categories are the corresponding mechanisms without appeals. Entries marked --- are not estimated. Submitted stability equals one and
both violation counts equal zero in every DA market-round without appeals and with strict appeals, so the outcome has no variation in either arm and the contrast is undefined rather than precisely zero. All outcomes are evaluated against submitted preferences. A strict blocking pair requires a strict difference in the underlying priority classes; equal-priority tie-break claims are excluded. A market-round is stable if it contains neither a strict blocking pair nor a wasted-seat claim. Each observation is a complete seven-person market-round; market-rounds affected by participant exits are excluded. Matching groups are experimental sessions. Standard errors are clustered at the matching-group level.
\end{tablenotes}
\end{threeparttable}}
\end{table}

Stability is lower when assignments are evaluated against induced true preferences. Under IA, the stable share rises from \(0.20\) without appeals to
\(0.46\) with strict appeals (\(p=0.001\)) and \(0.35\) with probabilistic appeals (\(p=0.009\)). Under DA it is \(0.68\) without appeals, \(0.73\) with strict appeals -- a difference the session-level test does not resolve (\(p=0.280\)) -- and \(0.47\) with probabilistic appeals (\(p=0.002\)).

\begin{result}
Strict appeals reduce IA's strict blocking pairs by roughly \(60\) percent, from \(0.84\) to \(0.33\) per market-round, and raise its procedurally stable share from \(0.37\) to \(0.71\), while generating very little waste. Probabilistic
appeals also improve IA, but create more waste and leave more blocking pairs. Under DA, strict appeals leave the stable first-stage outcome unchanged, whereas probabilistic appeals create both blocking pairs and waste. All comparisons are significant at the matching-group level.
\end{result}

\subsection{Trade-offs}
\label{sec:res_synthesis}

The results do not produce a complete ranking of the mechanisms. Instead, the consequences of appeals depend on the first-stage assignment rule. IA creates priority violations that appeals can correct, whereas DA begins from a procedurally stable assignment and therefore leaves strict appeals little to do.

Figure~\ref{fig:radar} summarizes truth-telling, rank-efficiency and stability, evaluated against induced true preferences. Each dimension is normalized by its across-treatment maximum, so that higher values indicate better performance. The figure is intended only as a visual comparison and does not constitute an aggregate welfare index.

\begin{figure}[htbp]
\centering
\includegraphics[width=0.8\textwidth]{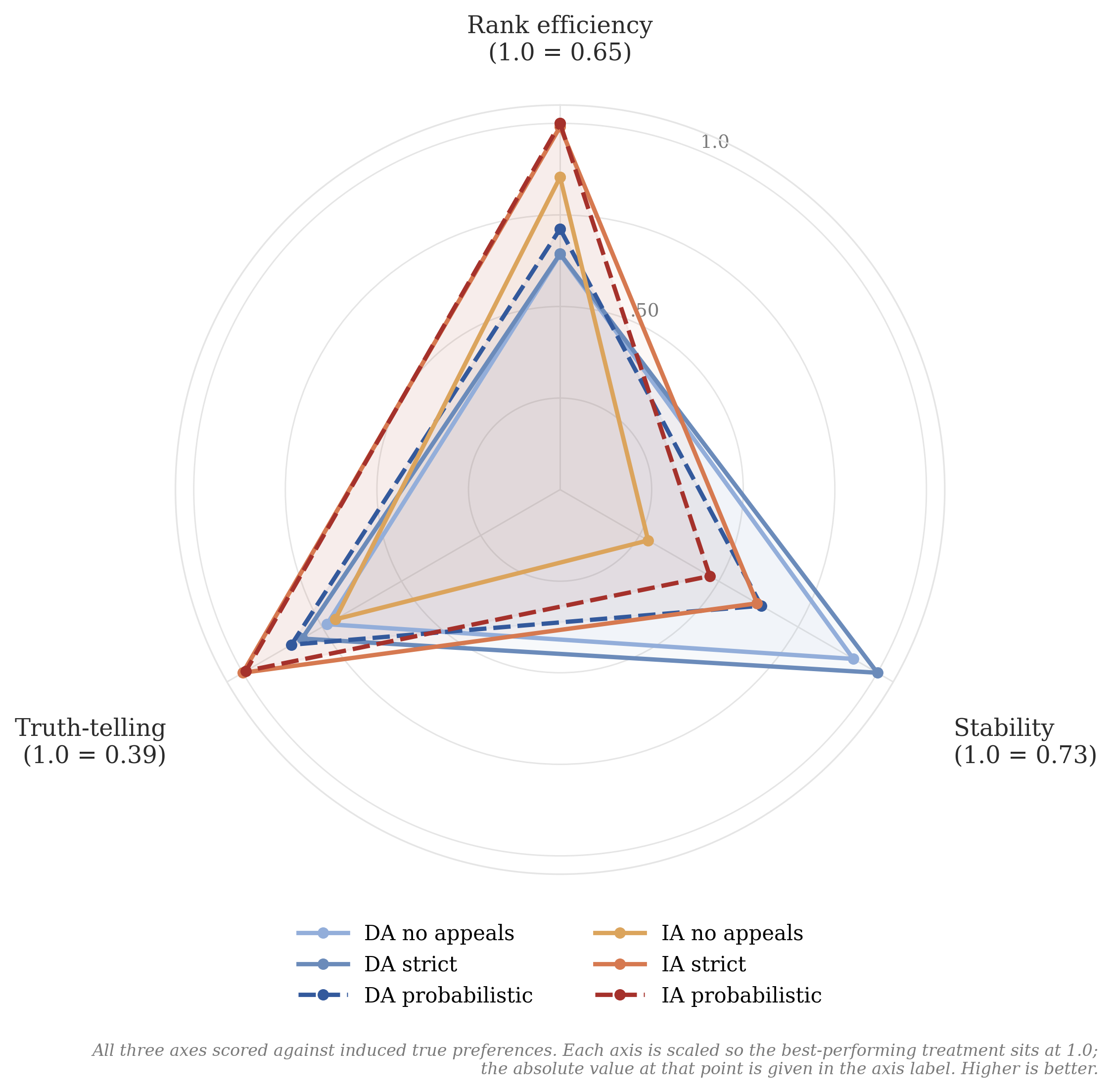}
\caption{Truth-telling, rank-efficiency and stability across treatments.\\ \footnotesize Rank-efficiency is based on assigned rank evaluated against induced true preferences. Stability is the share of market-rounds containing neither a strict blocking pair nor a wasted-seat claim, evaluated against induced true preferences. Each axis is normalized by its across-treatment maximum. The normalization is descriptive and does not assign welfare weights to the three outcomes.}
\label{fig:radar}
\end{figure}

\paragraph{Within IA.} Both appeal rules improve truth-telling, efficiency, and stability. Relative to no appeals, truth-telling rises from \(0.28\) to
approximately \(0.39\), while mean true-preference rank improves from \(2.33\) to \(2.06\) under strict appeals and \(2.04\) under probabilistic appeals. Procedural stability rises from \(0.37\) to \(0.71\) with strict appeals and
\(0.60\) with probabilistic appeals. The two appeal rules deliver efficiency and truth-telling gains that the session-level tests cannot distinguish (\(p=0.165\) and \(p=0.776\)), but strict appeals leave a higher stable share (\(p=0.029\)) and substantially less waste (\(p=0.015\)).
 
\paragraph{Between IA and DA.} IA with strict appeals performs substantially better than DA with strict appeals on efficiency (\(p<0.001\)), but the truth-telling difference is not resolved at the session level (\(p=0.453\)) and DA remains fully procedurally stable. Strict appeals nevertheless close more
than half of IA's initial stability deficit. Probabilistic appeals produce a different comparison: they improve IA along all three dimensions but undermine DA's principal advantage, reducing its procedural stability from \(1.00\) to \(0.65\). Under probabilistic appeals the two mechanisms are no longer distinguishable on procedural stability (\(p=0.420\)) or on blocking pairs (\(p=0.404\)), while IA retains its rank advantage (\(p<0.001\)).

The mechanism and the appeal rule must therefore be evaluated jointly. Strict appeals are particularly valuable under IA because they correct genuine priority violations while creating almost no waste. Under DA they leave the original assignment essentially unchanged. Probabilistic appeals are less targeted and can destabilize an otherwise stable DA outcome.

\section{Conclusion}
\label{sec:conclusion}

Appeals are part of the allocation mechanism, not an afterthought. Strict priority-based appeals leave Deferred Acceptance largely unchanged but make Immediate Acceptance less manipulable, improve student assignments, and reduce priority violations. Much of the experimental improvement appears before any appeal is decided because appeal rights change submitted rankings. Probabilistic appeals also improve Immediate Acceptance, but can create waste and instability after Deferred Acceptance. Assignment mechanisms should therefore be designed and evaluated jointly with the appeal rules that follow them.

\singlespacing
\setlength{\parskip}{-0.2em} 
\bibliographystyle{ecta}
\bibliography{bibliogr}

\newpage
\appendix
\bigskip
%Proofs omitted in main text
\section{Proofs Omitted in Main Text.\hfill}
\label{app:missingproofs}

\subsection{Proof of Lemma \ref{prop:pareto}.}

\begin{proof}
	Let
	\[
	\mu^0=M(\hat\succ)
	\qquad\text{and}\qquad
	\mu=(M\circ r_1)(\hat\succ).
	\]
	Weak Pareto dominance with respect to true preferences is immediate. Each
	student either keeps her first-stage assignment under $\mu^0$ or is reassigned
	through an upheld appeal to a school she truly prefers to it.
	
	We now show that $\mu$ admits no strict blocking pair $(i,s)$ such that
	$s\in A_i(\hat\succ,M)$ and $s\succ_i\mu_i$. Suppose, toward a contradiction,
	that such a pair exists. Since $\mu_i\succsim_i\mu_i^0$, we also have
	$s\succ_i\mu_i^0$. Therefore $s\in a_i$, so student $i$ appealed to $s$.
	
	Because $(i,s)$ is a strict blocking pair of $\mu$, there exists
	$k\in\mu_s^{-1}$ such that $i\rhd_s k$. If
	$k\in(\mu_s^0)^{-1}$, then $(i,s)$ was already a strict blocking pair of the
	initial matching $\mu^0$, and $r_1$ upheld student $i$'s appeal to $s$.
	
	Now suppose $k\notin(\mu_s^0)^{-1}$. Then $k$ was assigned to $s$ through a
	successful appeal. Hence, by definition of $r_1$, there exists an initial
	occupant $j\in(\mu_s^0)^{-1}$ such that $k\rhd_s j$. Since priorities are
	transitive and $i\rhd_s k$, we have $i\rhd_s j$. Thus $(i,s)$ was already a
	strict blocking pair of $\mu^0$, and again $r_1$ upheld student $i$'s appeal
	to $s$.
	
	In either case, $i$'s appeal to $s$ was upheld. Since final reassignment selects
	the student's most preferred school according to $\succ_i$ among her upheld
	appeals, $\mu_i\succsim_i s$, contradicting $s\succ_i\mu_i$.
\end{proof}

\subsection{Proof of Proposition \ref{thm:ia_manipulable}.}

We provide the full proof that we only sketched in the main text. 

\begin{proof}
	We first prove containment. Fix a preference profile
	$(\succ_i,\succ_{-i})$ and suppose student $i$ can profitably manipulate
	$\ia\circ r_1$ at this profile. Let $\hat\succ_i$ be a profitable deviation and write
	\[
	x=(\ia\circ r_1)_i(\succ_i,\succ_{-i})
	\qquad\text{and}\qquad
	y=(\ia\circ r_1)_i(\hat\succ_i,\succ_{-i}).
	\]
	Thus
	\[
	y\succ_i x.
	\]
	Since $r_1$ only adds successful appeals to the first-stage $\ia$ outcome and never worsens a student's assignment,
	\[
	x \succsim_i (\ia\circ r_0)_i(\succ_i,\succ_{-i}).
	\]
	
	We show that student $i$ can also obtain $y$ under $\ia\circ r_0$ by a suitable misreport.
	
	If student $i$ already obtains $y$ in the first-stage $\ia$ outcome under $(\hat\succ_i,\succ_{-i})$, then the same report $\hat\succ_i$ gives student $i$ school $y$ under $\ia\circ r_0$. Therefore
	\[
	(\ia\circ r_0)_i(\hat\succ_i,\succ_{-i})
	=
	y
	\succ_i
	x
	\succsim_i
	(\ia\circ r_0)_i(\succ_i,\succ_{-i}),
	\]
	so $\hat\succ_i$ is a profitable manipulation of $\ia\circ r_0$.
	
	Suppose instead that student $i$ obtains $y$ only through a successful appeal under $r_1$. Let $\mu$ be the first-stage $\ia$ outcome under $(\hat\succ_i,\succ_{-i})$. Since $i$ successfully appeals to $y$, there is a first-stage occupant $j$ of $y$ in $\mu$ such that
	\[
	i \rhd_y j.
	\]
	
	Let $\hat\succ_i'$ be any report that ranks $y$ first. We claim that student $i$ obtains $y$ in the first stage of $\ia$ under $(\hat\succ_i',\succ_{-i})$.
	
	Let $A_y^1$ be the set of students other than $i$ who rank $y$ first. This set
	depends only on $\succ_{-i}$. If fewer than $q_y$ students in $A_y^1$ have
	higher priority at $y$ than $i$, then $i$ is among the $q_y$ highest-priority
	round-1 applicants to $y$ when she ranks $y$ first. Hence $i$ is admitted to
	$y$ in round 1.
	
	So suppose, by way of contradiction, that at least $q_y$ students in $A_y^1$
	have higher priority at $y$ than $i$. Then, independently of $i$'s report, the
	$q_y$ seats of $y$ are filled in round 1 by students who all have
	higher priority than $i$. Since IA assignments are permanent, every
	first-stage occupant of $y$ in $\mu$ has higher priority than $i$.
	This contradicts the existence of a first-stage occupant $j$ of $y$ with
	\[
	i \rhd_y j,
	\]
	since the tie-break only resolves priority ties and therefore cannot reverse a
	strict priority comparison.
	
	Therefore fewer than $q_y$ students in $A_y^1$ have higher priority at $y$
	than $i$, and $i$ is admitted to $y$ in round 1 when she ranks $y$ first.
	Thus
	\[
	(\ia\circ r_0)_i(\hat\succ_i',\succ_{-i})=y.
	\]
	Consequently,
	\[
	(\ia\circ r_0)_i(\hat\succ_i',\succ_{-i})
	=
	y
	\succ_i
	x
	\succsim_i
	(\ia\circ r_0)_i(\succ_i,\succ_{-i}).
	\]
	Hence $\hat\succ_i'$ is a profitable manipulation of $\ia\circ r_0$.
	
	Thus, whenever student $i$ can profitably manipulate $\ia\circ r_1$ at a
	profile, the same student can profitably manipulate $\ia\circ r_0$ at that
	profile. Therefore $\ia\circ r_0$ is as manipulable as $\ia\circ r_1$.

	It remains to show that the containment is strict.
	Example~\ref{ex:moremanipulable} in the main text gives a preference profile
	that is manipulable under $\ia\circ r_0$ but not under $\ia\circ r_1$.
	Therefore $\ia\circ r_0$ is more manipulable than $\ia\circ r_1$.
\end{proof}

\newpage

%Observational study
%\section{Observational Study}
%\label{sec:observational_study}
%\input{obsdata}
%\newpage

%r2
\section{Analyzing $r_2$.\hfill}
\label{app:randomized}

We now consider the probabilistic appeal rule \(r_2\). Since \(r_2\) is
random, the outcome of a mechanism followed by \(r_2\) is a lottery over
final matchings rather than a deterministic matching. We therefore use
stochastic dominance to define incentive and welfare comparisons.

\subsection{Definitions}

\paragraph{Random matchings}
Let \(\mathcal{M}\) denote the set of possibly infeasible matchings.
A \emph{randomized matching} is a probability distribution
\[
\lambda \in \Delta(\mathcal{M}).
\]
For a student \(i\), the \emph{marginal lottery} induced by \(\lambda\)
is denoted by \(\lambda_i\), where
\[
\lambda_i(s)
=
\sum_{\mu \in \mathcal{M}:\, \mu_i = s} \lambda(\mu)
\qquad
\text{for each } s \in S \cup \{s_\emptyset\}.
\]

For two marginal lotteries \(\lambda_i\) and \(\lambda_i'\), we write
\[
\lambda_i \succeq_i^{sd} \lambda_i'
\]
if \(\lambda_i\) \emph{first-order stochastically dominates}
\(\lambda_i'\) with respect to \(\succ_i\): for every school
\(s \in S \cup \{s_\emptyset\}\),
\[
\sum_{t:\, t \succeq_i s} \lambda_i(t)
\ge
\sum_{t:\, t \succeq_i s} \lambda_i'(t).
\]
Equivalently, \(\lambda_i \succeq_i^{sd} \lambda_i'\) if every
von Neumann -- Morgenstern utility representation \(u_i\) of \(\succ_i\)
gives weakly higher expected utility under \(\lambda_i\) than under
\(\lambda_i'\).

For two randomized matchings \(\lambda\) and \(\lambda'\), we write
\[
\lambda \succeq^{sd} \lambda'
\]
if
\[
\lambda_i \succeq_i^{sd} \lambda_i'
\]
for every student \(i\). We write
\(\lambda_i \succ_i^{sd} \lambda_i'\) if
\(\lambda_i \succeq_i^{sd} \lambda_i'\) and the inequality defining
stochastic dominance is strict for at least one upper contour set.

\paragraph{Strategy-proofness and equilibrium for randomized mechanisms.}
A randomized mechanism \(\Phi\) maps each reported preference profile
\(\sigma\) to a lottery
\[
\Phi(\sigma) \in \Delta(\mathcal{M}).
\]
It is \emph{sd-strategy-proof} if, for every student \(i\), every true
preference profile \(\succ\), and every report \(\hat\sigma_i\),
\[
\Phi_i(\succ_i, \succ_{-i})
\succeq_i^{sd}
\Phi_i(\hat\sigma_i, \succ_{-i}).
\]
A report profile \(\sigma^*\) is an \emph{sd-Nash equilibrium} of
\(\Phi\) if there is no student \(i\) and no unilateral report
\(\hat\sigma_i\) such that
\[
\Phi_i(\hat\sigma_i,\sigma^*_{-i})
\succ_i^{sd}
\Phi_i(\sigma^*).
\]
This is the ordinal equilibrium notion appropriate for the present model:
it rules out deviations that are unambiguously profitable for every
von Neumann--Morgenstern utility representation of \(\succ_i\).

\paragraph{The probabilistic appeal rule \(r_2\).}
Fix \(p \in (0,1)\). Given a first-stage matching \(\mu\) and a reported
preference profile \(\sigma\), the rule \(r_2\) generates a lottery over
final matchings as follows. A student \(i\) may appeal only to schools
that she ranked above her first-stage match \(\mu_i\) under \(\sigma_i\)
and from which she was rejected. For every admissible appeal by student
\(i\) to school \(s\):
\begin{itemize}
	\item if \((i,s)\) is a strict blocking pair in \(\mu\), that is, if
	\(i \rhd_s j\) for some \(j \in \mu_s^{-1}\), the appeal is upheld
	with probability one;
	
	\item otherwise, the appeal is upheld independently with probability
	\(p\).
\end{itemize}
Thus \(r_2\) nests \(r_1\): every appeal upheld under \(r_1\) is upheld
under \(r_2\) with certainty, while appeals not upheld under \(r_1\) are
upheld independently with probability \(p\). 
For each realization of appeal outcomes, student \(i\) is assigned to the
highest-ranked school under her true preference \(\succ_i\) among the upheld
appeals, if any, and otherwise remains at her first-stage match \(\mu_i\).
The resulting lottery over final matchings is denoted
\[
r_2(\mu,\sigma).
\]

For any first-stage mechanism \(M\), we write
\[
(M \circ r_2)(\sigma)=r_2(M(\sigma),\sigma).
\]

\subsection{Welfare Ordering of Appeal Rules}

The probabilistic rule nests the strict rule: it preserves every appeal
that succeeds under \(r_1\) and may uphold additional appeals. Since the
student chooses her most-preferred upheld claim according to her true
preference, the comparison is pointwise and therefore also holds in
stochastic-dominance terms.

\begin{proposition}
	\label{prop:r2_dominates_r1}
	For any mechanism \(M\), any true preference profile \(\succ\), and any
	reported preference profile \(\sigma\),
	\[
	(M\circ r_2)(\sigma)
	\succeq^{sd}
	(M\circ r_1)(\sigma).
	\]
\end{proposition}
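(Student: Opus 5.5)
The plan is to prove the statement pointwise, by coupling the two composite mechanisms on a common first stage, and then to pass to marginals. Fix $M$, the true profile $\succ$ and the report profile $\sigma$. Both $(M\circ r_1)(\sigma)$ and $(M\circ r_2)(\sigma)$ start from the same first-stage matching $\mu=M(\sigma)$. Both also use the same appeal sets $a_i=\{s\in A_i(\sigma,M): s\succ_i\mu_i\}$, since under our standing assumption each student appeals exactly to the appealable schools she truly prefers to $\mu_i$. So $(M\circ r_1)(\sigma)$ is a degenerate lottery on a single matching $\nu^1$. The lottery $(M\circ r_2)(\sigma)$ is generated by independent coin flips, one for each appeal $(i,s)$ with $s\in a_i$ that is not a strict blocking pair of $\mu$.

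The key step is a pointwise comparison on every realization $\omega$ of these coin flips. Let $U^1_i$ be the set of $i$'s appeals upheld under $r_1$, namely those $s\in a_i$ with $(i,s)$ a strict blocking pair of $\mu$. Let $U^2_i(\omega)$ be the set upheld under $r_2$. By the definition of $r_2$, every strict-blocking appeal is upheld with probability one, so $U^1_i\subseteq U^2_i(\omega)$ for every $\omega$. The final assignment is the $\succ_i$-best element of the upheld set if that set is nonempty, and $\mu_i$ otherwise. Every element of $a_i$ is strictly $\succ_i$-preferred to $\mu_i$. Hence taking the best element of a larger set, or adding elements to an empty set, can only weakly improve $i$'s outcome, and so $\nu^2_i(\omega)\succsim_i \nu^1_i$ for all $\omega$. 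The subtle point, which I would state explicitly, is that this comparison uses $i$'s true preference $\succ_i$. That is exactly how the enrolment rule selects among upheld appeals, so no comparison under the reported $\sigma_i$ is needed.

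Finally I would convert the almost-sure comparison into first-order stochastic dominance of marginals. For any $s\in S\cup\{s_\emptyset\}$, if $\nu^1_i\succsim_i s$ then $\nu^2_i(\omega)\succsim_i s$ for every $\omega$. Therefore $\sum_{t\succeq_i s}\lambda^2_i(t)=1\ge \sum_{t\succeq_i s}\lambda^1_i(t)$. If instead $s\succ_i\nu^1_i$, the right-hand sum is $0$ and the inequality is trivial. This yields $\lambda^2_i\succeq^{sd}_i\lambda^1_i$ for every student $i$, which is the claim $(M\circ r_2)(\sigma)\succeq^{sd}(M\circ r_1)(\sigma)$.

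I expect the main obstacle to be bookkeeping rather than substance. One has to make sure the appeal sets coincide across the two rules, and that independence of appeal success across schools lets $r_1$'s upheld set be read off realization by realization. Both facts follow directly from the model's standing assumptions.
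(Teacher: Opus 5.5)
Your proposal is correct and follows essentially the same route as the paper. Both proofs fix the common first-stage matching and note that every appeal upheld under $r_1$ is upheld with certainty under $r_2$. Hence in every realization the final assignment is weakly $\succ_i$-preferred to the deterministic $r_1$ outcome, and stochastic dominance follows student by student. Your explicit upper-contour-set check and your appeal to the standing full-take-up assumption simply spell out steps the paper leaves implicit.
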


\begin{proof}
	Fix a student \(i\), and let \(\mu=M(\sigma)\) be the first-stage
	matching. Under \(r_1\), student \(i\) keeps \(\mu_i\) or moves to her
	most-preferred school under \(\succ_i\) among the appeals supported by a
	strict blocking pair. Call this deterministic outcome \(a_i\).

	Under \(r_2\), every appeal upheld under \(r_1\) is upheld with
	probability one, while additional admissible appeals may also be upheld.
	Thus, in every realization of the \(r_2\) lottery, the set from which
	student \(i\)'s final assignment is selected contains \(a_i\), possibly
	together with schools she prefers to \(a_i\). Her realized assignment
	under \(r_2\) is therefore weakly preferred to \(a_i\) according to
	\(\succ_i\). Hence
	\[
	(M\circ r_2)_i(\sigma)
	\succeq_i^{sd}
	(M\circ r_1)_i(\sigma).
	\]
	Since \(i\) was arbitrary, the result follows.
\end{proof}

Because random appeals may be upheld without a priority claim, the
realized matchings under \(r_2\) need not contain fewer blocking pairs.

\subsection{A Stochastic Equilibrium Reversal}

The reversal result also has an equilibrium counterpart under probabilistic
appeals. The relevant equilibrium notion is sd-Nash equilibrium, because the
model specifies ordinal preferences but not a particular cardinal utility
representation.

Let
\[
\mu=\da(\succ).
\]
For each student \(i\), define the \emph{DA-cutoff report}
\(\sigma_i^\mu\) as follows. If \(\mu_i\in S\), student \(i\) lists, in
her true order, every school that she weakly prefers to \(\mu_i\), and
truncates her list immediately after \(\mu_i\). If
\(\mu_i=s_\emptyset\), she lists every school she prefers to being
unassigned, again in her true order. Let
\[
\sigma^\mu=(\sigma_i^\mu)_{i\in I}.
\]

\begin{theorem}
	\label{thm:ia_r2_dominates_da}
	For every school choice problem with strict priorities and every
	\(p\in(0,1)\), the DA-cutoff profile \(\sigma^\mu\) is an sd-Nash
	equilibrium of \(\ia\circ r_2\). Moreover,
	\[
	(\ia\circ r_2)(\sigma^\mu)
	\succeq^{sd}
	\da(\succ),
	\]
	where the deterministic DA assignment is viewed as a degenerate
	randomized matching. If the two randomized matchings differ, at least one
	student is strictly better off in stochastic-dominance terms.
\end{theorem}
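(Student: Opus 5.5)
The plan is to prove the three claims in the order: first-stage analysis, welfare comparison, equilibrium. Write $\nu=\ia(\sigma^\mu)$ for the first-stage outcome under the DA-cutoff profile, with $\mu=\da(\succ)$.

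\emph{Step 1 (first stage).} Every school on $\sigma_i^\mu$ is one that $i$ weakly prefers to $\mu_i$. So either $\nu_i\succsim_i\mu_i$, or $i$ is unassigned in $\nu$. I would not try to prove $\nu=\mu$; this can fail. For instance, a student may win in round~1 a school $s$ at which a higher-priority student $j$ with $\mu_j=s$ only applies in round~2.

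Instead, take a student $i$ left unassigned in $\nu$ with $\mu_i\in S$, and set $s=\mu_i$. I will show that $(i,s)$ is a strict blocking pair of $\nu$.
\begin{itemize}
\item If $s$ had a free seat when $i$ applied, IA would have admitted her. Hence $s$ is full in $\nu$ by then.
\item Each occupant $k$ of $s$ in $\nu$ listed $s$, so $s\succsim_k\mu_k$. At most $q_s-1$ of them can have $\mu_k=s$, since $i$ also has $\mu_i=s$.
\item So some occupant $k$ has $s\succ_k\mu_k$. Stability of $\mu$ then forces $i\rhd_s k$, because $i\in\mu_s^{-1}$ and the priorities are strict.
\end{itemize}
Since $i$ listed $s$ and was rejected, $s$ lies in her appealable set. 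Under $r_2$, her appeal to $s$ is therefore upheld with probability one.

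\emph{Step 2 (dominance).} Combining Step~1 with the rule that the final school is the most preferred upheld appeal, every student ends up weakly above $\mu_i$ in every realization of the $r_2$ lottery. Pointwise weak dominance in every realization gives $(\ia\circ r_2)(\sigma^\mu)\succeq^{sd}\da(\succ)$. If the two randomized matchings differ, some student's marginal puts positive mass strictly above $\mu_i$. Because that student is never below $\mu_i$, the upper-contour inequality is strict for her, which gives the strictness claim.

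\emph{Step 3 (sd-Nash).} Fix $i$ and a deviation $\hat\sigma_i$, holding $\sigma^\mu_{-i}$ fixed. I would compare the two marginal lotteries upper contour set by upper contour set, starting from $i$'s top school. The aim is to show that if the deviation weakly raises every upper-contour probability, then the two lotteries coincide.

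The key structural fact to use is DA stability. For any $t\succ_i\mu_i$, every student in $\mu_t^{-1}$ has higher priority at $t$ than $i$, and under $\sigma^\mu_{-i}$ each of them lists $t$. I would use this to argue two things.
\begin{itemize}
\item Obtaining $t$ with certainty, whether in the first stage or through an upheld strict appeal, is only possible through the same mechanics the truthful-prefix report already exploits. Reordering schools can win some $t$ directly only by skipping an earlier lottery ticket, or by displacing mass at another school that is not recovered.
\item Each school that the equilibrium report lists above $\nu_i$ carries an independent probability-$p$ ticket. Dropping or reordering schools therefore strictly lowers some upper-contour probability unless it changes nothing.
\end{itemize}

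\emph{Main obstacle.} The hard step is Step~3. The difficulty is controlling how $i$'s deviation reshuffles the first-stage IA outcome for the other students, which changes both who occupies $t$ and which strict blocking pairs exist. My first attempt would be an induction down $i$'s true list. At the highest school where the two lotteries first differ, I would show the deviation cannot increase the probability of reaching it. The two ingredients would be that the occupants from $\mu_t^{-1}$ cannot be displaced by $i$ at a school $t$ where she lacks priority, and that every certain gain from the deviation is matched by lost probability-$p$ tickets at strictly higher schools.
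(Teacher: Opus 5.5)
\textbf{Steps 1 and 2.} These are correct and follow the paper's argument. Your count, that at most $q_s-1$ first-stage occupants of $s$ other than $i$ can lie in $\mu_s^{-1}$, is a slightly cleaner route than the paper's detour through non-wastefulness. One small correction: a free seat when $i$ applies does not by itself guarantee her admission. What you need, and what is true, is that $s$ is full at the end of the round in which $i$ is rejected.

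\textbf{The gap in Step 3.} Step 3 carries the equilibrium claim, and you leave it as a plan. Moreover, the ingredient you propose to build it on is false for IA. Students in $\mu_t^{-1}$ \emph{can} lose regular seats at $t$ to a lower-priority $i$, because each of them lists $t$ last under $\sigma^\mu$ and may apply to it late. This already happens at $\sigma^\mu$ itself. Take unit capacities, let $j$ rank $b$ then $t$, let $k$ rank $b$ first, and let $i$ rank $t$ then $c$, with $k\rhd_b j$ and $j\rhd_t i$. DA gives $j$ the school $t$ and gives $i$ the school $c$. Yet in IA under $\sigma^\mu$, student $i$ takes $t$ in round 1, and $j$ recovers it only on appeal. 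So DA stability cannot tell you who occupies $t$ in the first stage, with or without a deviation. In the paper's equilibrium argument, DA enters only through your Step 1 fact that $i$'s sure outcome is weakly above $\mu_i$, which puts every school she prefers to it on $\sigma^\mu_i$.

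\textbf{The missing lemma.} You need a report-independence property of IA: a rejected application affects no one else. So until $i$ is first admitted, the IA run coincides with the run in which $i$ is absent. Hence if $i$ is rejected at $t$ while still unplaced, $t$ is full at the end of that round, and its first-stage occupants are exactly its occupants in the $i$-absent run, whatever $i$ reported.

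\textbf{How the paper closes the argument.} Let $x_i=(\ia\circ r_1)_i(\sigma^\mu)$ and $T_i(y)=\{t\in S:t\succ_i y\}$.
Every $t\in T_i(x_i)$ appears on $\sigma^\mu_i$ in true order and rejects $i$. It carries no strict claim, since otherwise $x_i\succsim_i t$. So it is a pure probability-$p$ ticket.
By report-independence, no deviation can create a strict claim at such a $t$.
Moreover, $i$ can be admitted to $t$ directly only by applying before round $|T_i(t)|+1$. That means listing $t$ ahead of, or omitting, some $t'\succ_i t$, which then stops being appealable.

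The paper then splits on the deviation's strict-rule outcome $\hat x_i$:
if $\hat x_i\preceq_i x_i$, the deviation reaches every upper contour set above $x_i$ through at most as many probability-$p$ tickets, so the equilibrium lottery weakly dominates;
if $\hat x_i\succ_i x_i$, some ticket in $T_i(\hat x_i)$ has been sacrificed, so $T_i(\hat x_i)$ has probability $1-(1-p)^{|T_i(\hat x_i)|}$ at $\sigma^\mu$ against at most $1-(1-p)^{|T_i(\hat x_i)|-1}$ under the deviation.

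Without this lemma, your induction down $i$'s list has nothing to control the reshuffling you correctly identify as the obstacle.
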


\begin{proof}
	Write
	\[
	\nu=\ia(\sigma^\mu)
	\qquad\text{and}\qquad
	x=(\ia\circ r_1)(\sigma^\mu).
	\]
	We first show that
	\[
	x_i\succeq_i\mu_i
	\qquad\text{for every student }i.
	\]

	If \(\nu_i\neq s_\emptyset\), this is immediate: the DA-cutoff report
	contains no school that \(i\) ranks below \(\mu_i\). Now suppose that
	\(\nu_i=s_\emptyset\) and \(\mu_i=s\in S\). Student \(i\) applied to
	\(s\) and was rejected. School \(s\) must be full under \(\mu\). Indeed,
	if it had an empty DA seat, non-wastefulness of DA would imply that no
	student assigned elsewhere under \(\mu\) ranks \(s\) above her DA
	assignment. Hence only students in \(\mu_s^{-1}\) could apply to \(s\)
	under the DA-cutoff profile, and \(i\) could not be rejected.

	Since \(s\) is full under \(\mu\), it has \(q_s\) DA assignees, one of
	whom is \(i\). The \(q_s\) first-stage occupants of \(s\) under \(\nu\)
	exclude \(i\), so at least one of them, say \(j\), is not assigned to
	\(s\) under \(\mu\). Because \(j\) applies to \(s\) under her DA-cutoff
	report,
	\[
	s\succ_j\mu_j.
	\]
	Stability of \(\mu\) and strict priorities imply that every DA assignee
	at \(s\), including \(i\), has higher priority at \(s\) than \(j\).
	Thus \((i,s)\) is a strict blocking pair of \(\nu\), and \(r_1\)
	upholds \(i\)'s appeal to \(s\). Hence \(x_i\succeq_i\mu_i\).

	Next observe that every school \(t\succ_i x_i\) is an admissible appeal
	at \(\sigma^\mu\). The report \(\sigma_i^\mu\) follows the true order,
	and \(t\succ_i x_i\succeq_i\nu_i\), so \(i\) applied to \(t\) before
	reaching her first-stage assignment and was rejected. Moreover, the
	appeal to \(t\) is not supported by a strict blocking pair; otherwise
	\(r_1\) would assign \(i\) to \(t\) or to a still better school,
	contradicting the definition of \(x_i\). Under \(r_2\), student \(i\)
	therefore has an independent probability-\(p\) appeal ticket to every
	school she strictly prefers to \(x_i\), while \(x_i\) is guaranteed.

	We now show that \(\sigma^\mu\) is an sd-Nash equilibrium. Fix student
	\(i\) and a unilateral deviation \(\hat\sigma_i\). Let
	\[
	\hat\nu_i
	=
	\ia_i(\hat\sigma_i,\sigma^\mu_{-i})
	\qquad\text{and}\qquad
	\hat x_i
	=
	(\ia\circ r_1)_i(\hat\sigma_i,\sigma^\mu_{-i})
	\]
	denote the first-stage and strict-rule outcomes under the deviation.

	First suppose that \(\hat x_i\preceq_i x_i\). For any upper contour set
	whose cutoff is weakly below \(x_i\), the equilibrium lottery assigns
	probability one to that set. For an upper contour set strictly above
	\(x_i\), the deviation can enter the set only through non-strict
	probability-\(p\) appeals: a first-stage assignment or a successful
	strict appeal in that set would imply
	\(\hat x_i\succ_i x_i\). At \(\sigma^\mu\), by contrast, \(i\) holds a
	probability-\(p\) ticket to every school in the upper contour set.
	Independence of the appeal draws therefore implies that the equilibrium
	probability of reaching each upper contour set is weakly greater than
	under the deviation. Thus the equilibrium lottery sd-dominates the
	deviation lottery.

	Now suppose that \(\hat x_i\succ_i x_i\). We claim that at least one
	school \(t\succ_i\hat x_i\) is not an admissible appeal under the
	deviation. Suppose, to the contrary, that every such school is
	appealable.

	If \(\hat x_i=\hat\nu_i\), then every school preferred to \(\hat x_i\)
	must be ranked before \(\hat x_i\) and must reject \(i\). Reordering
	applications at which \(i\) is rejected does not change any other
	student's IA history: in each such round the same applicants are
	accepted as if \(i\) had not applied. Student \(i\) therefore reaches
	\(\hat x_i\) no earlier than under the true-order DA-cutoff report and
	faces the same or a fuller school. But she is rejected from
	\(\hat x_i\) under \(\sigma_i^\mu\), because
	\(\hat x_i\succ_i x_i\). Hence she cannot obtain \(\hat x_i\) directly.

	If instead \(\hat x_i\) is obtained through a strict appeal, student
	\(i\) is rejected from \(\hat x_i\) in the first stage. Applying to
	\(\hat x_i\) later cannot create a new strict claim because IA
	acceptances are permanent. If she applies earlier, then either the
	school is already full, in which case any lower-priority occupant would
	also be present when she applies under \(\sigma_i^\mu\), or its
	remaining seats are filled in that round by applicants who outrank her.
	In neither case can moving the application earlier turn the non-strict
	rejection at \(\sigma_i^\mu\) into a strict blocking pair. Thus
	\(\hat x_i\) cannot be obtained through a new strict appeal either.

	This contradiction proves the claim. Let
	\[
	T_i(\hat x_i)=\{t\in S:t\succ_i\hat x_i\}.
	\]
	At the equilibrium profile, every school in \(T_i(\hat x_i)\) is an
	independent probability-\(p\) appeal ticket. Under the deviation, at
	least one of these tickets is missing, and no school in
	\(T_i(\hat x_i)\) can be obtained directly or through a strict appeal
	by the definition of \(\hat x_i\). Hence
	\[
	\Pr_{\;(\ia\circ r_2)_i(\sigma^\mu)}
	\bigl(T_i(\hat x_i)\bigr)
	=
	1-(1-p)^{|T_i(\hat x_i)|}
	\]
	is strictly larger than the corresponding probability under the
	deviation, which is at most
	\[
	1-(1-p)^{|T_i(\hat x_i)|-1}.
	\]
	The deviation therefore cannot stochastically dominate the equilibrium
	lottery. Since the deviation was arbitrary, \(\sigma^\mu\) is an
	sd-Nash equilibrium.

	Finally, \(x_i\succeq_i\mu_i\) for every \(i\), and \(r_2\) preserves
	every assignment obtained under \(r_1\) while possibly adding better
	random outcomes. It follows that
	\[
	(\ia\circ r_2)(\sigma^\mu)
	\succeq^{sd}
	\mu
	=
	\da(\succ).
	\]
	If the two randomized matchings differ, at least one marginal
	stochastic-dominance comparison is strict.
\end{proof}

\subsection{DA Incentive Properties Remain}

\begin{proposition}
	\label{prop:da_r2_sp}
	For every \(p \in (0,1)\), \(\da \circ r_2\) is sd-strategy-proof.
\end{proposition}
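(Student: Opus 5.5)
We need to show that $\da\circ r_2$ is sd-strategy-proof. The plan is to reduce the lottery generated by $r_2$ to a comparison of ordinary DA outcomes, using two facts. First, every first-stage DA matching is stable with respect to the reported profile, so no strict blocking pair exists and no appeal is ever upheld with certainty. Second, DA is strategy-proof.

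\textbf{Step 1: the structure of the lottery.} Fix $\succ_{-i}$ and a report $\hat\sigma_i$, and let $\hat\mu=\da(\hat\sigma_i,\succ_{-i})$. Because $\hat\mu$ is stable with respect to the submitted profile, exactly as in the proof of Proposition~\ref{thm:da_manipulable}, no appeal of $i$ is supported by a strict blocking pair. Every appeal is therefore upheld independently with probability $p$. Since $i$ appeals to every school in $A_i$ that she truly prefers to $\hat\mu_i$, her final lottery is the following. She holds $\hat\mu_i$ for sure, plus an independent $p$-ticket for each school in
\[
T(\hat\sigma_i)=\{s:\ s\,\hat\sigma_i\,\hat\mu_i \text{ and } s\succ_i\hat\mu_i\}.
\]
She ends up at the $\succ_i$-best upheld school. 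Under truthful reporting, with $\mu=\da(\succ)$, her tickets are exactly $T^*=\{s: s\succ_i\mu_i\}$. This is every school she prefers to $\mu_i$, because under the truthful report she ranked each such school above $\mu_i$.

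\textbf{Step 2: the sd comparison.} Fix an upper contour set $U_s=\{t: t\succeq_i s\}$.

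If $\mu_i\in U_s$, the truthful lottery puts probability one on $U_s$, so the inequality holds trivially.

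If $\mu_i\notin U_s$, the truthful probability of $U_s$ is $1-(1-p)^{|U_s|}$, since every school in $U_s$ lies in $T^*$. Under the deviation, strategy-proofness of DA gives $\mu_i\succeq_i\hat\mu_i$, so $\hat\mu_i\notin U_s$. Hence $U_s$ can only be reached through tickets in $T(\hat\sigma_i)\cap U_s$, and that probability is $1-(1-p)^{|T(\hat\sigma_i)\cap U_s|}\le 1-(1-p)^{|U_s|}$.

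Thus the truthful lottery weakly sd-dominates every deviation lottery.

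\textbf{Main obstacle.} The delicate step is Step 1's claim that no strict blocking pair exists. Truthful DA is stable against the submitted profile, whereas appealability and the student's choice among upheld appeals are governed partly by true preferences. The claim survives this mismatch for the following reason. A strict blocking pair $(i,s)$ requires $s$ to be ranked above $\hat\mu_i$ in the report, with $i\rhd_s j$ for some occupant $j$ of $s$. Stability of $\hat\mu$ with respect to $(\hat\sigma_i,\succ_{-i})$ rules out exactly such pairs, and the tie-breaking order only refines ties, so it cannot create a strict priority comparison against DA.

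A secondary point is that the independence of the draws makes the reachability probability of $U_s$ depend only on the number of tickets in $U_s$. That count is weakly larger under truth-telling, and this monotonicity is what closes the argument.
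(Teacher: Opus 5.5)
Your proof is correct and follows essentially the same route as the paper. DA's stability with respect to submitted preferences makes every admissible appeal a probability-$p$ ticket, DA's strategy-proofness gives $\mu_i\succeq_i\hat\mu_i$, and for any upper contour set not containing $\mu_i$ the truthful report holds a superset of the relevant tickets. You are slightly more explicit than the paper about why the deviation's first-stage match lies outside $U_s$ and why tie-breaking cannot create a strict priority claim, but the argument is the same.
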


\begin{proof}
	Fix student \(i\), true preferences \(\succ_i\), and reports
	\(\succ_{-i}\) of the other students. Let
	\[
	\mu = \da(\succ_i,\succ_{-i})
	\quad \text{and} \quad
	\nu = \da(\hat\sigma_i,\succ_{-i})
	\]
	be the truthful and misreported DA outcomes. Since DA is
	strategy-proof,
	\[
	\mu_i \succeq_i \nu_i.
	\]
	
	Let
	\[
	\Lambda = (\da \circ r_2)(\succ_i,\succ_{-i})
	\quad \text{and} \quad
	\widehat\Lambda = (\da \circ r_2)(\hat\sigma_i,\succ_{-i})
	\]
	be the induced lotteries. We show
	\[
	\Lambda_i \succeq_i^{sd} \widehat\Lambda_i.
	\]
	
	Fix a school \(s\), and let
	\[
	U_i(s)=\{t \in S \cup \{s_\emptyset\}: t \succeq_i s\}.
	\]
	
	\emph{Case 1: \(\mu_i \in U_i(s)\).}
	Truthful reporting gives student \(i\) a first-stage match in
	\(U_i(s)\). Since appeals can only move a student to schools ranked
	above her first-stage match, every realization under truthful reporting
	places \(i\) in \(U_i(s)\). Hence
	\[
	\Pr_{\Lambda_i}(U_i(s))=1
	\ge
	\Pr_{\widehat\Lambda_i}(U_i(s)).
	\]
	
	\emph{Case 2: \(\mu_i \notin U_i(s)\).}
	Every school in \(U_i(s)\) is strictly preferred by \(i\) to \(\mu_i\).
	Under truthful DA, student \(i\) applies to every such school before
	reaching \(\mu_i\), and is rejected from each of them. Hence every
	school in \(U_i(s)\) is an admissible appeal school under truthful
	reporting.
	
	Under the misreport, student \(i\) can receive a school in \(U_i(s)\)
	after the appeal stage only if some admissible appeal to a school in
	\(U_i(s)\) is upheld. Every such school is also admissible under
	truthful reporting, by the previous paragraph. Thus the set of
	truthful admissible appeals to schools in \(U_i(s)\) contains the set
	of misreport-induced admissible appeals to schools in \(U_i(s)\).
	
	Moreover, no such appeal is upheld with probability one under either
	report. Indeed, DA is stable with respect to the reported preference
	profile and the priority order used by the mechanism. Thus, whenever
	student \(i\) is rejected from a school \(t\) that she ranked above her
	DA match, \((i,t)\) is not a strict blocking pair of the DA outcome.
	Consequently, under \(r_2\), any admissible appeal by \(i\) to such a
	school is upheld with probability \(p\), and not with probability one.
	
	Since truthful reporting gives \(i\) a superset of the probability-\(p\)
	appeal tickets to schools in \(U_i(s)\), the probability that at least
	one appeal to a school in \(U_i(s)\) is upheld is weakly higher under
	truthful reporting than under the misreport. Therefore
	\[
	\Pr_{\Lambda_i}(U_i(s))
	\ge
	\Pr_{\widehat\Lambda_i}(U_i(s)).
	\]
	
	Since this holds for every upper contour set,
	\[
	\Lambda_i \succeq_i^{sd} \widehat\Lambda_i.
	\]
	Because \(i\) and \(\hat\sigma_i\) were arbitrary,
	\(\da \circ r_2\) is sd-strategy-proof.
\end{proof}

The proof reveals two distinct channels through which truthful reporting
dominates under \(\da \circ r_2\). First, DA's strategy-proofness
ensures a weakly better first-stage outcome. Second, truthful reporting
generates weakly more appeal opportunities, since listing a school above
one's match is necessary for filing an appeal, and truthful reporting
lists every truly preferred school. These channels reinforce each other:
the truthful report maximizes both the fallback, namely the DA match,
and the set of lottery tickets.

\subsection{IA Incentives under Probabilistic Appeals}

Unlike DA, IA remains manipulable under \(r_2\). However, probabilistic
appeals reduce the gain from standard IA manipulations. Under \(r_2\),
a rejected application to a desired school is not wasted: even absent a
priority claim, it creates a lottery ticket. We formalize this effect in
two propositions.

\begin{proposition}
	\label{prop:ia_r2_lottery_truth}
	Fix \(p \in (0,1)\), a preference profile \(\succ\), a student \(i\),
	and a von Neumann -- Morgenstern utility representation \(u_i\) of
	\(\succ_i\). Consider two reports for \(i\). Under report \(\sigma_i\),
	student \(i\) applies to school \(b\), is rejected, and receives
	first-stage match \(a\), where
	\[
	b \succ_i a,
	\]
	and the appeal to \(b\) is admissible under \(r_2\). Under report
	\(\hat\sigma_i\), student \(i\) also receives \(a\) in the first stage
	but does not generate an admissible appeal to \(b\). Holding fixed all
	other appeal opportunities, let \(Z_i\) denote the random final assignment
	generated by the first-stage match \(a\) and these other appeal
	opportunities. Then:
	\begin{enumerate}
		\item if the appeal to \(b\) is upheld under \(r_1\), that is, if
		\((i,b)\) is a strict blocking pair, the expected utility gain from
		\(\sigma_i\) over \(\hat\sigma_i\) is
		\[
		\mathbb{E}\left[\max\{u_i(b)-u_i(Z_i),0\}\right];
		\]
		
		\item if the appeal to \(b\) is not upheld under \(r_1\), the
		expected utility gain is
		\[
		p\,\mathbb{E}\left[\max\{u_i(b)-u_i(Z_i),0\}\right].
		\]
	\end{enumerate}
\end{proposition}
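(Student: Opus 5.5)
The plan is a coupling argument: construct the final assignments under the two reports on a common probability space and compare them realization by realization. Fix all appeal draws other than the one for $b$. By the hypothesis ``holding fixed all other appeal opportunities,'' the random variable $Z_i$ is the same under $\sigma_i$ and $\hat\sigma_i$, because both reports give first-stage match $a$ and the same other admissible appeals, decided by the same independent draws. Let $E$ be the event that the appeal to $b$ is upheld. Under $r_1$ this event has probability one; otherwise it has probability $p$, independent of the other draws by the definition of $r_2$.

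Under $\hat\sigma_i$, the final assignment is exactly $Z_i$. Under $\sigma_i$, the student selects her $\succ_i$-best school among the upheld appeals, with $a$ as fallback. So her final assignment is $Z_i$ off $E$. On $E$, it is the $\succ_i$-better of $b$ and $Z_i$, because the best element of (upheld set $\cup\{b\}$) equals the better of $b$ and the best element of the upheld set, with $a$ as fallback in both. The utility of the $\sigma_i$ outcome is therefore
\[
u_i(Z_i)+\mathbf 1_E\max\{u_i(b)-u_i(Z_i),0\},
\]
using that $u_i$ represents $\succ_i$, so $u$ of the better school equals the maximum of the two utilities.

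Next take expectations. Since $E$ is independent of the draws determining $Z_i$, we get
\[
\mathbb E[\mathbf 1_E\max\{u_i(b)-u_i(Z_i),0\}]=\Pr(E)\,\mathbb E[\max\{u_i(b)-u_i(Z_i),0\}].
\]
Substituting $\Pr(E)=1$ gives case 1, and $\Pr(E)=p$ gives case 2.

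The main obstacle is justifying that $Z_i$ is genuinely the same object under both reports, and that its other appeals are independent of $E$. In the strict-blocking case this needs care: whether $(i,b)$ is a strict blocking pair depends on the first-stage matching. I would read the hypothesis as fixing the first-stage matching's relevant features, and the status of the other appeals, as given. With that reading, the per-realization identity is purely about choice from a set, and the rest is linearity of expectation together with the independence built into $r_2$.
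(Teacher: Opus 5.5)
Your proposal is correct and follows essentially the same argument as the paper's proof. Both fix the other appeal draws so that $Z_i$ is common to the two reports, note that the extra appeal to $b$ changes the outcome only when it is upheld and $b\succ_i Z_i$, and take expectations using that this event has probability one in the strict-blocking case and independent probability $p$ otherwise. Your indicator identity $u_i(Z_i)+\mathbf 1_E\max\{u_i(b)-u_i(Z_i),0\}$ makes explicit the coupling and independence step the paper states informally, and your reading of ``holding fixed all other appeal opportunities'' matches the paper's.
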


\begin{proof}
	Under \(\hat\sigma_i\), student \(i\)'s final assignment is \(Z_i\).
	Under \(\sigma_i\), student \(i\) has the same first-stage match and
	the same other appeal opportunities, but additionally has an admissible
	appeal to \(b\).
	
	If \((i,b)\) is a strict blocking pair, \(r_2\) upholds the appeal with
	probability one. The additional appeal changes \(i\)'s final assignment
	only when \(b\succ_i Z_i\), in which case the utility gain is
	\(u_i(b)-u_i(Z_i)\). Taking expectations gives
	\[
	\mathbb{E}\left[\max\{u_i(b)-u_i(Z_i),0\}\right].
	\]
	
	If \((i,b)\) is not a strict blocking pair, \(r_2\) upholds the appeal
	with probability \(p\), independently of the other appeal outcomes.
	The additional appeal again matters only when \(b\succ_i Z_i\).
	Hence the expected utility gain is
	\[
	p\,\mathbb{E}\left[\max\{u_i(b)-u_i(Z_i),0\}\right].
	\]
\end{proof}

\begin{proposition}
	\label{prop:ia_r2_reduces_manipulation}
	Fix \(p \in (0,1)\), a preference profile \(\succ\), a student \(i\),
	and a von Neumann -- Morgenstern utility representation \(u_i\) of
	\(\succ_i\). Suppose that under truthful IA, student \(i\) applies to
	school \(b\), is rejected, and obtains \(a\), with
	\[
	b \succ_i a.
	\]
	Suppose also that a deviation \(\hat\sigma_i\) obtains \(b\) directly
	in the first stage. Holding fixed appeal opportunities to schools
	strictly preferred to \(b\):
	\begin{enumerate}
		\item if the truthful appeal to \(b\) is upheld under \(r_1\), the
		deviation yields no gain from obtaining \(b\);
		
		\item if the truthful appeal to \(b\) is not upheld under \(r_1\),
		the expected utility gain from the deviation is at most
		\[
		(1-p)\bigl(u_i(b)-u_i(a)\bigr).
		\]
	\end{enumerate}
\end{proposition}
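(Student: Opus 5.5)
Part 1 is immediate, and part 2 reduces to a pointwise comparison of the student's final assignment under the truthful report and under the deviation $\hat\sigma_i$, realization by realization of the independent appeal draws. Throughout, I use the proposition's standing assumption: the appeal opportunities to schools strictly preferred to $b$ are the same under both reports, and their draws are coupled. Let $W_i$ be the random best school, under $\succ_i$, among the upheld appeals to schools strictly above $b$; set $W_i=\varnothing$ if none is upheld.

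Under the deviation, $i$ holds $b$ after the first stage. Appeals can only move her upward, so her final assignment is $W_i$ if $W_i\neq\varnothing$ and $b$ otherwise. Under truthful IA, her first-stage match is $a$, and she has an admissible appeal to $b$, because she ranked $b$ above $a$ and was rejected from it.

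For part 1, suppose $(i,b)$ is a strict blocking pair. Then $r_1$, and hence $r_2$, upholds the appeal to $b$ with probability one. In every realization, the truthful final assignment is the best school among $\{b\}$, the realized upheld appeals above $b$, and possibly others below $b$, which are irrelevant. This equals the deviation's assignment: $W_i$ when nonempty, $b$ otherwise. The final lotteries coincide, so obtaining $b$ directly adds nothing. This is exactly the case already treated by Proposition~\ref{prop:ia_r2_lottery_truth}(1) with the roles reversed.

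For part 2, the appeal to $b$ is upheld independently with probability $p$. Condition on a realization.
\begin{itemize}
\item If $W_i\neq\varnothing$, both reports yield $W_i$, so the gain is zero.
\item If $W_i=\varnothing$, the deviation yields $b$. The truthful report yields $b$ with probability $p$, and otherwise some school $z$ weakly preferred to $a$, since appeals never move a student below her first-stage match $a$.
\end{itemize}
The conditional gain in the second case is therefore at most $(1-p)\bigl(u_i(b)-u_i(a)\bigr)$, using $u_i(z)\ge u_i(a)$. Averaging over realizations, and noting that $\Pr(W_i=\varnothing)\le 1$ and $u_i(b)>u_i(a)$, gives the bound.

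The main obstacle is making "holding fixed appeal opportunities" precise. Under truthful reporting, $i$ may also hold appeals to schools between $a$ and $b$. These only help the truthful side, so they preserve the inequality. The deviation, however, could create appeals above $b$ that the truthful report lacks, and the hypothesis is what rules this out. I will state the coupling explicitly: the same appeal set above $b$ with identical independent draws. I will also note that any extra truthful tickets can only raise $u_i(z)$ above $u_i(a)$.
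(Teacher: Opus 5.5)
Your proof is correct and follows essentially the same route as the paper's. You couple the draws on appeals to schools strictly above $b$, observe that whenever one of them is upheld both reports give the same outcome, and otherwise compare $b$ with the truthful lottery, which gives $b$ with probability one (part~1) or $p$ (part~2) and a school weakly above $a$ otherwise. The one loose remark is the cross-reference to Proposition~\ref{prop:ia_r2_lottery_truth}(1) ``with the roles reversed'': that result compares having versus lacking an appeal to $b$ at the same first-stage match $a$, not obtaining $b$ directly. Nothing in your argument depends on it.
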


\begin{proof}
	Under truthful reporting, student \(i\) receives \(a\) in the first
	stage and has an admissible appeal to \(b\). Couple the realizations of
	appeals to schools strictly preferred to \(b\) under truthful reporting
	and under the deviation.
	
	If the appeal to \(b\) is upheld under \(r_1\), then \(r_2\) upholds it
	with probability one. If an appeal to a school strictly preferred to
	\(b\) succeeds, both reports yield the same better outcome; otherwise,
	both yield \(b\). Hence the deviation yields no gain from obtaining
	\(b\) directly.
	
	If the appeal to \(b\) is not upheld under \(r_1\), then \(r_2\) upholds
	it with probability \(p\). Whenever an appeal to a school strictly
	preferred to \(b\) succeeds, obtaining \(b\) directly yields no gain.
	Otherwise, the deviation gives \(b\), while truthful reporting gives
	\(b\) with probability \(p\) and an outcome weakly preferred to \(a\)
	with probability \(1-p\). The expected utility gain from the deviation
	is therefore at most
	\[
	(1-p)\bigl(u_i(b)-u_i(a)\bigr).
	\]
\end{proof}

Propositions~\ref{prop:ia_r2_lottery_truth}
and~\ref{prop:ia_r2_reduces_manipulation} capture a local incentive
effect: probabilistic appeals create an option value for truthful
applications to risky schools. For this particular safe-school
manipulation channel, the direct gain from obtaining \(b\) in the first
stage is reduced to at most a factor of \(1-p\) of its no-appeals value,
and this bound vanishes as \(p \to 1\). This does not make IA
sd-strategy-proof, because other manipulations, including preemption and
manipulations that create appeal opportunities to still better schools,
may remain profitable.

\subsection{Discussion}

The results in this appendix complete the incentive analysis across all
three appeal rules. Table~\ref{tab:incentive_summary} summarizes.

\begin{table}[H]
	\centering
	\caption{Incentive properties across appeal rules}
	\label{tab:incentive_summary}
	\resizebox{\textwidth}{!}{\begin{tabular}{lcccc}
		\toprule
		& \(r_0\)  & \(r_1\)  & \(r_2\) 
		& \(r_3\)  \\
		\midrule
		DA strategy-proof
		& yes & yes & yes (sd)
		& no \\
	IA manipulable
	& yes & yes, but less & yes, lower safe-school gains
	& yes, but less \\
		\bottomrule
	\end{tabular}}
\end{table}

Among the rules considered here, \(r_3\) is the one under which DA
loses strategy-proofness in the presence of weak priorities, because
appeals can be upheld within a priority class after a tie-break loss.
Under strict priorities, \(r_3\) coincides with \(r_1\), and DA remains
strategy-proof. 
Both \(r_1\) and \(r_2\) preserve DA's strategy-proofness
deterministically and in stochastic dominance, respectively. Under IA,
\(r_1\) reduces manipulability, while \(r_2\) reduces the gains from an
important class of manipulations.

The probabilistic rule \(r_2\) therefore has a distinctive role. Like
\(r_1\), it preserves DA's incentive property, now in
stochastic-dominance terms. Under IA, it also supports an sd-Nash
equilibrium whose lottery stochastically Pareto-dominates truthful DA.
At the same time, \(r_2\) gives students a positive reason to rank
desired schools even when they have no priority claim: a rejected
application becomes a lottery ticket. The rule does not make IA
sd-strategy-proof, but it reduces the expected gain from an important
class of safe-school manipulations.

These findings have a direct institutional interpretation. England's
School Admission Appeals Code mandates that panels uphold appeals when
admission arrangements were misapplied, corresponding to \(r_1\), but
panels also exercise discretion in borderline cases, leading to
substantial variation in success rates across local authorities. This
discretionary component is captured by \(r_2\). Our results show that
this heterogeneity, while potentially concerning from an equity
perspective, does not undermine DA's incentive properties and further
discourages strategic misreporting under IA.

\newpage

%r3
\section{Analyzing $r_3$.\hfill}
\label{app:weakpriorities}
\subsection{\(r_3\), Pareto improvements, and weak blocking pairs}

Lemma~\ref{prop:pareto} shows that strict appeals produce a Pareto improvement
and eliminate every appealable strict blocking pair involving a school that
the student truly prefers to her final assignment. The corresponding result
under the weak appeal rule is as follows.

\begin{lemma}
	For every mechanism \(M\) and every report profile \(\hat\succ\),
	\((M\circ r_3)(\hat\succ)\) weakly Pareto-dominates
	\((M\circ r_0)(\hat\succ)\) with respect to true preferences and admits no
	weak blocking pair \((i,s)\) such that
	\[
	s\in A_i(\hat\succ,M)
	\quad\text{and}\quad
	s\succ_i(M\circ r_3)_i(\hat\succ).
	\]
\end{lemma}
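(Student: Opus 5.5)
The plan is to mirror the proof of Lemma~\ref{prop:pareto}, replacing strict priority comparisons by weak ones throughout. Write \(\mu^0=M(\hat\succ)\) and \(\mu=(M\circ r_3)(\hat\succ)\). Rule \(r_3\) upholds an appeal of \(i\) to \(s\in a_i\) exactly when \((i,s)\) is a weak blocking pair of \(\mu^0\). Such a pair is one with \(i\unrhd_s j\) for some initial occupant \(j\in(\mu^0_s)^{-1}\), which includes the case where \(j\) beat \(i\) only through the tie-break.

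\textbf{Step 1: weak Pareto dominance.} This is immediate, exactly as before. Every student either keeps \(\mu^0_i\), or moves to her \(\succ_i\)-best upheld appeal. Every appeal in \(a_i\) is to a school she truly prefers to \(\mu^0_i\), so \(\mu_i\succsim_i\mu^0_i\) for all \(i\).

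\textbf{Step 2: no appealable weak blocking pair.} Suppose toward a contradiction that \((i,s)\) is a weak blocking pair of \(\mu\) with \(s\in A_i(\hat\succ,M)\) and \(s\succ_i\mu_i\). By Step~1, \(s\succ_i\mu^0_i\). Together with \(s\in A_i\), this gives \(s\in a_i\) under the maintained full-appeal assumption. There is some \(k\in\mu_s^{-1}\) with \(i\unrhd_s k\). We split into two cases.
\begin{itemize}
\item \emph{Case \(k\in(\mu^0_s)^{-1}\).} Then \((i,s)\) is already a weak blocking pair of \(\mu^0\).
\item \emph{Case \(k\) arrived at \(s\) through an appeal.} By the definition of \(r_3\), there is an initial occupant \(j\) with \(k\unrhd_s j\). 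Transitivity of the weak priority \(\unrhd_s\) then yields \(i\unrhd_s j\), so again \((i,s)\) weakly blocks \(\mu^0\).
\end{itemize}
In either case \(r_3\) upholds \(i\)'s appeal to \(s\). Since \(i\) enrols at her \(\succ_i\)-best upheld school, \(\mu_i\succsim_i s\), which contradicts \(s\succ_i\mu_i\).

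\textbf{Main obstacle.} The only delicate point is the transitivity step in the second case. One must check that the comparisons are with respect to the underlying weak priority \(\unrhd_s\), not the tie-broken strict order, so that chaining two weak comparisons legitimately produces a weak comparison. One must also check that an occupant admitted via appeal always traces back to some \emph{initial} occupant, which holds because \(r_3\) evaluates blocking only against \(\mu^0\).

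A secondary check concerns the weak blocking pair in \(\mu\) involving \(k=i\) herself or other students who left \(s\) through appeals. Neither causes trouble: \(i\notin\mu_s^{-1}\) because \(s\succ_i\mu_i\), and a student who left \(s\) is simply not in \(\mu_s^{-1}\).
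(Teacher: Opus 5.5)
Your proof is correct and is the route the paper intends. The paper omits this proof, saying it is identical to that of Lemma~\ref{prop:pareto} with strict priority comparisons replaced by weak ones, and that is exactly your argument, including the transitivity step \(i\unrhd_s k\unrhd_s j\) that traces an appeal-admitted occupant back to an initial occupant of \(s\).
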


The proof is identical to that of Lemma~\ref{prop:pareto}, replacing strict
priority comparisons with weak priority comparisons, and is therefore omitted.

\subsection{$\ia \circ r_0$ is more manipulable than $\ia \circ r_3$}\label{prop:ia_less_manipulable_r1_r3}

In the main text, we proved that $\ia \circ r_0$ is more manipulable than $\ia \circ r_1$. Now we show the same for $\ia \circ r_3$.
\begin{proposition}
	$\ia \circ r_0$ is more manipulable than $\ia \circ r_3$.
\end{proposition}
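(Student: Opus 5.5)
The plan is to follow the proof of Proposition~\ref{thm:ia_manipulable} step by step, replacing strict priority comparisons by weak ones. The one new difficulty comes from the tie-break, and I treat it by an extra case split. Strictness of the ranking comes for free: Example~\ref{ex:moremanipulable} has strict priorities, so $r_3$ coincides with $r_1$ there. Hence that profile is manipulable under $\ia\circ r_0$ but not under $\ia\circ r_3$, and it remains only to prove containment.

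For containment, fix $(\succ_i,\succ_{-i})$. Suppose $\hat\succ_i$ gives $y=(\ia\circ r_3)_i(\hat\succ_i,\succ_{-i})\succ_i x=(\ia\circ r_3)_i(\succ_i,\succ_{-i})$. By the weak-priority analogue of Lemma~\ref{prop:pareto} stated above, $x\succsim_i(\ia\circ r_0)_i(\succ_i,\succ_{-i})$. If $i$ obtains $y$ in the first stage under $\hat\succ_i$, the same report manipulates $\ia\circ r_0$. Otherwise $i$ obtains $y$ through an upheld $r_3$ appeal, so some first-stage occupant $j$ of $y$ satisfies $i\unrhd_y j$. Let $A_y^1$ be the set of other students ranking $y$ first; this set is fixed by $\succ_{-i}$. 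I split on whether fewer than $q_y$ members of $A_y^1$ precede $i$ in the tie-broken strict order at $y$.

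If fewer than $q_y$ precede $i$, then any report ranking $y$ first gets $i$ admitted to $y$ in round 1 under $\ia\circ r_0$. This is a profitable manipulation, exactly as in the proof of Proposition~\ref{thm:ia_manipulable}.

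If at least $q_y$ precede $i$, the argument of Proposition~\ref{thm:ia_manipulable} breaks down. The round-1 occupants of $y$ are then a fixed set $F\subseteq A_y^1$, independent of $i$'s report, all tie-break-ahead of $i$. Yet some of them, such as $j$, may have equal underlying priority, so the weak claim survives. In this case I would not construct an $r_0$ manipulation but instead derive a contradiction from the truthful outcome.

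Under the truthful report $\succ_i$, student $i$ either obtains a school strictly better than $y$ before applying to $y$, or she applies to $y$. The first possibility gives $x\succ_i y$, which is impossible. In the second, $i$ is rejected from $y$, because the seats are taken by $F$. This holds even if $y$ is her true first choice, since every member of $F$ beats her in the tie-broken order. Hence $y\in A_i(\succ,\ia)$. Since $y\succ_i x\succsim_i$ (her first-stage match), she appeals to $y$. Because the occupant $j\in F$ is still at $y$ (IA admissions are permanent), the appeal is upheld under $r_3$, which gives $x\succsim_i y$, again a contradiction. So only the first case can occur, and containment follows.

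The main obstacle is exactly this second case. I need to check carefully that $F$ really does not depend on $i$'s report, and that the truthful first-stage occupants of $y$ still include $j$. Both rely on $y$ being filled entirely in round 1 by members of $A_y^1$ who outrank $i$ in the tie-broken order. The rest of the argument is a transcription of the $r_1$ proof.
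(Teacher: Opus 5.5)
Your proposal is correct and follows essentially the same route as the paper. Both arguments have $i$ rank $y$ first, and in the case where round-1 applicants who beat $i$ in the tie-broken order fill $y$ independently of her report, both derive a contradiction from an upheld truthful weak appeal to $y$. The only difference is that you use the deviation's witness $j$, which must lie in the report-independent set $F$, to trigger that appeal directly, whereas the paper splits into the subcases where some member of that set is tied with $i$ and where all of them strictly outrank her.
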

\begin{proof}[Proof for \(r_3\)]
	Suppose student \(i\) can profitably manipulate \(\ia\circ r_3\). Let
	\(\succ_i'\) be a profitable report, holding all other reports fixed at
	\(\succ_{-i}\). Let
	\[
	\mu^0=\ia(\succ_i',\succ_{-i})
	\]
	be the first-stage IA outcome under this deviation, and let
	\[
	\mu=(\ia\circ r_3)(\succ_i',\succ_{-i})
	\]
	be the final outcome after weak appeals. Write \(y=\mu_i\). Since the
	deviation is profitable,
	\[
	y\succ_i(\ia\circ r_3)_i(\succ).
	\]
	Appeals never make a student worse off relative to the first-stage IA outcome,
	so
	\[
	(\ia\circ r_3)_i(\succ)\succeq_i \ia_i(\succ).
	\]
	Hence
	\[
	y\succ_i \ia_i(\succ).
	\]
	We show that \(i\) can also obtain \(y\) by manipulating plain IA.
	
	If \(y=\mu_i^0\), then \(i\) already obtains \(y\) in the first-stage IA
	outcome under the report \(\succ_i'\). Therefore the same report is a
	profitable manipulation of \(\ia=\ia\circ r_0\).
	
	Suppose instead that \(y\neq \mu_i^0\). Then \(i\) obtains \(y\) through a
	successful weak appeal. Hence
	\[
	y\succ_i'\mu_i^0
	\]
	and there exists a first-stage occupant \(j\in(\mu_y^0)^{-1}\) such that
	\[
	i\unrhd_y j .
	\]
	
	We claim that \(i\) obtains \(y\) directly under plain IA by reporting \(y\)
	as her first choice. Suppose, toward a contradiction, that she does not. Since
	\(i\) applies to \(y\) in the first round, this means that all \(q_y\) seats
	of \(y\) are filled in the first round by a set \(H\) of students, all
	different from \(i\), who rank \(y\) first and who beat \(i\) under the
	tie-broken priority order used by IA.
	
	Because only \(i\)'s report has changed, every student in \(H\) also ranks
	\(y\) first in the truthful profile. Moreover, the same students in \(H\) are
	accepted by \(y\) in the first round of truthful IA: removing \(i\) from the
	first-round applicant pool at \(y\) cannot prevent the students in \(H\) from
	filling all \(q_y\) seats, and IA acceptances are permanent. Hence every
	student in \(H\) is assigned to \(y\) in the truthful first-stage IA outcome.
	
	If some \(h\in H\) is tied with \(i\) under the original weak priority at
	\(y\), then
	\[
	i\unrhd_y h .
	\]
	Since \(h\) is assigned to \(y\) in the truthful first-stage IA outcome and
	since \(y\succ_i\ia_i(\succ)\), student \(i\) has a truthful weak appeal to
	\(y\). Under \(r_3\), this appeal is upheld. Therefore the truthful final
	outcome \((\ia\circ r_3)_i(\succ)\) assigns \(i\) to \(y\) or to a school she
	prefers to \(y\), contradicting
	\[
	y\succ_i(\ia\circ r_3)_i(\succ).
	\]
	
	Therefore every student in \(H\) must strictly outrank \(i\) at \(y\). Since
	the students in \(H\) rank \(y\) first, they occupy all seats at \(y\) from
	the first round onward under any unilateral report by \(i\). Thus, under the
	deviation \(\succ_i'\), every first-stage occupant of \(y\) strictly outranks
	\(i\). This contradicts the fact that \(i\)'s appeal to \(y\) is successful
	under \(r_3\), which requires some first-stage occupant \(j\in(\mu_y^0)^{-1}\)
	with
	\[
	i\unrhd_y j .
	\]
	
	Hence \(i\) must be admitted to \(y\) directly when she reports \(y\) first.
	Since \(y\succ_i\ia_i(\succ)\), reporting \(y\) first is a profitable
	manipulation of plain IA. Therefore every profitable manipulation of
	\(\ia\circ r_3\) implies a profitable manipulation of \(\ia\circ r_0=\ia\).
	
	It remains to show that the containment is strict.
	Example~\ref{ex:moremanipulable} in the main text has strict priorities, so
	\(r_3\) coincides with \(r_1\). The profile is manipulable under
	\(\ia\circ r_0\) but not under \(\ia\circ r_1=\ia\circ r_3\).
	Therefore \(\ia\circ r_0\) is more manipulable than \(\ia\circ r_3\).
\end{proof}

\subsection{$\da \circ r_3$ is manipulable}
To see that $\da \circ r_3$ is manipulable, consider the school choice problem below with four students and four schools, each with one seat, and a common tie-break that favors students with larger indices. 
\begin{example}Preferences and priorities are given as follows (manipulations in parentheses):
	\label{ex:da_manipulable}
	\textcolor{white}{text}
	
	\begin{center}
		\begin{tabular}{CCCCC|CCCC}
			
			\succ_{i_1}&\succ_{i_2}&\succ_{i_3}&(\succ'_{i_3})&\succ_{i_4}&\unrhd_{s_1}&\unrhd_{s_2}&\unrhd_{s_3}&\unrhd_{s_4}\\
			\hline
			s_4 & s_4 & s_2 & (s_2)&s_1  & i_1 & i_2     &  \cdot     & i_4 \\
			s_1 & s_1 & s_4 & (s_3)&s_2 &  i_2 & i_3,i_4 &       \cdot& i_3 \\
			\cdot  & s_2 & s_3& \cdot& s_4  & i_4 &    \cdot     &       \cdot& i_2 \\
			\cdot & \cdot    & \cdot& \cdot    & \cdot    & \cdot        & \cdot    & \cdot     & i_1 \\
			
		\end{tabular}
	\end{center}
	
	With truthful preference reporting, $\da$ yields the matching
	$(i_1- s_1, i_2- s_2, i_3- s_3, i_4- s_4)$.
	There are no successful appeals under $r_3$, and thus the DA matching remains unaltered.
	Now let student $i_3$ misreport her preferences by submitting $s_2 \succ'_{i_3} s_3$. Now DA yields the first-stage matching
	$(i_1-s_1, i_2-s_4, i_3-s_3, i_4-s_2)$. Student $i_3$ then appeals her rejection at $s_2$. Since student $i_3$ and student $i_4$ have the same priority at $s_2$, the pair $(i_3,s_2)$ is a weak blocking pair. Hence the appeal succeeds under $r_3$, and student $i_3$ is reassigned from $s_3$ to $s_2$.
	Since student $i_3$ truly prefers $s_2$ to $s_3$, this misreport is profitable. Therefore $\da\circ r_3$ is manipulable.
\end{example}

\subsection{$\da \circ r_3$ can generate more successful appeals than $\ia \circ r_3$}

The preceding example shows that weak appeals can undermine the incentive
properties of DA. We now show that weak appeals can also overturn the natural
intuition that stable mechanisms should generate fewer successful appeals. Under
weak appeals, DA may create appeal opportunities for students who lose a
tie-break against otherwise equal-priority applicants. As a result, truthful DA
may generate more successful appeals than truthful IA.

\begin{proposition}
	\label{prop:r3_ia_fewer_appeals_than_da}
	There exists a school choice problem with weak priorities such that, under
	truthful reporting, \(\ia\circ r_3\) generates fewer successful appeals than
	\(\da\circ r_3\).
\end{proposition}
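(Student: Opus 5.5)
The statement is existential, so I would prove it with one explicit instance with weak priorities, a common tie-break $\tau$, and truthful reports. I would compute $\ia(\succ)$ and $\da(\succ)$ round by round, then list every weak blocking pair $(i,s)$ with $s\in A_i(\succ,M)$ for each first-stage outcome, since under truthful reports these are exactly the appeals $r_3$ upholds. The claim follows once the DA list is strictly longer. As in Example~\ref{ex:da_manipulable}, I would use unit-capacity schools plus one large safe school, and count successful appeals at the level of student--school pairs, matching the paper's school-specific notion of an appeal.

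The intended mechanism comes from the surrounding discussion: DA creates weak claims through tie-break losses, while IA can remove them. Take a tie at $s_1$ between $i_1$ and $i_2$, with $\tau$ favouring $i_1$. Student $i_2$ ranks $s_1$ first; $i_1$ ranks $s_1$ second, behind a school $s_2$ where she is strictly outranked by some $i_3$.
\begin{itemize}
\item Under DA, $i_1$ is rejected at $s_2$ and then displaces $i_2$ at $s_1$ by the tie-break. This leaves $i_2$ with an upheld weak appeal at $s_1$.
\item Under IA, $i_2$ takes $s_1$ permanently in round~1, so the tie-break never binds against her.
\end{itemize}

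Checking the first candidate (capacities $q_{s_1}=q_{s_2}=1$, safe school $s_3$; preferences $i_1: s_2,s_1,s_3$, $i_2: s_1,s_3$, $i_3: s_2,s_3$; priority $i_3\rhd_{s_2} i_1$) shows the hard part. Under IA, $i_1$ is now rejected at $s_1$ while tied with the occupant $i_2$, so she gets the weak appeal instead, and the counts come out equal. The tie claim simply migrates from one student to the other.

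So the main obstacle, and the step I would spend effort on, is preventing the IA loser from inheriting a weak claim. I would try the following modifications, in this order.
\begin{enumerate}
\item Give $i_1$, under IA, an intermediate school $s_4$ with $s_2\succ_{i_1}s_4\succ_{i_1}s_1$ that is still free in round~2 of IA. Then $s_1\notin A_{i_1}(\succ,\ia)$.
\item Have $s_4$ taken under DA by a fourth student who outranks $i_1$ there and arrives late. Because DA acceptances are tentative, she pushes $i_1$ down to $s_1$ in DA, but under IA she is too late.
\item Alternatively, duplicate the tie-break loss in DA using two tied pairs that share one IA resolution.
\end{enumerate}

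With a four- or five-student instance of this form, I would verify three things explicitly:
\begin{itemize}
\item the DA outcome produces at least one upheld weak claim;
\item every rejection under IA is by a strictly higher-priority occupant, or concerns a school the student does not rank above her IA assignment;
\item no strict claims arise under DA, which holds automatically by stability.
\end{itemize}
This yields strictly fewer successful $r_3$ appeals under $\ia\circ r_3$ than under $\da\circ r_3$.
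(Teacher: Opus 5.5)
\textbf{There is a genuine gap, and it is structural.} The statement is existential, so the proof is the instance, and you never produce one. Worse, the configuration your checklist aims at cannot exist. Your second bullet requires that \(\ia\circ r_3\) have no upheld appeal, and your first that \(\da\circ r_3\) have at least one.

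Here is why those two cannot hold together.
\begin{enumerate}
\item Run synchronous DA and IA side by side on truthful reports with the same tie-break. Until the first round in which DA displaces a previously held student, the two procedures coincide: the held and admitted sets agree, and every unmatched student applies to her \(t\)-th choice in round \(t\) under both.
\item If DA never displaces anyone, then \(\ia(\succ)=\da(\succ)\), so both composite mechanisms generate exactly the same appeals.
\item Otherwise, let \(t\) be the first round in which DA drops a held student \(j\) at some school \(s\). Let \(H\) be the students held at \(s\) entering round \(t\), and \(N\) the new applicants. DA keeps \(q_s\) students, at least \(q_s-|H|+1\) of them from \(N\). IA keeps all of \(H\) and at most \(q_s-|H|\) from \(N\).
\item So some \(i'\in N\) kept by DA is rejected by IA. Since DA kept \(i'\) and dropped \(j\), we have \(i'\rhd^\tau_s j\), hence \(i'\unrhd_s j\). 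In IA, \(j\) stays at \(s\) permanently.
\item Thus \((i',s)\) is an appeal upheld by \(r_3\) under \(\ia\circ r_3\).
\end{enumerate}
Consequently any witness must have at least one successful appeal under IA and at least two under DA. Your target of \(0\) versus \(1\) is impossible, not merely hard to find.

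This is exactly what defeats your step 2. The late-arriving student who outranks \(i_1\) at \(s_4\) is rejected from \(s_4\) in the IA first stage while \(i_1\) holds it. She therefore gets a strict, hence \(r_3\)-upheld, appeal that simply replaces \(i_1\)'s tie claim, and the counts stay level. Giving her a better IA school only shifts the same problem onto whoever displaces her in DA.

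Your option 3 points in the right direction only if read as ``two DA tie-losses against one IA claim,'' and it must actually be exhibited and checked. The paper does this with the seven-student, three-school, two-seat market from the experiment. Under truthful IA, the only upheld appeal is \(i_1\)'s to \(s_2\), where she is tied with the occupant \(i_3\). Truthful DA instead runs a long rejection chain that leaves \(i_3\) unassigned and \(i_5\) at \(s_3\). Both are in the same priority class at \(s_2\) as its occupants \(i_1,i_2\), which yields two upheld appeals.
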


\begin{proof}
	Consider a school choice problem with seven students,
	\(i_1,\ldots,i_7\), and three schools, \(s_1,s_2,s_3\), each with two seats.
	Priorities are weak, and ties within a priority class are broken in favour of
	students with smaller indices. Preferences and weak priorities are as follows.
	
	\begin{table}[H]
		\centering
		\begin{tabular}{ccccccc|ccc}
			\toprule
			\(\succ_{i_1}\) & \(\succ_{i_2}\) & \(\succ_{i_3}\) &
			\(\succ_{i_4}\) & \(\succ_{i_5}\) & \(\succ_{i_6}\) &
			\(\succ_{i_7}\)
			&
			\(\unrhd_{s_1}\) & \(\unrhd_{s_2}\) & \(\unrhd_{s_3}\) \\
			\midrule
			\(s_3\) & \(s_3\) & \(s_2\) & \(s_1\) & \(s_1\) & \(s_2\) & \(s_3\)
			&
			\(i_6,i_7\) & \(i_4\) & \(i_4,i_5\) \\
			\(s_2\) & \(s_2\) & \(s_1\) & \(s_3\) & \(s_2\) & \(s_3\) & \(s_2\)
			&
			\(i_5\) & \(i_1,i_2,i_3,i_5\) & \(i_3,i_6\) \\
			\(s_1\) & \(s_1\) & \(s_3\) & \(s_2\) & \(s_3\) & \(s_1\) & \(s_1\)
			&
			\(i_2,i_3,i_4\) & \(i_6,i_7\) & \(i_2,i_7\) \\
			\(s_\emptyset\) & \(s_\emptyset\) & \(s_\emptyset\) &
			\(s_\emptyset\) & \(s_\emptyset\) & \(s_\emptyset\) & \(s_\emptyset\)
			&
			\(i_1\) & \(\cdot\) & \(i_1\) \\
			\bottomrule
		\end{tabular}
	\end{table}
	
	Under truthful IA, the first-stage matching is
	\[
	(i_1-s_\emptyset,\ i_2-s_3,\ i_3-s_2,\ i_4-s_1,\ i_5-s_1,\ i_6-s_2,\ i_7-s_3).
	\]
	The only successful weak appeal is by student \(i_1\) to school \(s_2\).
	Indeed, \(i_1\) was rejected from \(s_2\), prefers \(s_2\) to being
	unassigned, and belongs to the same weak priority class at \(s_2\) as the
	incumbent \(i_3\). Hence \(i_1\)'s appeal to \(s_2\) is upheld under \(r_3\).
	No other student who fails to receive a more preferred school has a successful
	weak appeal. Thus \(\ia\circ r_3\) generates exactly one successful appeal.
	
	Now consider truthful DA, computed using the same tie-breaking rule. The
	first-stage DA matching is
	\[
	(i_1-s_2,\ i_2-s_2,\ i_3-s_\emptyset,\ i_4-s_3,\ i_5-s_3,\ i_6-s_1,\ i_7-s_1).
	\]
	Under this outcome, there are two successful weak appeals. Student \(i_3\)
	appeals to \(s_2\), where she belongs to the same weak priority class as the
	incumbents \(i_1\) and \(i_2\). Hence her appeal is upheld. Student \(i_5\)
	also appeals to \(s_2\), where she too belongs to the same weak priority class
	as \(i_1\) and \(i_2\). Hence her appeal is also upheld. No further appeal is
	needed for the comparison.
	
	Therefore, under truthful reporting, \(\ia\circ r_3\) generates one successful
	appeal, whereas \(\da\circ r_3\) generates two successful appeals. This proves
	the claim.
\end{proof}

In the previous example, truthful reporting is a Nash equilibrium of
\(\ia\circ r_3\). Indeed, under truthful reporting and weak appeals, every
student except \(i_1\) receives her first-choice school, while \(i_1\) obtains
\(s_2\) through a weak appeal. The only school \(i_1\) prefers to \(s_2\) is
\(s_3\), but both seats at \(s_3\) are occupied by students with strictly higher
weak priority than \(i_1\). Hence \(i_1\) cannot obtain \(s_3\), either
directly or by appeal, and no student has a profitable deviation. Thus, the
lower number of successful appeals under IA arises at an equilibrium outcome.
The reversal theorem below shows that this is not an isolated phenomenon:
under weak appeals, IA admits equilibrium outcomes with strong welfare
properties relative to DA.

\subsection{The Reversal Theorem}
Theorem~\ref{thm:ia_r1_dominates_da} shows that, under strict priorities,
\(\ia\circ r_1\) admits a Nash equilibrium whose outcome weakly
Pareto-dominates truthful DA. We now extend this result to weak priorities
under the weak appeal rule \(r_3\).
\begin{theorem}
	\label{thm:ia_r3_dominates_da_weak}
	Fix weak priorities \((\unrhd_s)_{s\in S}\), and fix a tie-breaking rule
	\(\tau\) that induces strict priorities \((\rhd_s^\tau)_{s\in S}\) used by
	\(\ia\) and \(\da\). Under the weak appeal rule \(r_3\), there exists a Nash
	equilibrium \(\sigma^*\) of \(\ia\circ r_3\) such that
	\[
	(\ia\circ r_3)(\sigma^*) \succsim \da(\succ).
	\]
	Furthermore, if $(\ia\circ r_3)(\sigma^*)\neq \da(\succ)$,
	then at least one student is strictly better off.
\end{theorem}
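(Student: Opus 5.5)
The plan is to rerun the proof of Theorem~\ref{thm:ia_r1_dominates_da} with two changes. Regular admissions use the tie-broken strict order $\rhd_s^\tau$, while overbooking and appeals use the weak priority $\unrhd_s$.

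\textbf{Step 1: define weak MIA.} I define a weak version of $\mia$ that runs in rounds on true preferences with permanent admissions. While regular capacity remains at $s$, it admits the $\rhd^\tau_s$-highest current applicants. Once the $q_s$ regular seats are filled, a further applicant $i$ is overbooked if and only if $i\unrhd_s j$ for some regular admittee $j$. Otherwise she is rejected.

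\textbf{Step 2: Rejection Lemma.} If $i$ is rejected from $s$, then every regular admittee $j$ at $s$ satisfies $j\rhd_s i$ strictly in the underlying weak priority, and hence also $j\rhd^\tau_s i$. The argument is word for word that of Lemma~\ref{lem:mia_rejection}: rejection means no regular admittee is weakly below $i$.

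\textbf{Step 3: dominance over DA.} I then prove $\mia(\succ)\succsim\da(\succ)$ exactly as in Lemma~\ref{lem:mia_dominates_da}. Here $\da$ is computed with $\tau$, so it is stable with respect to $\rhd^\tau$. The only input used there is that the regular admittee $h$ satisfies $h\rhd^\tau_{s_i} i$, which the Rejection Lemma delivers. The minimal-round argument then goes through unchanged.

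\textbf{Step 4: the equilibrium profile.} I use the same $\sigma^*$ as before:
\begin{itemize}
\item each regularly admitted student ranks her $\mia$ school first, followed by the remaining schools in her true order;
\item each overbooked student reports truthfully down to her $\mia$ school and truncates there;
\item each unassigned student reports truthfully.
\end{itemize}

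\textbf{Step 5: implementation.} The main obstacle is here. In the strict case we used that no student is overbooked at her first choice. That fails now, because a round-1 applicant who loses only the $\tau$ tie-break to a tied regular admittee is overbooked. I expect to replace that observation by a direct claim. For every school $s$, any student outside $D_s$ who applies to $s$ in round 1 under $\sigma^*$ is $\rhd^\tau_s$-below every member of $D_s$. There are three cases.
\begin{itemize}
\item \emph{Overbooked at $s$ with $s$ her true first choice.} In $\mia$ round 1 she lost a regular seat at $s$. If the regular seats were not filled in round 1 she would have been admitted, so they were filled in round 1 by exactly the students in $D_s$, each of whom beats her under $\tau$.
\item \emph{Overbooked or unassigned, and rejected from $s$ in $\mia$.} The Rejection Lemma gives that every member of $D_s$ is strictly above her in weak priority, hence above her under $\tau$.
\item \emph{$|D_s|<q_s$.} The same argument as in the strict proof shows that no outsider ranks $s$ first.
\end{itemize}
Hence the round-1 IA admits at $s$ are exactly $D_s$. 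An overbooked student $i$ at $s$ is then rejected from every school she ranks above $s$ and from $s$ itself. The witness $j\in D_s$ with $i\unrhd_s j$ occupies $s$, so $(i,s)$ is a weak blocking pair and $r_3$ upholds her appeal. Unassigned students face only regular admittees strictly above them in weak priority at every school, so they have no upheld appeal. Therefore $(\ia\circ r_3)(\sigma^*)=\mia(\succ)$.

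\textbf{Step 6: Nash equilibrium.} Take any student $i$ and any school $t\succ_i\mia_i(\succ)$. Then $i$ was rejected from $t$ in $\mia$. By Step 2, the occupants $D_t$ all rank $t$ first, fill it in round 1 regardless of $i$'s report, and are strictly above $i$ in weak priority. So $i$ can obtain $t$ neither directly nor through a weak blocking pair, and no deviation is profitable.

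Finally, since preferences are strict, Pareto dominance together with the inequality $(\ia\circ r_3)(\sigma^*)\neq\da(\succ)$ implies that at least one student is strictly better off.
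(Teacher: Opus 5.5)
Your proposal is correct and follows the paper's proof essentially step for step: the same weak version of $\mia$ (regular seats by $\rhd^\tau_s$, overbooking by $\unrhd_s$), the same Rejection Lemma and minimal-round descent against the $\tau$-stable DA outcome, the same profile $\sigma^*$, and the same Nash argument. The paper also resolves the one new obstacle you flag in the same way: a student overbooked at her true first choice lost the round-1 regular seats to $D_s$ under $\tau$, so $D_s$ remain exactly the round-1 IA admits.
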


\begin{proof}
	Throughout the proof, \(\da(\succ)\) denotes the student-proposing DA outcome
	computed using the tie-broken strict priorities \((\rhd_s^\tau)_{s\in S}\).
	Under \(r_3\), a student may appeal only to schools that she ranked above her
	first-stage IA assignment and from which she was rejected in the first stage.
	An appeal to such a school \(s\) is successful if and only if the student has
	weakly higher priority at \(s\) than some first-stage occupant of \(s\), that
	is, if \(i\unrhd_s j\) for some first-stage occupant \(j\) of \(s\).
	
	We prove the result using an auxiliary mechanism, denoted \(\mia^3\).

	\paragraph{Modified IA for weak appeals.}
	The mechanism \(\mia^3\) proceeds in synchronous rounds. In round \(k\), every
	currently unassigned student applies to the \(k\)-th school on her true
	preference list. Admissions are permanent. At each school \(s\), regular seats
	are filled only while fewer than \(q_s\) students have been regularly admitted
	to \(s\). If \(s\) has remaining regular capacity, it admits the
	highest-priority current applicants according to the tie-broken order
	\(\rhd_s^\tau\), up to the remaining regular capacity. Once \(q_s\) students
	have been regularly admitted to \(s\), no further regular admissions occur.
	Any further current applicant \(i\) is admitted by overbooking if and only if
	\[
	i\unrhd_s j
	\]
	for some regularly admitted student \(j\) at \(s\). In that case, \(s\)'s
	effective capacity increases by one. Students admitted within the regular
	capacity are called \emph{regularly admitted}; students admitted through the
	capacity-expansion rule are called \emph{overbooked}. All other applicants are
	rejected and proceed to the next school on their list in the following round.
	
	Let \(\mia^3(\succ)\) denote the outcome.

	\paragraph{Rejection lemma.}
	
	\begin{lemma}
		\label{lem:mia3_rejection}
		If student \(i\) is rejected from school \(s\) under \(\mia^3(\succ)\), then
		every regularly admitted student at \(s\) under \(\mia^3(\succ)\) has strictly
		higher weak priority at \(s\) than \(i\). That is, for every regularly admitted
		student \(j\) at \(s\),
		\[
		j \unrhd_s i
		\quad\text{and not}\quad
		i\unrhd_s j.
		\]
	\end{lemma}
	
	\begin{proof}
		Student \(i\) is rejected from \(s\) only if all regular seats at \(s\) have
		already been filled and \(i\) is not weakly higher priority than any regularly
		admitted student at \(s\). If there existed a regularly admitted student \(j\)
		at \(s\) such that \(i\unrhd_s j\), then the overbooking rule would admit
		\(i\). Therefore, for every regularly admitted student \(j\) at \(s\), it is
		not the case that \(i\unrhd_s j\). Since \(\unrhd_s\) is complete, this means
		that \(j\) has strictly higher weak priority than \(i\). Regular admittees are
		never removed, and no further regular admittees can be added after the
		\(q_s\) regular seats are filled. Hence the claim holds for every regular
		admittee of \(s\) under \(\mia^3(\succ)\).
	\end{proof}
	
	By the definition of overbooking, if student \(i\) is overbooked at school
	\(s\) under \(\mia^3(\succ)\), then there exists a regularly admitted student
	\(j\) at \(s\) such that
	\[
	i\unrhd_s j.
	\]

	\paragraph{\(\mia^3\) weakly Pareto-dominates DA.}
	
	\begin{lemma}
		\label{lem:mia3_dominates_da}
		For every school choice problem with weak priorities and fixed tie-breaking,
		\[
		\mia^3(\succ)\succsim \da(\succ).
		\]
	\end{lemma}
	
	\begin{proof}
		Suppose, toward a contradiction, that some student is strictly better off under
		\(\da(\succ)\) than under \(\mia^3(\succ)\). Let
		\[
		B=\{i\in I:\da_i(\succ)\succ_i \mia^3_i(\succ)\}.
		\]
		By assumption, \(B\neq\varnothing\).
		
		For each \(i\in B\), let
		\[
		s_i=\da_i(\succ).
		\]
		Since \(s_i\succ_i \mia^3_i(\succ)\), student \(i\) applied to \(s_i\) under
		\(\mia^3\) before receiving her \(\mia^3\)-assignment, and was rejected. Let
		\(r_i\) be the round in which \(i\) applied to \(s_i\) under \(\mia^3\). Let
		\(\rho_i\) be the round in which \(i\) receives her \(\mia^3\)-assignment, and
		set \(\rho_i=\infty\) if \(i\) is unassigned under \(\mia^3\). Then
		\[
		r_i<\rho_i.
		\]
		
		By Lemma~\ref{lem:mia3_rejection}, every regularly admitted student at
		\(s_i\) under \(\mia^3\) has strictly higher weak priority at \(s_i\) than
		\(i\). Since \(i\) was rejected from \(s_i\), the \(q_{s_i}\) regular seats at
		\(s_i\) had already been filled. Let \(D(s_i)\) denote the set of these
		\(q_{s_i}\) regular admittees.
		
		Since \(i\notin D(s_i)\), \(\da_i(\succ)=s_i\), and school \(s_i\) has only
		\(q_{s_i}\) seats under DA, at most \(q_{s_i}-1\) students from \(D(s_i)\) can
		be assigned to \(s_i\) under \(\da(\succ)\). Hence there exists some student
		\[
		h\in D(s_i)
		\]
		who is not assigned to \(s_i\) under \(\da(\succ)\).
		
		By Lemma~\ref{lem:mia3_rejection}, \(h\) has strictly higher weak priority
		than \(i\) at \(s_i\). Therefore \(h\) also has higher tie-broken priority than
		\(i\) at \(s_i\):
		\[
		h\rhd^\tau_{s_i} i.
		\]
		Since \(i\) is assigned to \(s_i\) under DA and DA is stable with respect to
		the tie-broken priorities, stability implies
		\[
		\da_h(\succ)\succ_h s_i.
		\]
		Otherwise, \((h,s_i)\) would be a blocking pair of the DA outcome under the
		tie-broken priorities. But \(h\)'s \(\mia^3\)-assignment is \(s_i\), because
		\(h\) is regularly admitted to \(s_i\) under \(\mia^3\). Therefore
		\[
		\da_h(\succ)\succ_h \mia^3_h(\succ),
		\]
		so \(h\in B\).
		
		Moreover, \(h\) was regularly admitted to \(s_i\) before or at the round in
		which \(i\) was rejected from \(s_i\). Hence
		\[
		\rho_h\le r_i<\rho_i.
		\]
		
		Thus, from any \(i\in B\), we can construct another student \(h\in B\) with
		\[
		\rho_h<\rho_i.
		\]
		If all students in \(B\) had \(\rho_i=\infty\), this argument would produce a
		student in \(B\) with finite \(\rho_h\), a contradiction. Hence \(B\) contains
		at least one student with finite \(\rho_i\). Choose \(i^*\in B\) with minimal
		finite \(\rho_{i^*}\). Applying the argument above to \(i^*\) produces
		\(h\in B\) with finite \(\rho_h<\rho_{i^*}\), contradicting the minimality of
		\(\rho_{i^*}\). Therefore \(B=\varnothing\), and
		\[
		\mia^3(\succ)\succsim \da(\succ).
		\]
	\end{proof}

	\paragraph{Constructing the report profile.}
	
	Partition students according to their status under \(\mia^3(\succ)\):
	\begin{itemize}
		\item \(D\): regularly admitted students;
		\item \(O\): overbooked students;
		\item \(U\): unassigned students.
	\end{itemize}
	
	Define \(\sigma^*\) as follows:
	\begin{itemize}
		\item If \(i\in D\) and \(\mia^3_i(\succ)=s\), then \(i\) ranks \(s\)
		first, followed by the remaining schools in her true order.
		\item If \(i\in O\) and \(\mia^3_i(\succ)=s\), then \(i\) reports
		truthfully down to \(s\), truncating her list after \(s\).
		\item If \(i\in U\), then \(i\) reports truthfully.
	\end{itemize}

	\paragraph{Implementation.}
	
	We show that
	\[
	(\ia\circ r_3)(\sigma^*)=\mia^3(\succ).
	\]
	
	Fix a school \(s\), and let \(D_s\) denote the set of students regularly
	admitted to \(s\) under \(\mia^3\). Every student in \(D_s\) ranks \(s\) first
	under \(\sigma^*\), so every student in \(D_s\) applies to \(s\) in round 1 of
	IA.
	
	We first show that the round-1 admits at \(s\) under \(\ia(\sigma^*)\) are
	exactly the students in \(D_s\).
	
	First suppose that \(|D_s|<q_s\). Then no student outside \(D_s\) ranks \(s\)
	first under \(\sigma^*\). To see this, suppose that some student
	\(i\notin D_s\) does. If \(i\in D\), then \(i\) ranks her own
	\(\mia^3\)-school first, so she cannot rank \(s\) first unless she belongs to
	\(D_s\), a contradiction. If \(i\in O\) or \(i\in U\), then \(s\) is also
	her first choice under \(\succ_i\), so she applied to \(s\) in round 1 under
	\(\mia^3\). Since \(|D_s|<q_s\), school \(s\) never filled its regular
	capacity under \(\mia^3\). Hence \(i\) would have been regularly admitted to
	\(s\), again a contradiction. Therefore the only round-1 applicants to \(s\)
	under \(\sigma^*\) are the students in \(D_s\), and they are admitted.
	
	Now suppose that \(|D_s|=q_s\). Consider any additional student
	\(i\notin D_s\) who also applies to \(s\) in round 1 under \(\sigma^*\).
	There are three cases.
	
	First, suppose \(i\in O\) and \(\mia^3_i(\succ)=s\). Then \(i\) is assigned to
	\(s\) by overbooking under \(\mia^3\). If \(i\) ranks \(s\) first under
	\(\sigma_i^*\), then \(s\) is also \(i\)'s first choice under \(\succ_i\). Thus
	\(i\) applied to \(s\) in round 1 under \(\mia^3\). Since \(i\) was not
	regularly admitted at \(s\), the students in \(D_s\) selected for the regular
	seats at \(s\) in round 1 have higher tie-broken priority than \(i\). Hence
	\(i\) is not selected for a regular seat at \(s\) in round 1 of IA under
	\(\sigma^*\).
	
	Second, suppose \(i\in O\), \(\mia^3_i(\succ)=t\neq s\), and \(s\succ_i t\).
	Then \(i\) was rejected from \(s\) under \(\mia^3\). By
	Lemma~\ref{lem:mia3_rejection}, every student in \(D_s\) has strictly higher
	weak priority at \(s\) than \(i\), and therefore higher tie-broken priority
	than \(i\).
	
	Third, suppose \(i\in U\) and \(s\) is her first choice. Then \(i\) was
	rejected from \(s\) under \(\mia^3\). Again, by
	Lemma~\ref{lem:mia3_rejection}, every student in \(D_s\) has strictly higher
	weak priority at \(s\) than \(i\), and therefore higher tie-broken priority
	than \(i\).
	
	Thus every student in \(D_s\) has higher tie-broken priority at \(s\) than
	every additional round-1 applicant to \(s\) under \(\sigma^*\). Since
	\(|D_s|=q_s\), the students in \(D_s\) are exactly the round-1 admits at
	\(s\) under \(\ia(\sigma^*)\).
	
	Now consider \(i\in O\), and let \(\mia^3_i(\succ)=s\). Under
	\(\sigma_i^*\), student \(i\) applies to all schools she truly prefers to
	\(s\), and then to \(s\). At every school \(t\succ_i s\), student \(i\) was
	rejected under \(\mia^3\). By Lemma~\ref{lem:mia3_rejection}, every regular
	admittee at \(t\) has strictly higher weak priority than \(i\), and therefore
	higher tie-broken priority than \(i\). Since the round-1 IA admits at \(t\)
	are exactly the students in \(D_t\), student \(i\) is rejected from every such
	\(t\) in the IA first stage.
	
	At \(s\), the regular seats are held by \(D_s\), so \(i\) is rejected in the
	first stage. Her list is truncated at \(s\), so she applies to no further
	school. Since \(i\) is overbooked at \(s\) under \(\mia^3\), there exists
	\(j\in D_s\) such that
	\[
	i\unrhd_s j.
	\]
	This student \(j\) occupies \(s\) in the IA first stage. Therefore \(i\)'s
	weak appeal to \(s\) is upheld under \(r_3\), and \(i\) obtains \(s\), exactly
	as under \(\mia^3\).
	
	Finally, consider \(i\in U\). Student \(i\) was rejected from every school she
	prefers to being unassigned under \(\mia^3\). By
	Lemma~\ref{lem:mia3_rejection}, every regular admittee at every such school
	has strictly higher weak priority than \(i\), and therefore higher tie-broken
	priority than \(i\). Since the first-stage IA seats at each such school are
	filled by the students in the corresponding \(D_t\), student \(i\) is rejected
	everywhere in the first stage and has no successful weak appeal. Hence \(i\)
	remains unassigned, as under \(\mia^3\).
	
	Therefore
	\[
	(\ia\circ r_3)(\sigma^*)=\mia^3(\succ).
	\]

	\paragraph{\(\sigma^*\) is a Nash equilibrium.}
	
	Fix a student \(i\), and consider any school \(t\) such that
	\[
	t\succ_i \mia^3_i(\succ).
	\]
	Student \(i\) was rejected from \(t\) under \(\mia^3\). By
	Lemma~\ref{lem:mia3_rejection}, every regular admittee of \(t\) has strictly
	higher weak priority at \(t\) than \(i\), and therefore higher tie-broken
	priority than \(i\). Under \(\sigma^*_{-i}\), these regular admittees rank
	\(t\) first and fill the first-stage IA seats at \(t\), independently of
	\(i\)'s report. Since IA admissions are permanent, \(i\) cannot obtain \(t\)
	directly, regardless of when she applies to \(t\).
	
	Nor can \(i\) obtain \(t\) through a weak appeal. The first-stage occupants of
	\(t\) are precisely the students in \(D_t\), all of whom have strictly higher
	weak priority at \(t\) than \(i\). Thus \(i\) is not weakly higher priority
	than any first-stage occupant of \(t\), so no weak appeal to \(t\) succeeds.
	
	Therefore no deviation allows \(i\) to obtain a school she strictly prefers to
	\(\mia^3_i(\succ)\). Since \(i\) was arbitrary, \(\sigma^*\) is a Nash
	equilibrium of \(\ia\circ r_3\).
	
	Combining the implementation result with Lemma~\ref{lem:mia3_dominates_da},
	we obtain
	\[
	(\ia\circ r_3)(\sigma^*)=\mia^3(\succ)\succsim\da(\succ).
	\]
	If \((\ia\circ r_3)(\sigma^*)\neq \da(\succ)\), then Pareto dominance together
	with strict preferences implies that at least one student is strictly better
	off. This proves the theorem.
\end{proof}

\newpage

%further theory examples
%\section{Further Examples}
%\label{app:iamanipulable}
%\input{furtherex}
%\newpage

%additional lab results
\section{Additional Laboratory Results.\hfill}
\label{app:additional-results}
\noindent This appendix reports a small set of robustness and supplementary
outcomes in the same order as the main text. Participant-level analyses use
the paper sample of 10,195 participant-round observations and 59 independent
matching groups. The blocking-pair analysis uses the 1,435 complete
seven-person market-rounds constructed from the same sample.

\subsection{Non-parametric tests}
\label{app:non-parametric-tests}

\begin{table}[htbp]
	\centering
	\caption{Nonparametric tests on matching-group means.}
	\label{tab:nonparametric}
	\begin{threeparttable}
		\begin{tabular*}{\textwidth}{@{\extracolsep{\fill}} lcccc @{}}
			\toprule
			& Truth- & Assigned & Stable & Blocking \\
			& telling & rank & share & pairs \\
			& (full ranking) & (induced) & (submitted) & (submitted) \\
			& \((1)\) & \((2)\) & \((3)\) & \((4)\) \\
			\midrule
			\multicolumn{5}{@{}l}{\textbf{Panel A. Appeal effects within mechanism}} \\
			DA, strict
			& \(0.03\) & \(0.00\) & --- & --- \\
			& {\footnotesize \([0.122]\)} & {\footnotesize \([0.789]\)} &  &  \\
			DA, probabilistic
			& \(0.03\) & \(-0.14^{***}\) & \(-0.40^{***}\) & \(0.35^{***}\) \\
			& {\footnotesize \([0.119]\)} & {\footnotesize \([<0.001]\)} & {\footnotesize \([<0.001]\)} & {\footnotesize \([<0.001]\)} \\
			IA, strict
			& \(0.09\) & \(-0.22^{***}\) & \(0.35^{***}\) & \(-0.55^{***}\) \\
			& {\footnotesize \([0.122]\)} & {\footnotesize \([0.004]\)} & {\footnotesize \([<0.001]\)} & {\footnotesize \([<0.001]\)} \\
			IA, probabilistic
			& \(0.10^{***}\) & \(-0.29^{***}\) & \(0.23^{***}\) & \(-0.48^{***}\) \\
			& {\footnotesize \([0.006]\)} & {\footnotesize \([<0.001]\)} & {\footnotesize \([<0.001]\)} & {\footnotesize \([<0.001]\)} \\
			\addlinespace
			\multicolumn{5}{@{}l}{\textbf{Panel B. IA \(-\) DA at each appeal regime}} \\
			No appeals
			& \(-0.01\) & \(-0.42^{***}\) & \(-0.65^{***}\) & \(0.85^{***}\) \\
			& {\footnotesize \([0.626]\)} & {\footnotesize \([<0.001]\)} & {\footnotesize \([<0.001]\)} & {\footnotesize \([<0.001]\)} \\
			Strict appeals
			& \(0.04\) & \(-0.64^{***}\) & \(-0.28^{***}\) & \(0.30^{***}\) \\
			& {\footnotesize \([0.453]\)} & {\footnotesize \([<0.001]\)} & {\footnotesize \([<0.001]\)} & {\footnotesize \([<0.001]\)} \\
			Probabilistic appeals
			& \(0.05^{**}\) & \(-0.58^{***}\) & \(-0.05\) & \(0.09\) \\
			& {\footnotesize \([0.027]\)} & {\footnotesize \([<0.001]\)} & {\footnotesize \([0.420]\)} & {\footnotesize \([0.404]\)} \\
			\bottomrule
		\end{tabular*}
		\begin{tablenotes}[flushleft]
			\footnotesize
			\item \(^{***}\ p<0.01\); \(^{**}\ p<0.05\); \(^{*}\ p<0.10\). Panel~A reports each appeal rule minus the no-appeals arm of the same mechanism; Panel~B reports IA minus DA within each appeal regime. Entries are Hodges--Lehmann location shifts, that is, the median of all pairwise differences between matching-group means in the two arms; exact two-sided Mann--Whitney \(p\)-values are in brackets. Each observation is one matching group: participant-level outcomes are averaged over all participant-rounds in a session, and market-level outcomes over all complete market-rounds in a session. There are nine matching groups in each arm except DA probabilistic (eleven) and IA probabilistic (twelve), so every comparison rests on between \(18\) and \(23\) independent observations. Lower values are better in Columns~(2) and~(4). Cells marked --- have no variation in either arm: DA is stable with respect to submitted preferences by construction, so submitted stability equals one and the submitted blocking-pair count equals zero in every DA market-round without appeals and with strict appeals. \(p\)-values are unadjusted for multiple testing.
		\end{tablenotes}
	\end{threeparttable}
\end{table}

\subsection{First-choice truth-telling}
\label{app:first-choice-truth}

The main analysis defines truth-telling as submitting the complete induced preference ranking. A less demanding measure asks only whether the participant ranks her induced first-choice school first. Figure~\ref{fig:first_choice_truth} and Table~\ref{tab:first_choice_truth_appeals} repeat the main truth-telling analysis using this measure.

\begin{figure}[htbp]
    \centering
    \includegraphics[width=1.00\textwidth]{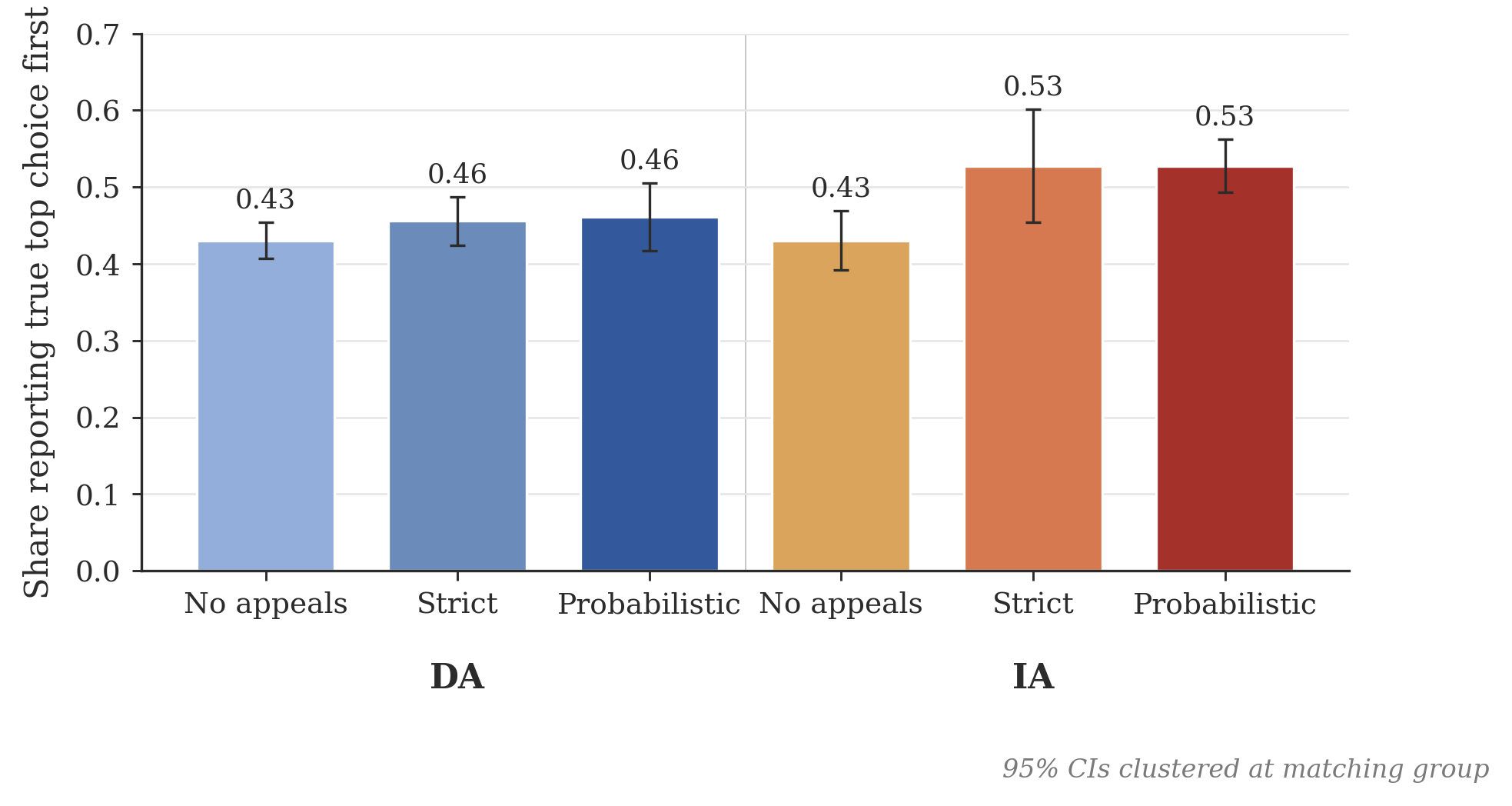}
    \caption{First-choice truth-telling by treatment.\\
    \footnotesize Bars show the share of participants who rank their induced first-choice school first. Whiskers are 95\% confidence intervals clustered at the matching-group level.}
    \label{fig:first_choice_truth}
\end{figure}

The pattern is the same as for complete truth-telling. Under IA, first-choice truth-telling rises from 0.43 without appeals to 0.53 under both strict and probabilistic appeals. Under DA, it rises only to 0.46 under either rule. In the fully controlled specification, strict and probabilistic appeals increase first-choice truth-telling under IA by approximately 10 percentage points. The corresponding DA effects are approximately 3 percentage points and are not statistically significant.

\begin{table}[htbp]
\centering
\caption{Impact of appeals on first-choice truth-telling}
\label{tab:first_choice_truth_appeals}
\begin{threeparttable}
\begin{tabular*}{\textwidth}{@{\extracolsep{\fill}} lcccc @{}}
\toprule
& \((1)\) & \((2)\) & \((3)\) & \((4)\) \\
\midrule
\multicolumn{5}{l}{\textbf{Panel A. IA treatments}} \\
IA strict & \(0.10^{**}\) & \(0.10^{**}\) & \(0.10^{**}\) & \(0.10^{**}\) \\
 & \((0.04)\) & \((0.04)\) & \((0.04)\) & \((0.04)\) \\
IA probabilistic & \(0.10^{***}\) & \(0.10^{***}\) & \(0.10^{***}\) & \(0.10^{***}\) \\
 & \((0.03)\) & \((0.03)\) & \((0.03)\) & \((0.03)\) \\
\addlinespace
\multicolumn{5}{l}{\textbf{Panel B. DA treatments}} \\
DA strict & \(0.03\) & \(0.03\) & \(0.03\) & \(0.03\) \\
 & \((0.02)\) & \((0.02)\) & \((0.02)\) & \((0.02)\) \\
DA probabilistic & \(0.03\) & \(0.03\) & \(0.03\) & \(0.03\) \\
 & \((0.02)\) & \((0.02)\) & \((0.02)\) & \((0.03)\) \\
\addlinespace
Round effects & No & Yes & Yes & Yes \\
Participant-type effects & No & No & Yes & Yes \\
Demographics + risk att. & No & No & No & Yes \\
\midrule
Observations & 10{,}195 & 10{,}195 & 10{,}195 & 10{,}185 \\
Matching groups & \(59\) & \(59\) & \(59\) & \(59\) \\
\bottomrule
\end{tabular*}
\begin{tablenotes}[flushleft]
\footnotesize
\item Notes. \(^{***}p<0.01\), \(^{**}p<0.05\), \(^{*}p<0.100\). Entries are average marginal effects from logit regressions. Standard errors are clustered at the matching-group level. First-choice truth-telling equals one when the participant ranks her induced first-choice school first. The baseline in each panel is the corresponding mechanism without appeals.
\end{tablenotes}
\end{threeparttable}
\end{table}

\paragraph{Evolution over rounds.}
Figure~\ref{fig:first_choice_truth_rounds} reports role-balanced first-choice truth-telling in each round. From round 2 onward, both IA appeal treatments remain above IA without appeals in every round. The DA series overlap and display no persistent ordering. The treatment pattern therefore does not arise from a short-lived response in the opening rounds.

\begin{figure}[htbp]
    \centering
    \includegraphics[width=1.00\textwidth]{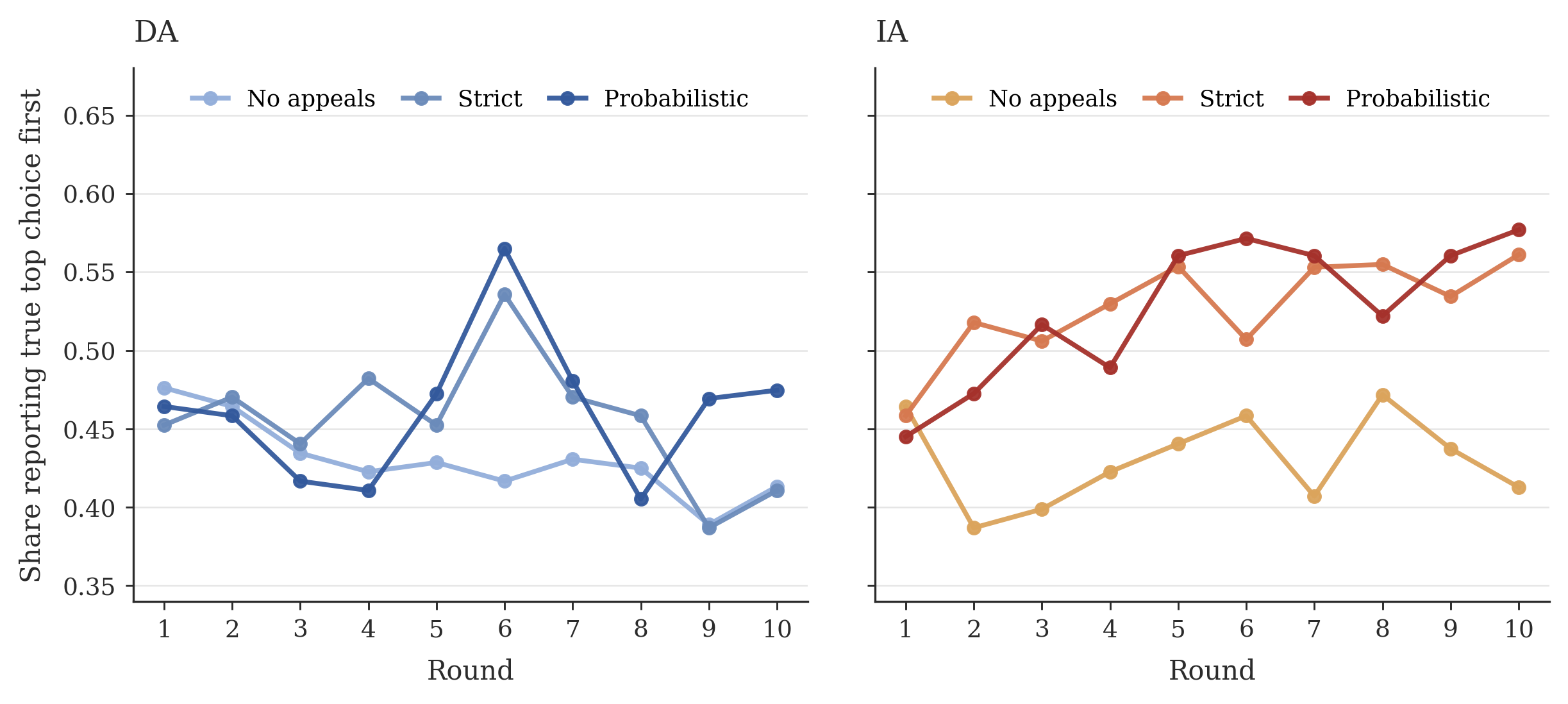}
    \caption{First-choice truth-telling over rounds.\\
    \footnotesize Each observation is the mean first-choice truth-telling rate in a treatment-round, giving equal weight to each of the seven participant types.}
    \label{fig:first_choice_truth_rounds}
\end{figure}

\subsection{Efficiency in experimental points}
\label{app:points}

The main text measures efficiency using assigned rank. Because experimental points are an affine transformation of true-preference rank, the points measure yields the same descriptive ordering. Figure~\ref{fig:points} reports mean points by treatment.

\begin{figure}[htbp]
    \centering
    \includegraphics[width=1.00\textwidth]{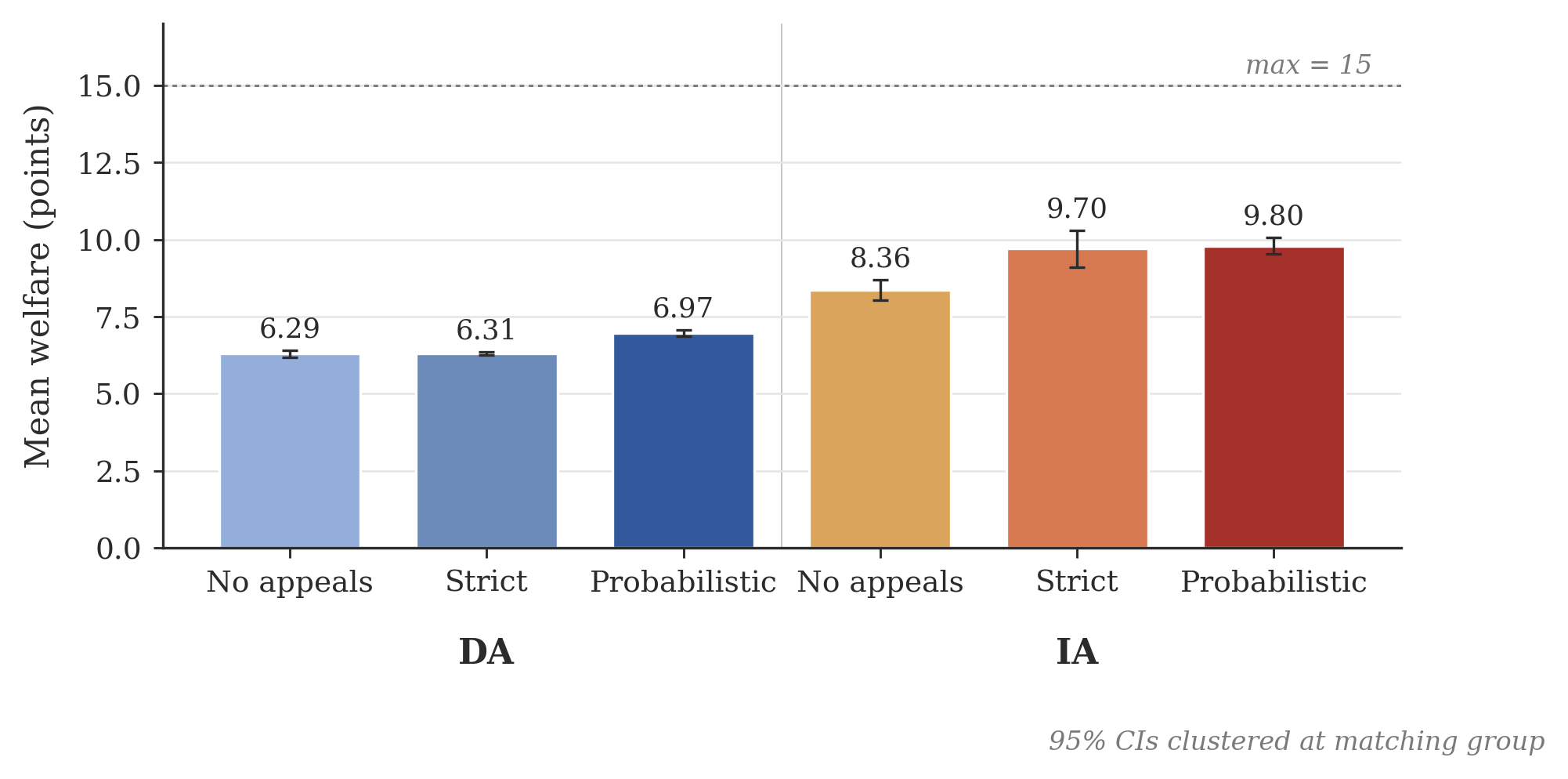}
    \caption{Experimental points by treatment.\\
    \footnotesize Bars report mean points from the final assignment. Higher values indicate better assignments. Whiskers are 95\% confidence intervals clustered at the matching-group level.}
    \label{fig:points}
\end{figure}

Under IA, mean points rise from 8.36 without appeals to 9.70 under strict appeals and 9.80 under probabilistic appeals. Strict appeals leave DA essentially unchanged, at 6.29 without appeals and 6.31 with strict appeals, while probabilistic appeals raise mean points to 6.97.

\paragraph{Unassigned participants.}
Figure~\ref{fig:unassigned} shows that part of IA's efficiency gain occurs on the extensive margin. The unassigned share falls from 0.14 without appeals to 0.07 under strict appeals and 0.06 under probabilistic appeals. Under DA, strict appeals leave the unassigned share at 0.14, while probabilistic appeals reduce it to 0.12.

\begin{figure}[htbp]
    \centering
    \includegraphics[width=0.90\textwidth]{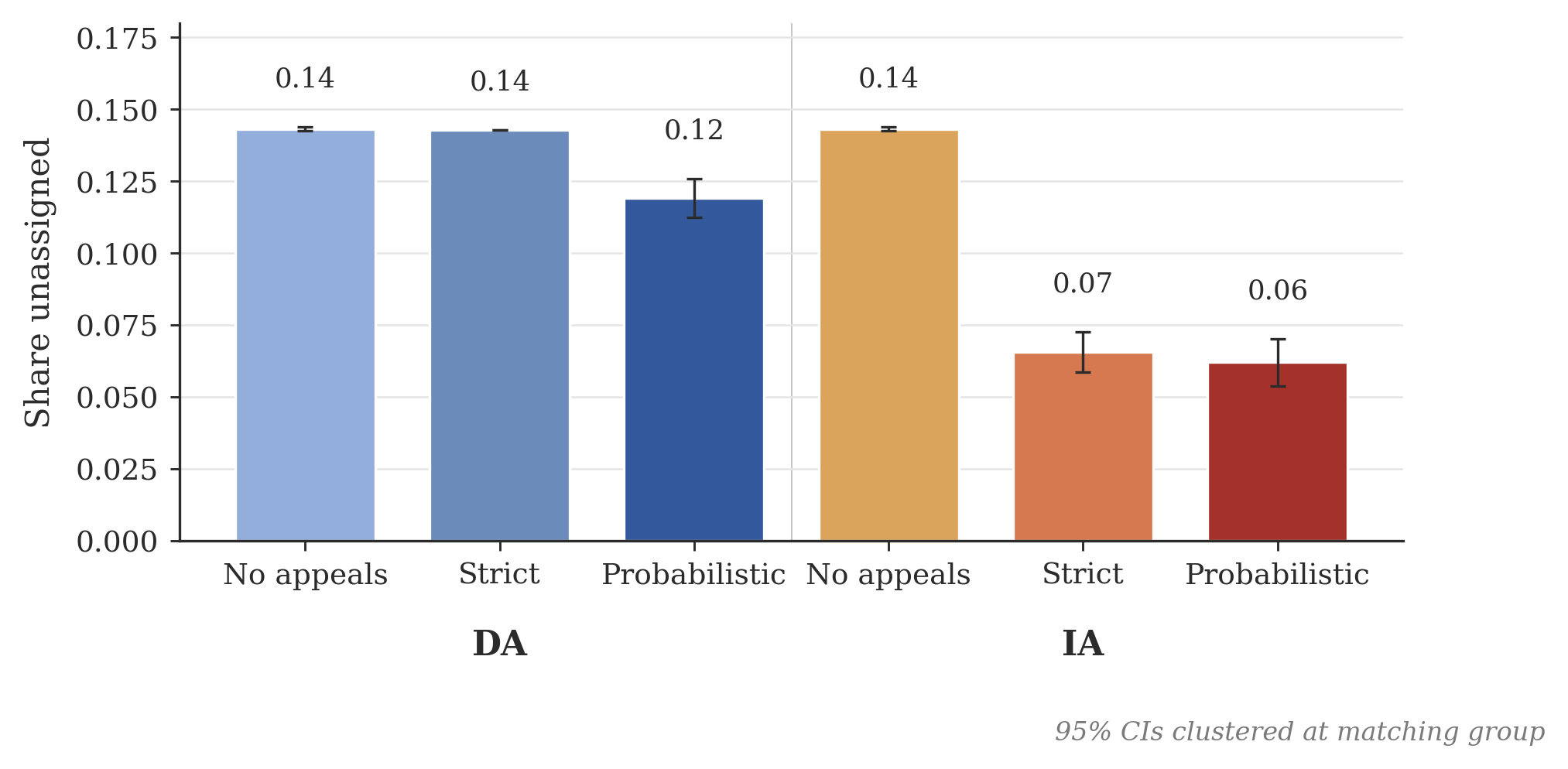}
    \caption{Share of participants unassigned by treatment.\\
    \footnotesize Whiskers are 95\% confidence intervals clustered at the matching-group level.}
    \label{fig:unassigned}
\end{figure}

Table~\ref{tab:points_appeals} repeats the regression specifications from the rank analysis using experimental points. All specifications use the 10,195-observation paper sample; the specification with personal controls uses the 10,185 observations for which those controls are complete. The estimates confirm the descriptive pattern. In the fully controlled specification, strict appeals raise IA welfare by 1.36 points and leave DA unchanged. Probabilistic appeals raise welfare by 1.46 points under IA and 0.67 points under DA. Strict and probabilistic appeals therefore widen IA's points advantage by 1.36 and 0.79 points, respectively.

\begin{table}[htbp]
\centering
\caption{Impact of appeals on experimental points}
\label{tab:points_appeals}
\begin{threeparttable}
\begin{tabular*}{\textwidth}{@{\extracolsep{\fill}} lcccc @{}}
\toprule
& \((1)\) & \((2)\) & \((3)\) & \((4)\) \\
\midrule
\multicolumn{5}{l}{\textbf{Panel A. IA treatments}} \\
IA strict & \(1.35^{***}\) & \(1.35^{***}\) & \(1.34^{***}\) & \(1.36^{***}\) \\
 & \((0.33)\) & \((0.33)\) & \((0.33)\) & \((0.32)\) \\
IA probabilistic & \(1.44^{***}\) & \(1.44^{***}\) & \(1.44^{***}\) & \(1.46^{***}\) \\
 & \((0.21)\) & \((0.21)\) & \((0.21)\) & \((0.21)\) \\
\addlinespace
\multicolumn{5}{l}{\textbf{Panel B. DA treatments}} \\
DA strict & \(0.01\) & \(0.01\) & \(0.01\) & \(0.00\) \\
 & \((0.06)\) & \((0.06)\) & \((0.06)\) & \((0.07)\) \\
DA probabilistic & \(0.68^{***}\) & \(0.68^{***}\) & \(0.68^{***}\) & \(0.67^{***}\) \\
 & \((0.07)\) & \((0.07)\) & \((0.07)\) & \((0.08)\) \\
\addlinespace
\multicolumn{5}{l}{\textbf{Panel C. Increase in IA's points advantage relative to no appeals}} \\
Strict appeals & \(1.34^{***}\) & \(1.34^{***}\) & \(1.33^{***}\) & \(1.36^{***}\) \\
 & \((0.33)\) & \((0.33)\) & \((0.33)\) & \((0.33)\) \\
Probabilistic appeals & \(0.76^{***}\) & \(0.76^{***}\) & \(0.76^{***}\) & \(0.79^{***}\) \\
 & \((0.22)\) & \((0.22)\) & \((0.22)\) & \((0.22)\) \\
\addlinespace
Round effects & No & Yes & Yes & Yes \\
Participant-type effects & No & No & Yes & Yes \\
Demographics + risk att. & No & No & No & Yes \\
\midrule
Observations & 10{,}195 & 10{,}195 & 10{,}195 & 10{,}185 \\
Matching groups & \(59\) & \(59\) & \(59\) & \(59\) \\
\bottomrule
\end{tabular*}
\begin{tablenotes}[flushleft]
\footnotesize
\item Notes. \(^{***}p<0.01\), \(^{**}p<0.05\), \(^{*}p<0.100\). Entries are adjusted differences in experimental points from linear regressions. Standard errors are clustered at the matching-group level. The baselines in Panels A and B are the corresponding mechanisms without appeals. Panel C reports \((\text{IA appeal}-\text{DA appeal})-(\text{IA no appeals}-\text{DA no appeals})\), so positive values indicate that appeals widen IA's points advantage. Demographics + risk att. are age, gender, and risk aversion.
\end{tablenotes}
\end{threeparttable}
\end{table}

\subsection{Strict blocking pairs}
\label{app:blocking-pairs}

The main text reports the share of stable market-rounds and the corresponding regression estimates. Figure~\ref{fig:blocking_pairs} provides the continuous counterpart by reporting the mean number of strict blocking pairs. No additional regression analysis is needed because Table~\ref{tab:stability_effects} already reports treatment effects on this outcome.

\begin{figure}[htbp]
    \centering
    \includegraphics[width=1.00\textwidth]{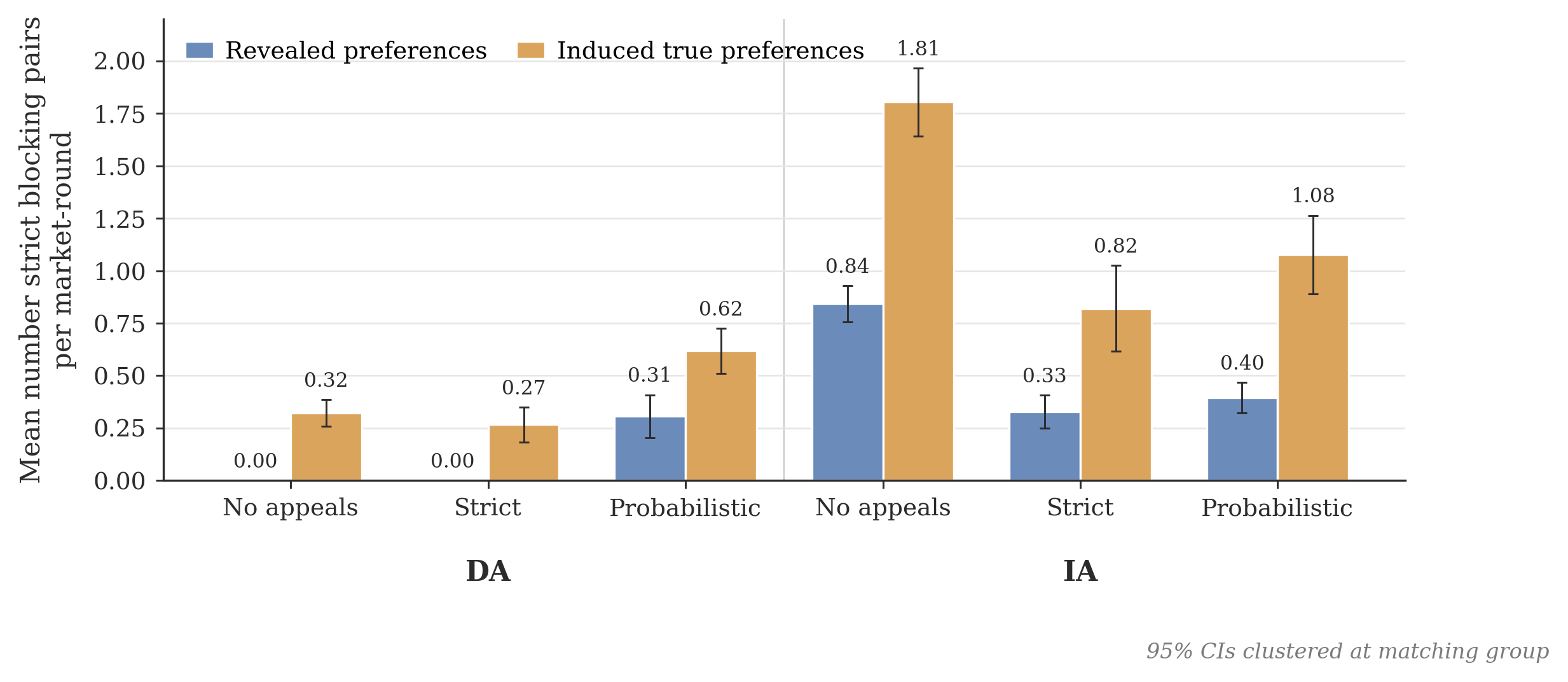}
    \caption{Strict blocking pairs by treatment.\\
    \footnotesize Bars report the mean number of strict blocking pairs per market-round, evaluated against submitted and induced true preferences. Equal-priority tie-break claims are excluded. Treatment means give equal weight to each independent matching group. Whiskers are 95\% confidence intervals across matching groups.}
    \label{fig:blocking_pairs}
\end{figure}

Strict appeals reduce IA's mean number of strict blocking pairs by more than half under both preference benchmarks, while leaving DA essentially unchanged. Probabilistic appeals also reduce blocking pairs under IA, but create priority violations after DA. This is the same qualitative pattern documented by the stable-share measure in the main text.

\newpage

\section{Full Appeal Take-Up.\hfill}
\label{app:full-takeup}
The theoretical analysis assumes that appeals are costless and therefore that
each student follows the weakly dominant appeal strategy
\[
a_i^{\mathrm{FT}}
=
\{s\in A_i(\hat\succ,M):s\succ_i M_i(\hat\succ)\}.
\]
We assess this assumption by replacing observed appeal choices with
\(a_i^{\mathrm{FT}}\), while holding submitted rank-order lists and
first-stage assignments fixed. Under the strict rule, every strict-priority
claim in this set succeeds. Under the probabilistic rule, strict-priority
claims succeed with certainty and every remaining claim succeeds independently
with probability \(0.1\). If several appeals succeed, the student receives her
most-preferred successful school according to her induced true preferences, as
in the model. We integrate exactly over all probabilistic appeal outcomes.
This exercise is a conditional theoretical benchmark rather than an
additional experimental treatment: in particular, it does not allow initial
reports to respond to automatic appeal filing.

\begin{table}[htbp]
	\centering
	\caption{Observed outcomes and the theoretical full-take-up benchmark}
	\label{tab:full-takeup}
	\begin{threeparttable}
		\begin{tabular*}{\textwidth}{@{\extracolsep{\fill}}lcccc@{}}
			\toprule
			& \multicolumn{2}{c}{DA} & \multicolumn{2}{c}{IA} \\
			\cmidrule(lr){2-3}\cmidrule(lr){4-5}
			& Observed & Full take-up & Observed & Full take-up \\
			\midrule
			\multicolumn{5}{l}{\textbf{Panel A. Strict appeals}} \\
			Average assigned rank
			& 2.739 & 2.739 & 2.059 & 2.010 \\
			Strict blocking pairs
			& 0.000 & 0.000 & 0.328 & 0.043 \\
			Procedurally stable share
			& 1.000 & 1.000 & 0.719 & 0.935 \\
			Strictly dominates truthful DA
			& 0.000 & 0.000 & 0.477 & 0.600 \\
			\addlinespace
			\multicolumn{5}{l}{\textbf{Panel B. Probabilistic appeals}} \\
			Average assigned rank
			& 2.606 & 2.551 & 2.041 & 1.940 \\
			Strict blocking pairs
			& 0.307 & 0.386 & 0.396 & 0.142 \\
			Procedurally stable share
			& 0.648 & 0.613 & 0.606 & 0.843 \\
			Strictly dominates truthful DA
			& 0.346 & 0.414 & 0.458 & 0.630 \\
			\bottomrule
		\end{tabular*}
		\begin{tablenotes}[flushleft]
			\footnotesize
			\item Notes. Lower average rank is better. Reports and first-stage assignments
			are held fixed. Full take-up replaces observed appeal choices with every
			appealable claim to a school genuinely preferred to the first-stage assignment
			and assigns the most-preferred successful school when several appeals are
			upheld. Average rank is measured at the participant-round level. Strict
			blocking pairs and Pareto dominance are averaged across complete seven-person
			market-rounds. Procedural stability is evaluated against submitted preferences
			and averaged first within matching group and then across matching groups, as
			in Figure~\ref{fig:stability}. Probabilistic outcomes are integrated exactly.
			Treatments without appeals are omitted because the counterfactual leaves them
			unchanged.
		\end{tablenotes}
	\end{threeparttable}
\end{table}

Full take-up sharpens the asymmetry under the strict rule. DA remains exactly
unchanged. Under IA, average assigned rank improves from \(2.059\) to \(2.010\),
strict blocking pairs fall from \(0.328\) to \(0.043\) per market-round, and
procedural stability rises from \(0.719\) to \(0.935\). The share of
market-rounds that strictly Pareto-dominate truthful DA rises from \(0.477\) to
\(0.600\). By construction, no appealable strict blocking pair involving a
school that the student genuinely prefers remains, as predicted by
Lemma~\ref{prop:pareto}. The outcome need not be fully procedurally stable
because that empirical measure is evaluated against submitted preferences and
also requires non-wastefulness.

The probabilistic benchmark reinforces the mechanism-specific comparison.
Full take-up further improves IA's average rank and procedural stability.
Under DA, it modestly improves average rank but increases strict blocking pairs
and reduces procedural stability because additional random admissions can
disturb an initially stable assignment and leave first-stage seats vacant.

\newpage

\section{Survey of Appeal Panel Members.\hfill}
\label{app:survey}

This appendix reports a survey conducted with members of school admission
appeal panels in England.\footnote{We thank Paul Stemp for his help.} The purpose of the survey was to understand how experienced panel members evaluate
different types of appeal claims, and we conducted it to understand how to best model appeal rules in our theoretical model that were consistent not only with the regulations, but with how appeals are upheld in practice.

The survey was implemented online using Qualtrics. We contacted all local
authorities in England and asked them to circulate the survey to individuals
involved in school admission appeals. Only a limited number of local authorities
agreed to circulate the survey or replied positively to our request. The local
authorities that circulated the survey were Merton, Barnsley, Barnet, Darlington,
Hartlepool, Kirklees, and Wokingham. Participation was voluntary
and anonymous, and the survey was approved through the ethics process at City,
University of London.

The survey asked respondents to evaluate a set of hypothetical appeal cases. In
each case, respondents were presented with a short description of a child who had
been refused admission to a school and whose parents had appealed the decision.
The cases varied the reason for the appeal. One case involved a clear priority or
admissions-rule violation. A second case involved a medical reason, where the
appealed school was close to a hospital and the child had a serious health
condition. A third case involved a comparatively weak reason, namely that the
school offered an after-school Italian course. A fourth case involved equal
priority and tie-breaking. A fifth case asked respondents how many additional
students they would be comfortable admitting when many rejected students appeal
to a school already at its published admissions number.

After each likelihood scenario, respondents were asked whether the appeal was
likely to be successful. The response categories were \emph{Very unlikely},
\emph{Unlikely}, \emph{Hard to say}, \emph{Likely}, and \emph{Very likely}. These questions appeared in random order. 
Respondents were also invited to explain their answer in an open text box. The
survey also included two broader questions. First, respondents were asked how
often appeal panel decisions are unanimous. Second, they were asked an
open-ended question about the possible reasons why an appeal may be successful.
These questions were included to understand whether panel members view appeal
decisions as deterministic or as involving substantial discretion and case-by-case
judgement.

Table~\ref{tab:panel-survey-likelihoods} reports the distribution of responses
across the four main hypothetical cases. Missing responses are reported
separately.

\begin{table}[htbp]
	\centering
	\caption{Perceived likelihood that an appeal succeeds}
	\label{tab:panel-survey-likelihoods}
	\resizebox{\textwidth}{!}{\begin{tabular}{lcccc}
		\toprule
		& Priority violation & Medical reason & Trivial reason & Tie/equal-priority case \\
		\midrule
		Very unlikely & 2 & 1 & 28 & 9 \\
		Unlikely      & 3 & 7 & 32 & 16 \\
		Hard to say   & 37 & 46 & 20 & 42 \\
		Likely        & 16 & 20 & 2 & 11 \\
		Very likely   & 17 & 2 & 1 & 4 \\
		No answer     & 19 & 18 & 11 & 12 \\
		\midrule
		Total         & 94 & 94 & 94 & 94 \\
		\bottomrule
	\end{tabular}}
\end{table}

Several patterns are worth noting. First, respondents distinguished sharply
between different reasons for appeal. The weak or idiosyncratic reason was
generally regarded as unlikely to succeed: 60 out of 94 respondents chose either
\emph{Very unlikely} or \emph{Unlikely}. By contrast, the priority-violation case
attracted substantially more favourable responses. Only 5 respondents chose
\emph{Very unlikely} or \emph{Unlikely}, while 33 chose \emph{Likely} or
\emph{Very likely}. Equivalently, among respondents who gave a directional answer
rather than selecting \emph{Hard to say}, 33 out of 38 classified the
priority-violation appeal as likely or very likely to succeed. This supports our
focus on the strict appeal rule $r_1$ in the main text: experienced panel members
broadly agree that appeals are especially compelling when the initial rejection
reflects a priority or admissions-rule violation.

Second, the medical-reason case generated considerable uncertainty. Although 22
respondents classified the appeal as \emph{Likely} or \emph{Very likely}, 46
selected \emph{Hard to say}. This is consistent with the idea that some appeals are
not resolved by a simple priority comparison. Instead, panel members may need to
weigh the strength of the family's circumstances against the prejudice to the
school from admitting an additional pupil. This motivates the probabilistic appeal
rule in the model: some claims may be institutionally admissible but difficult to
predict ex ante.

Third, the tie or equal-priority case also generated substantial uncertainty. A
large fraction of respondents selected \emph{Hard to say}, while the remaining
responses were spread across both positive and negative categories. This suggests
that cases involving equal priority, tie-breaking, or borderline admissibility may
be treated differently across panels or local contexts. This evidence motivates
our distinction between strict appeals, weak appeals, and probabilistic appeals.

The survey also asked respondents how often panel decisions are unanimous. The
answers are reported in Table~\ref{tab:panel-survey-unanimity}. Most respondents
reported that unanimous decisions occur often, but not always. This is further
evidence that appeal decisions involve judgement and that similar cases need not
be assessed identically by all panel members.

\begin{table}[htbp]
	\centering
	\caption{Frequency of unanimous panel decisions}
	\label{tab:panel-survey-unanimity}
\begin{tabular}{lc}
		\toprule
		Response & Number of respondents \\
		\midrule
		Sometimes & 17 \\
		Often     & 60 \\
		Always    & 4 \\
		No answer & 13 \\
		\midrule
		Total     & 94 \\
		\bottomrule
	\end{tabular}
\end{table}

The capacity question provides similar evidence of discretion. Respondents were
asked to consider a secondary school with a published admissions number of 100
that had already accepted 100 students, while another 100 rejected students were
appealing. They were then asked for the maximum number of additional students
they would be comfortable admitting. Responses varied substantially: many
respondents answered zero, while others were willing to admit a positive number
of additional students. This reinforces the view that appeal panels may differ in
how they weigh the prejudice to the school against the claims of appealing
families.

The open-ended responses provide additional context. Respondents listed several
reasons why appeals may be successful, including mistakes in applying
oversubscription criteria, admissions arrangements that were not lawful or not
correctly applied, unreasonable decisions, medical or social needs, special
educational needs, sibling or twin considerations, recent changes in family
circumstances, the lack of a reasonable alternative school, and cases where the
prejudice to the child or family was judged to outweigh the prejudice to the
school. These answers are consistent with the institutional description in the main
text: some appeals operate as corrections of procedural or priority errors, while
others involve a balancing exercise.

The survey has limitations. Although we contacted all local authorities
in England, only a small subset agreed to circulate the survey or replied
positively. The resulting coverage is therefore limited, and the respondents
should not be interpreted as representative of all admission appeal panel members
in England. Respondents also self-selected into participation, and we do not
observe the full population of panel members who received the survey invitation.
Finally, the hypothetical cases are necessarily stylized and cannot capture all
legal and factual details that would be available in an actual appeal hearing. For
these reasons, we use the survey only as qualitative and descriptive support for
the modeling assumptions, not as a basis for estimating appeal success
probabilities.

Despite these limitations, the survey is useful for the purposes of this paper. It
confirms that experienced panel members distinguish sharply between different
grounds for appeal; that priority or procedural violations are viewed as especially
important; and that medical, welfare, tie-breaking, and balancing cases generate
substantial uncertainty. This supports the modelling choice to study more than one
appeal rule. The strict rule captures appeals that correct priority or procedural
violations. The weak rule captures cases in which equal-priority or tie-breaking
claims may be recognised. The probabilistic rule captures the discretionary and
heterogeneous component of appeal adjudication.

\subsection{Survey Verbatim Text}\label{app:panel-survey-instrument}

\paragraph{Consent screen.}
You are invited to take part in a research survey being conducted by Dr Claudia
Cerrone, Dr Yoan Hermstrüwer and Dr Josuè Ortega.

The purpose of this survey is to understand school admissions appeals in England.
The survey has received ethics approval from City, University of London. Your
participation in the survey is voluntary and the information you provide is
anonymous.

The data collected from this survey may be used for scientific publication. By
continuing to the questions, you consent to participate in this survey.

\emph{I continue to the questions}

\paragraph{Question 1.}
Livia and Matty have applied for the same primary school (non-infant class) and
are both being evaluated on the basis of the admission criterion of distance to the
school. Livia lives 975 meters from the school and is rejected, while Matty lives
950 meters from the school and is admitted.

Livia's parents have decided to appeal the school's decision to reject Livia. They
argue that Livia should be admitted to that school as it offers an after school
Italian course and they are very keen on Livia learning Italian. Note that this
appeal is for a place in Year 5 and therefore not covered by infant class size
legislation.

Do you think that this appeal is likely to be successful?

\emph{Very unlikely; Unlikely; Hard to say; Likely; Very likely}

If you want to explain your decision, please use the box below.

\paragraph{Question 2.}
Charlie and Dora have applied for the same primary school (non-infant class) and
are both being evaluated on the basis of the admission criterion of distance to the
school. Charlie lives 975 meters from the school and is rejected, while Dora lives
950 meters from the school and is admitted.

Charlie's parents have decided to appeal the school's decision to reject Charlie.
They argue that Charlie should be admitted to that school as it is close to a
hospital and Charlie has a life-threatening heart defect. Note that this appeal is
for a place in Year 5 and therefore not covered by infant class size legislation.

Do you think that this appeal is likely to be successful?

\emph{Very unlikely; Unlikely; Hard to say; Likely; Very likely}

If you want to explain your decision, please use the box below.

\paragraph{Question 3.}
Appeals are usually successful if school admission arrangements are unlawful or
have not been applied correctly. Apart from this, what are possible reasons why
an appeal might be successful? Give as many reasons as you want, and please
separate them by a comma.

\paragraph{Question 4.}
Based on your experience, how often are panels' decisions unanimous?

\emph{Never; Rarely; Sometimes; Often; Always}

\paragraph{Question 5.}
A selective secondary school admits students based on their grades. The school
admits all students with an average grade above 80 and reserves two spots for
students with an average grade of exactly 80. Three applicants had an average
grade of 80, so a tie-breaker was used to allocate the two available places.

The parents of the student who was not selected are appealing the school's
decision to reject their child, arguing that it is unfair for their child to be
rejected purely due to random chance.

Is their appeal likely to succeed?

\emph{Very unlikely; Unlikely; Hard to say; Likely; Very likely}

If you want to explain your decision, please use the box below.

\paragraph{Question 6.}
Ana and Bob have applied for the same primary school and are both being
evaluated on the basis of the admission criterion of distance to the school. Ana
lives 925 meters from the school and is rejected, while Bob lives 1,050 meters
from the school and is admitted.

Ana's parents have decided to appeal the school's decision to reject Ana. They
argue that Ana should be admitted to the school as the admissions criteria have
not been applied correctly. Note that this appeal is for a Reception place under
infant class size legislation.

Do you think that this appeal is likely to be successful?

\emph{Very unlikely; Unlikely; Hard to say; Likely; Very likely}

If you want to explain your decision, please use the box below.

\paragraph{Question 7.}
A secondary school has a Published Admissions Number of 100, which was
calculated in compliance with admissions law. While 100 students have been
accepted, another 100 were rejected. All the parents of the rejected children are
appealing, arguing that their children would benefit from the school's unique
mathematics program. The school argues that it has no extra resources to
accommodate more students, and that each additional student admitted would lead
to larger classes, and consequently less resources per pupil.

You must decide how many appeals will be successful.

What is the maximum number of additional students you would be comfortable
admitting to the school?

\begin{center}
	\emph{End of survey.}
\end{center}

\end{document}